\newtheorem{theorem}{Theorem}[section]
\newtheorem{definition}[theorem]{Definition}
\newtheorem{example}[theorem]{Example}
\newtheorem{lemma}[theorem]{Lemma}
\newtheorem{corollary}[theorem]{Corollary}
\begin{document}

\title{Some remarks on semantics and expressiveness of the Sentential Calculus with Identity}

\author{Steffen Lewitzka\thanks{Universidade Federal da Bahia -- UFBA,
Departamento de Ci\^encia da Computa\c c\~ao,
Instituto de Computação,
40170-110 Salvador BA,
Brazil,
steffenlewitzka@web.de}}

\date{}

\maketitle

\begin{abstract}
Suszko's Sentential Calculus with Identity $\mathit{SCI}$ results from classical propositional calculus $\mathit{CPC}$ by adding a new connective $\equiv$ and axioms for identity $\varphi\equiv\psi$ (which we interpret here as `propositional identity'). We reformulate the original semantics of $\mathit{SCI}$ in terms of Boolean prealgebras establishing a connection to `hyperintensional semantics'. Furthermore, we define a general framework of dualities between certain $\mathit{SCI}$-theories and Lewis-style modal systems in the vicinity of $\mathit{S3}$. Suszko's original approach to two $\mathit{SCI}$-theories corresponding to $\mathit{S4}$ and $\mathit{S5}$ can be formulated as a special case. All these dualities rely particularly on the fact that Lewis' `strict equivalence' is axiomatized by the $\mathit{SCI}$-principles of `propositional identity'.
\end{abstract}

Keywords: non-Fregean logic, Boolean prealgebra, hyperintensional semantics, modal logic

\section{Introduction}

The set $Fm_\equiv$ of formulas of the Sentential Calculus with Identity $\mathit{SCI}$ is inductively defined in the usual way over an infinite set $V$ of propositional variables $x_0, x_1, ...$, logical connectives $\bot$, $\top$, $\neg$, $\vee$, $\wedge$, $\rightarrow$ and an identity connective $\equiv$ for building formulas of the form $(\varphi\equiv\psi)$.
As a deductive system, $\mathit{SCI}$ extends classical propositional logic $\mathit{CPC}$ by the identity axioms (id1)--(id7) below. That is, $\mathit{SCI}$ can be axiomatized by all formulas having the form of a classical tautology together with the following identity axioms:\\

\noindent (id1) $\varphi\equiv\varphi$\\
(id2) $(\varphi\equiv\psi)\rightarrow (\varphi\rightarrow\psi)$  \\
(id3) $(\varphi\equiv\psi)\rightarrow (\neg\varphi\equiv\neg\psi)$\\
(id4)--(id7) $((\varphi_1\equiv\psi_1)\wedge (\varphi_2\equiv\psi_2))\rightarrow ((\varphi_1*\varphi_2)\equiv (\psi_1 *\psi_2))$,\\ 
where $*\in\{\vee, \wedge, \rightarrow, \equiv\}$, respectively.\\

With Modus Ponens MP as inference rule, the notion of derivation is defined in the usual way. We write $\varPhi\vdash_{\mathit{SCI}}\varphi$ if there is a derivation of $\varphi\in Fm_\equiv$ from the set $\varPhi\subseteq Fm_\equiv$. The introduction of $\mathit{SCI}$ is a consequence of R. Suszko's work on non-Fregean logics which, in turn, was motivated by his attempts to formalize ontological aspects of Wittgenstein's \textit{Tractatus logico-philosophicus} (see, e.g. \cite{sus1}). Recall that, according to G. Frege, the denotation (referent, \textit{Bedeutung}) of a formula is nothing but a truth-value. This principle, called by Suszko the Fregean Axiom, can be formalized as $(\varphi\leftrightarrow\psi)\rightarrow (\varphi\equiv\psi)$ if we assume the classical interpretation of connectives and read $\varphi\equiv\psi$ as `$\varphi$ and $\psi$ have the same denotation'. The essential feature of a non-Fregean logic is the failure of Fregean Axiom. $\mathit{SCI}$ can be seen as a basic non-Fregean logic extending $\mathit{CPC}$. The identity axioms express our basic intuition on propositional identity: it should be a congruence relation on formulas that refines equivalence $\leftrightarrow$.\footnote{Indeed, $(\varphi\equiv\psi)\rightarrow (\varphi\leftrightarrow\psi)$ as well as $(\varphi\equiv\psi)\rightarrow (\psi\equiv\varphi)$ and $((\varphi\equiv\psi)\wedge (\psi\equiv\chi))\rightarrow (\varphi\equiv\chi)$ are derivable. The `compatibility' with connectives of the language is expressed by axioms (id3) and (id4)--(id7).} As already pointed out in \cite{blosus}, replacing (id3)--(id7) by the single scheme
\begin{equation}\label{10}
(\varphi\equiv\psi)\rightarrow (\chi[x:=\varphi]\equiv\chi[x:=\psi]),\footnote{$\chi[x:=\varphi]$ is the result of substituting $\varphi$ for every occurrence of variable $x$ in $\chi$. The concept can be formally defined in the obvious way by induction on the construction of $\chi$.} 
\end{equation} 
which we call the Substitution Principle SP, results in a deductively equivalent system.\footnote{This fact can be shown by induction on $\chi$.} SP essentially says that formulas with the same denotation can be replaced by each other in any context. This principle can be seen as a particular instance of a general ontological law known in the literature as the \textit{indiscernibility of identicals} or \textit{Leibniz's law}. In a formal context, SP represents a necessary condition for the existence of a natural propositional semantics. In fact, if we interpret logical connectives and further operators of the object language semantically as functions on propositions, then SP says that all these functions are well-defined: identical arguments yield identical function values. For instance, SP holds in classical and intuitionistic propositional logic with propositional identity $\varphi\equiv\psi$ given as equivalence $\varphi\leftrightarrow\psi$. If we assume the propositional modal language and define propositional identity as strict equivalence: $(\varphi\equiv\psi) := \square(\varphi\leftrightarrow\psi)$, then SP is a derivable principle in Lewis modal systems $\mathit{S3}$--$\mathit{S5}$ but not in the weaker systems $\mathit{S1}$ and $\mathit{S2}$, cf. \cite{lewjlc1, lewsl}. However, it is enough to add SP to system $\mathit{S1}$ in order to get a logic with a natural algebraic semantics. This logic was introduced in \cite{lewjlc1} under the name $\mathit{S1}$+$\mathit{SP}$. In the present paper, we shall refer to it by the simpler label $\mathit{S1SP}$. We then get the hierarchy $\mathit{S1SP} \subsetneq \mathit{S3}\subsetneq \mathit{S4}\subsetneq \mathit{S5}$ of Lewis (-style) modal logics for which we can use the same framework of algebraic semantics based on Boolean algebras (we shall explore this kind of semantics in section 4).

The interpretability of $\mathit{SCI}$ in Lewis system $\mathit{S3}$ indicates a strong connection between $\mathit{SCI}$-theories and Lewis-style modal systems. Essential aspects of that connection were already revealed by Suszko, Bloom \cite{sus, blosus} showing that specific extensions of $\mathit{SCI}$ correspond, in some sense, to modal logics $\mathit{S4}$ and $\mathit{S5}$, respectively. Instead of interpreting $\mathit{SCI}$-theories in Lewis modal systems, Suszko's approach restores modal logic within $\mathit{SCI}$-extensions via the definition $\square\varphi := (\varphi\equiv\top)$. 

In the present paper, we study dualities between $\mathit{SCI}$-theories and Lewis-style modal systems (not restricted to $\mathit{S4}$ and $\mathit{S5}$) in a systematical way and establish precise criteria for the existence of such dualities. We consider here both object languages separately -- the language of $\mathit{SCI}$ versus the language of propositional modal logic -- and define appropriate translations between them. In contrast to the original model-theoretic approach (cf. \cite{blosussl, blosus}), we introduce $\mathit{SCI}$-models explicitly as Boolean prealgebras (or Boolean prelattices). In this way, we find a bridge to an approach known in the literature as `hyperintensional semantics' (see, e.g., \cite{foxlap, pol}) and present $\mathit{SCI}$ as a basic classical logic for (hyper-) intensional modeling and reasoning. 

\section{Intensionality as a measure for the discernibility of propositions}

Originally introduced by M. J. Cresswell \cite{cres}, the notion of `hyperintensionality' has been interpreted in different ways in the literature and there seems to be no formal standard definition. Usually, an operator (of a given logic) is regarded as extensional if its application to formulas with the same truth-value results again in formulas having the same truth-value, otherwise the operator may be seen as intensional.\footnote{We consider here only classical logics.} In the context of possible worlds semantics, an operator is often regarded as hyperintensional if its application to formulas having the same truth-values at all possible (accessible) worlds does not necessarily result in formulas with the same truth-value at the actual world. For instance, the modal operator of normal modal logics is intensional (but not hyperintensional). Therefore, modal logics are often regarded as intensional logics. Possible worlds semantics, however, is not an appropriate framework for dealing with hyperintensional operators. There are proposals in the literature conceiving hyperintensional semantics in terms of Boolean prelattices (see, e.g., \cite{foxlap, pol, lewsl}), and we will follow a similar approach. For this purpose, let us regard a proposition as the denotation of a formula at a given model. A proposition can be, e.g., a truth-value (in classical propositional logic), a set of possible worlds (in normal modal logics), an element of some algebraic structure, etc. Under these assumptions, we propose to explain intensionality as a measure of discernibility of propositions. The more propositions can be distinguished in models of the underlying classical logic the higher the degree of intensionality. In $\mathit{CPC}$, only two propositions, the True and the False, can be distinguished. Current modal logics provide much more (infinitely many) propositions: even if two formulas $\varphi$ and $\psi$ have the same truth-value at the actual world, they may have different truth-values at some accessible world and thus denote different propositions: the Fregean Axiom does not hold -- the denotation of a formula is more than a classical truth-value. Nevertheless, many propositions remain indiscernible: logically equivalent formulas such as $\neg\neg\varphi$ and $\varphi$ will always denote the same proposition in classical modal logics. The aim of hyperintensional semantics is to overcome such limitations of the possible worlds framework (motivations come, e.g., from the study of natural language semantics) and to provide a more fine-grained approach that allows to discern even more propositions. This goal can be perfectly achieved working with $\mathit{SCI}$ and appropriate axiomatic extensions. By a proposition we will mean more specifically the element of a given $\mathit{SCI}$-model. The degree of intensionality of a model is the largest number of propositions that can be distinguished. We shall see that all expressible intensions can be discerned in logic $\mathit{SCI}$. In fact, there is an $\mathit{SCI}$-model where any two different formulas denote different propositions, see Theorem \ref{300} below. We call such a model intensional since the denotation of a formula can be identified with its intension, i.e. its syntactical form. In this sense, $\mathit{SCI}$ is a logic of highest degree of intensionality and, of course, is able to model hyperintensional operators. Imposing appropriate axioms, we get specific $\mathit{SCI}$-theories where specific propositions become indiscernible. In particular, $\mathit{CPC}$ as well as some Lewis-style modal logics can be represented as specific $\mathit{SCI}$-theories. While models of $\mathit{CPC}$ are extensional, models of modal logics lie somewhere between the extremes of extensional and intensional model. In the following, we will present $\mathit{SCI}$ as an (hiper-) intensional logic.\footnote{In contrast to our view, Bloom and Suszko explicitly deny the intensional character of $\mathit{SCI}$. ``Some people, upon discovering that the identity connective was not truth-functional, have thought that $\mathit{SCI}$ is an intensional logic. We emphatically deny this. The essence of intensionality is that the rule "equals may be replaced by equals" fails. However, this rule \textit{does} hold in the SCI ... " (cf. p. 1 of \cite{blosus}). Actually, that rule is formalized by SP which is valid in $\mathit{SCI}$.} 

\section{Boolean prealgebras as models of $\mathit{SCI}$}

Recall that a preorder on a set $M$ is a binary relation on $M$ satisfying reflexivity and transitivity. If a preorder is also antisymmetic, then it is a partial order. We also expect the reader to be familiar with the concepts of Boolean algebra, filters and ultrafilters (on Boolean algebras) and quotient Boolean algebras. We apply a somewhat unusual notation for Boolean algebras (with operators) which has the advantage that for any new connective or symbol of the underlying object language a corresponding operator for the algebraic semantics can easily be identified. In particular, for the connectives $\vee, \wedge, \neg, \bot, \top, \rightarrow$ of our classical logic, we denote the corresponding operations of a given Boolean (pre-) algebra $\mathcal{B}$ by $f_\vee, f_\wedge, f_\neg, f_\bot, f_\top, f_\rightarrow$ (or, more precisely, by $f_\vee^\mathcal{B}$, etc., if we wish to emphasize the given context of (pre-) algebra $\mathcal{B}$). 

\begin{definition}\label{100}
A structure $\mathcal{B}=(B, f_\vee, f_\wedge, f_\neg, f_\bot, f_\top, f_\rightarrow, \preceq)$ of type $(2,2,1,0,0,2)$ with a preorder $\preceq$ on universe $B$ is a Boolean prealgebra if the relation $\approx$ defined by $a\approx b$ $:\Leftrightarrow$ ($a\preceq b$ and $b\preceq a$) is a congruence relation on $B$ such that the quotient $\mathcal{B}/{\approx}$ is a Boolean algebra, and for all $a,b\in B$ we have: $a\preceq b\Leftrightarrow f_\wedge(a,b)\approx a$. In this case, we call $\approx$ the associated congruence, and we call quotient $\mathcal{B}/{\approx}$ the associated Boolean algebra. If the given structure $\mathcal{B}$ itself is a Boolean algebra, then we denote the underlying lattice order by $\le$.\footnote{Even if $\mathcal{B}$ is a Boolean algebra, the lattice order $\le$ may differ from the given preorder $\preceq$.}
\end{definition}

Of course, every Boolean algebra together with its lattice order (regarded as a preorder) is trivially a Boolean prealgebra. Recall that every Boolean algebra is a Heyting algebra. Those Heyting algebras which are not Boolean algebras are non-trivial though natural examples of Boolean prealgebras. In order to see this, consider any designated ultrafilter $U$ of a given Heyting algebra (which exists by Zorn's Lemma) and the preorder $a\preceq b$ $:\Leftrightarrow f_\rightarrow(a,b)\in U$, where $f_\rightarrow(a,b)$ is the relative pseudo-complement of $a$ w.r.t. $b$. Then the resulting quotient algebra modulo $\approx$ is the two-element Boolean algebra.\footnote{Of course, there may exist further congruence relations on a given Heyting algebra that result in a Boolean quotient algebra.} Considering the intuitionistic tautology $(x\rightarrow y)\leftrightarrow (x\rightarrow (x\wedge y))$, one easily checks that also the condition $a\preceq b\Leftrightarrow f_\wedge(a,b)\approx a$ holds for all elements $a,b$ of the Heyting algebra.\footnote{Recall that all intuitionistic tautologies are interpreted by the top element $f_\top$ of any Heyting algebra under any assignment, and also recall that the following condition is valid in every Heyting algebra: $f_\rightarrow(a,b)=f_\top$ iff $a\le b$.}

Note that we cannot do without that second condition in Definition \ref{100}. Even if the resulting quotient $\mathcal{B}/{\approx}$ of structure $\mathcal{B}$ is a Boolean algebra, condition $a\preceq b\Leftrightarrow f_\wedge(a,b)\approx a$ is not necessarily true. Consider, for instance, the $4$-element Boolean algebra $Pow(2)$ with the preorder $\preceq$ given by set-theoretic inclusion on $Pow(2)$ extended by the tuple $(\{1\},\{2\})$, so we have in particular $\{1\}\preceq\{2\}$. Relation $\approx$ is the identity on $Pow(2)$ and the resulting quotient algebra is, of course, again the Boolean algebra $Pow(2)$. However, the second condition of Definition \ref{100} fails since we have $\{1\}\preceq\{2\}$, but $\{1\}\subsetneq\{2\}$, i.e. $\{1\}\cap\{2\}\neq\{1\}$.

If one deals with Boolean algebras, then one usually considers only the operations of supremum (join) $f_\vee$, infimum (meet) $f_\wedge$, complement $f_\neg$, least element $f_\bot$ and greatest element $f_\top$. Further relevant operations, such as implication $f_\rightarrow(a,b) :=f_\vee(f_\neg(a),b)$, are definable. This, however, does not hold in general for Boolean prealgebras. For instance, although we have $f_\rightarrow(a,b)\approx f_\vee(f_\neg(a),b)$, the propositions (i.e. elements) $f_\rightarrow(a,b)$ and $f_\vee(f_\neg(a),b)$ may be distinct. 

\begin{lemma}\label{122}
Let $\mathcal{B}$ be a Boolean prealgebra with preorder $\preceq$, and let $\mathcal{B}/{\approx}$ be the associated Boolean algebra with lattice order $\le^{\mathcal{B}/\approx}$. Then for all $a,b\in B$: $a\preceq b\Leftrightarrow\overline{a}\le^{\mathcal{B}/\approx}\overline{b}$.
\end{lemma}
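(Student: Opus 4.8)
The plan is to chain together a short sequence of equivalences, each of which simply unpacks a definition. The pivotal observation is that in any Boolean algebra the lattice order is characterized through the meet operation: $\overline{a}\le^{\mathcal{B}/\approx}\overline{b}$ holds exactly when $f_\wedge^{\mathcal{B}/\approx}(\overline{a},\overline{b})=\overline{a}$. This is precisely the quotient-level analogue of the second condition of Definition \ref{100}, so the whole task reduces to transporting that condition across the quotient map $a\mapsto\overline{a}$.

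First I would recall that, since $\approx$ is a congruence relation on $\mathcal{B}$, the meet on the quotient is computed representative-wise, that is $f_\wedge^{\mathcal{B}/\approx}(\overline{a},\overline{b})=\overline{f_\wedge(a,b)}$ for all $a,b\in B$. Together with the basic fact that two classes coincide iff their representatives are $\approx$-related, i.e. $\overline{c}=\overline{d}\Leftrightarrow c\approx d$, this already gives $f_\wedge^{\mathcal{B}/\approx}(\overline{a},\overline{b})=\overline{a}\Leftrightarrow f_\wedge(a,b)\approx a$.

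It then remains to assemble the chain: $\overline{a}\le^{\mathcal{B}/\approx}\overline{b}$ iff $f_\wedge^{\mathcal{B}/\approx}(\overline{a},\overline{b})=\overline{a}$ (by the Boolean-algebra characterization of $\le^{\mathcal{B}/\approx}$), iff $\overline{f_\wedge(a,b)}=\overline{a}$ (by the congruence property), iff $f_\wedge(a,b)\approx a$ (by equality of classes), iff $a\preceq b$ (by the second condition of Definition \ref{100}). Since these are genuine equivalences, both implications are obtained at once.

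There is essentially no serious obstacle: the statement holds by definition once one recognizes that the second condition in Definition \ref{100} was engineered precisely to force the preorder to agree with the quotient lattice order. The only point requiring minor care is the legitimacy of computing meets representative-wise, which is guaranteed exactly because $\approx$ is assumed to be a \emph{congruence} and not merely an equivalence yielding a Boolean quotient; this is why Definition \ref{100} builds that requirement in explicitly.
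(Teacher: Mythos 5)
Your proof is correct and follows essentially the same route as the paper's: the paper's own proof is exactly the chain $a\preceq b \Leftrightarrow f^\mathcal{B}_\wedge(a,b)\approx a \Leftrightarrow f_\wedge^{\mathcal{B}/\approx}(\overline{a},\overline{b})=\overline{a}\Leftrightarrow\overline{a}\le^{\mathcal{B}/\approx}\overline{b}$, which you reproduce (in reverse order) while spelling out the justification of each step in slightly more detail. No gaps.
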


\begin{proof}
Let $\mathcal{B}$ be a Boolean prealgebra with preorder $\preceq$. Then for all $a,b\in B$, $a\preceq b \Leftrightarrow f^\mathcal{B}_\wedge(a,b)\approx a \Leftrightarrow f_\wedge^{\mathcal{B}/\approx}(\overline{a},\overline{b})=\overline{a}\Leftrightarrow\overline{a}\le^{\mathcal{B}/\approx} \overline{b}$.
\end{proof}

\begin{definition}\label{126}
Let $\mathcal{B}$ be a Boolean prealgebra with associated Boolean algebra $\mathcal{B}/{\approx}$, and let $F\subseteq B$ be closed under $\approx$, i.e. $a\in F\Leftrightarrow b\in F$ whenever $a\approx b$, for any $a,b\in B$. Then we say that $F$ is a filter of $\mathit{B}$ if the set $F^{\mathcal{B}/{\approx}}=\{\overline{a}\mid a\in F\}$ is a filter (in the usual sense) of Boolean algebra $\mathcal{B}/{\approx}$. The notions of proper filter and ultrafilter of a Boolean prealgebra are defined analogously.
\end{definition}

\begin{corollary}\label{128}
Let $\mathcal{B}$ be a Boolean prealgebra. A subset $F$ is a filter of $\mathcal{B}$ if and only if the following conditions are satisfied for all $a,b\in B$:
\begin{itemize}
\item If $a,b\in F$, then $f_\wedge(a,b)\in F$.
\item If $a\in F$ and $a\preceq b$, then $b\in F$. 
\end{itemize}
A filter $F$ is a proper filter iff $F\neq B$ iff $f_\bot\notin F$. A filter $F$ is an ultrafilter iff $F$ is maximal among all proper filters. 
\end{corollary}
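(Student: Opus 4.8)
The plan is to reduce every assertion to a statement about the associated Boolean algebra $\mathcal{B}/{\approx}$ via the quotient map $a\mapsto\overline{a}$, and then to invoke the corresponding elementary facts about filters of Boolean algebras. The two workhorses will be Lemma \ref{122}, which lets me replace the preorder $\preceq$ on $\mathcal{B}$ by the lattice order $\le^{\mathcal{B}/\approx}$ on the quotient, and the fact that $\approx$ is a congruence, which gives $\overline{f_\wedge(a,b)}=f_\wedge^{\mathcal{B}/\approx}(\overline{a},\overline{b})$ and $\overline{f_\bot}=f_\bot^{\mathcal{B}/\approx}$.

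For the characterization I would prove both implications. In the forward direction, assume $F$ is a filter in the sense of Definition \ref{126}. If $a,b\in F$, then $\overline{a},\overline{b}\in F^{\mathcal{B}/\approx}$, hence $\overline{f_\wedge(a,b)}=f_\wedge^{\mathcal{B}/\approx}(\overline{a},\overline{b})\in F^{\mathcal{B}/\approx}$ because filters of a Boolean algebra are closed under meet; since $F$ is closed under $\approx$, this yields $f_\wedge(a,b)\in F$. Likewise, if $a\in F$ and $a\preceq b$, then $\overline{a}\le^{\mathcal{B}/\approx}\overline{b}$ by Lemma \ref{122}, so $\overline{b}\in F^{\mathcal{B}/\approx}$ by upward closure, and closure under $\approx$ again gives $b\in F$.

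For the converse, the one point that is not purely mechanical is that the two stated conditions already force $F$ to be closed under $\approx$: if $a\approx b$, then $a\preceq b$ and $b\preceq a$, so the upward-closure condition applied in either direction transports membership between $a$ and $b$. Once closure under $\approx$ is in hand, translating the two conditions through the quotient map (again using the congruence property and Lemma \ref{122}) shows that $F^{\mathcal{B}/\approx}$ is closed under meet and upward closed, i.e. a filter of $\mathcal{B}/{\approx}$, so $F$ is a filter by Definition \ref{126}.

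For the last two sentences I would use that $F\mapsto F^{\mathcal{B}/\approx}$ is an inclusion-preserving bijection between the $\approx$-closed filters of $\mathcal{B}$ and the filters of $\mathcal{B}/{\approx}$, with inverse $G\mapsto\{a\in B\mid\overline{a}\in G\}$; closure under $\approx$ is precisely what makes this a bijection. The equivalences $F=B\Leftrightarrow F^{\mathcal{B}/\approx}=B/{\approx}$ and $f_\bot\in F\Leftrightarrow\overline{f_\bot}\in F^{\mathcal{B}/\approx}$ follow from closure under $\approx$, and combined with the standard fact that a filter of a Boolean algebra is proper iff it omits the bottom element iff it is not the whole algebra, they yield the stated chain of equivalences for proper filters. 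Finally, since the correspondence is inclusion-preserving in both directions, $F$ is maximal among the proper filters of $\mathcal{B}$ iff $F^{\mathcal{B}/\approx}$ is maximal among the proper filters of $\mathcal{B}/{\approx}$, i.e. an ultrafilter of $\mathcal{B}/{\approx}$, which is exactly the definition of $F$ being an ultrafilter of $\mathcal{B}$. The main obstacle, and the only genuinely non-routine observation, is recognizing that the two conditions in the statement silently encode closure under $\approx$; everything else is a transparent transfer across the quotient.
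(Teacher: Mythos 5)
The paper states this corollary without proof, treating it as an immediate consequence of Definition \ref{126} and Lemma \ref{122}, and your write-up is exactly that intended transfer argument across the quotient map $a\mapsto\overline{a}$ — including the one point that genuinely needs saying, namely that the two bullet conditions force closure under $\approx$, which is what makes the correspondence with filters of $\mathcal{B}/{\approx}$ a bijection. Your proof is correct and matches the paper's (implicit) approach.
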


\begin{corollary}\label{120}
Let $\mathcal{B}$ be a Boolean prealgebra. If $\mathcal{B}$ is itself a Boolean algebra, then its lattice order $\le$ refines the given preorder $\preceq$, i.e., for all $a,b\in B$: $a\le b$ implies $a\preceq b$.
\end{corollary}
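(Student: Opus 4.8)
The plan is to unwind both order relations in terms of the single meet operation $f_\wedge$ and to observe that equality is finer than the associated congruence $\approx$. Assume $\mathcal{B}$ is itself a Boolean algebra with lattice order $\le$. The standard characterization of the lattice order of a Boolean algebra in terms of meet gives $a\le b\Leftrightarrow f_\wedge(a,b)=a$, while the defining second condition of Definition \ref{100} gives $a\preceq b\Leftrightarrow f_\wedge(a,b)\approx a$. Crucially, both conditions refer to the very same operation $f_\wedge$, so the two orders differ only in whether we compare $f_\wedge(a,b)$ with $a$ up to strict equality or merely up to $\approx$.

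First I would fix $a,b\in B$ with $a\le b$, so that $f_\wedge(a,b)=a$. Since $\approx$ is a congruence relation and hence in particular reflexive, equality implies $\approx$; thus from $f_\wedge(a,b)=a$ we obtain $f_\wedge(a,b)\approx a$. By the second condition of Definition \ref{100} this is precisely $a\preceq b$, which is the claim.

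There is no real obstacle here: the statement is an immediate consequence of the fact that the condition $a\preceq b\Leftrightarrow f_\wedge(a,b)\approx a$ is imposed on the same $f_\wedge$ that also determines the lattice order. The only point requiring a moment's care is to recall that in any Boolean algebra the order is recovered from meet via $a\le b\Leftrightarrow f_\wedge(a,b)=a$, so that the comparison is genuinely between $=$ and $\approx$. Alternatively, one could route the argument through Lemma \ref{122}: the quotient map $a\mapsto\overline{a}$ is a Boolean homomorphism and hence order-preserving, so $a\le b$ yields $\overline{a}\le^{\mathcal{B}/\approx}\overline{b}$, which by Lemma \ref{122} is equivalent to $a\preceq b$.
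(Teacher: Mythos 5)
Your proof is correct, and your primary argument takes a slightly different (and more economical) route than the paper's. The paper passes through the quotient algebra: from $a\le b$ it gets $a=f_\wedge^{\mathcal{B}}(a,b)$, pushes this along the canonical map to $\overline{a}=f_\wedge^{\mathcal{B}/\approx}(\overline{a},\overline{b})$, reads off $\overline{a}\le^{\mathcal{B}/\approx}\overline{b}$, and then invokes Lemma \ref{122} to come back to $a\preceq b$. You instead stay inside $\mathcal{B}$ itself: from $f_\wedge(a,b)=a$ you pass to $f_\wedge(a,b)\approx a$ by reflexivity of $\approx$, and conclude $a\preceq b$ directly from the second condition of Definition \ref{100}. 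Your version makes visible that the corollary needs nothing beyond the definition of a Boolean prealgebra together with the meet characterization of $\le$ --- neither the quotient construction nor Lemma \ref{122} is required; what the paper's detour buys is uniformity of method, since Lemma \ref{122} is its standard device for mediating between $\preceq$ and the quotient's lattice order. The alternative you sketch at the end is exactly the paper's proof, so you have in effect given both arguments.
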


\begin{proof}
Suppose $\mathcal{B}$ is a Boolean algebra. Then for any $a,b\in B$: $a\le b$ $\Leftrightarrow$ $a=f^\mathcal{B}_\wedge(a,b)$ $\Rightarrow$ $\overline{a}=f^{\mathcal{B}/{\approx}}_\wedge(\overline{a},\overline{b})$ $\Leftrightarrow$ $\overline{a}\le^{\mathcal{B}/{\approx}}\overline{b}$ $\Leftrightarrow$ $a\preceq b$, where the last step follows from Lemma \ref{122}.
\end{proof}

\begin{definition}\label{200}
An $\mathit{SCI}$-model $\mathcal{M}$ is a structure
\begin{equation*}
\mathcal{M}=(M, \mathit{TRUE}, f_\vee, f_\wedge, f_\neg, f_\bot, f_\top, f_\rightarrow, f_\equiv, \preceq)
\end{equation*}
where $(M, f_\vee, f_\wedge, f_\neg, f_\bot, f_\top, f_\rightarrow, \preceq)$ is a Boolean prealgebra, $\mathit{TRUE}\subseteq M$ is a designated ultrafilter and $f_\equiv$ is an additional binary function satisfying for all $m,m'\in M$: $f_\equiv(m,m')\in\mathit{TRUE}\Leftrightarrow m = m'$. The elements of the universe $M$ are called propositions, and $\mathit{TRUE}$ is the designated set of true propositions.
\end{definition}

An assignment (or valuation) of an $\mathit{SCI}$-model $\mathcal{M}$ is a function $\gamma\colon V\rightarrow M$. Any assignment $\gamma$ extends in the canonical way to a function from $Fm_\equiv$ to $M$ which we again denote by $\gamma$. More precisely, we have $\gamma(\bot)=f_\bot$, $\gamma(\top)=f_\top$, and $\gamma(\varphi * \psi)=f_*(\gamma(\varphi),\gamma(\psi))$ for $*\in\{\vee, \wedge, \rightarrow, \equiv\}$. 

\begin{definition}\label{220}
If $\mathcal{M}$ is an $\mathit{SCI}$-model and $\gamma$ is an assignment of $\mathcal{M}$, then we call the tuple $(\mathcal{M},\gamma)$ an $\mathit{SCI}$-interpretation. The satisfaction relation between interpretations and formulas is defined as follows:
\begin{equation*}
(\mathcal{M},\gamma)\vDash\varphi :\Leftrightarrow \gamma(\varphi)\in \mathit{TRUE}
\end{equation*}
If $(\mathcal{M},\gamma)\vDash\varphi$ for all assignments $\gamma\in M^V$, then we write $\mathcal{M}\vDash\varphi$ and say that $\mathcal{M}$ validates $\varphi$ (or $\varphi$ is valid in $\mathcal{M}$). For $\varPhi\subseteq Fm_\equiv$, we define as usual $(\mathcal{M},\gamma)\vDash\varPhi :\Leftrightarrow (\mathcal{M},\gamma)\vDash\varphi \text{ for all }\varphi\in\varPhi$.
The relation of logical consequence is defined in the standard way for any set $\varPhi\cup\{\varphi\}\subseteq Fm_\equiv$: $\varPhi\Vdash_{\mathit{SCI}}\varphi :\Leftrightarrow Mod(\varPhi)\subseteq Mod(\{\varphi\})$, where for any $\varPsi\subseteq Fm_\equiv$, $Mod(\varPsi)$ is the class of all $\mathit{SCI}$-interpretations satisfying $\varPsi$.
\end{definition}

\begin{corollary}\label{230}
The connective of propositional identity has the intended meaning, i.e. for any interpretation $(\mathcal{M},\gamma)$ and any $\varphi,\psi\in Fm_\equiv$: $(\mathcal{M},\gamma)\vDash\varphi\equiv\psi$ iff $\gamma(\varphi)=\gamma(\psi)$ iff $\varphi$ and $\psi$ denote the same proposition in $(\mathcal{M},\gamma)$.
\end{corollary}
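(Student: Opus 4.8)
The plan is to unwind the relevant semantic definitions in sequence, since the statement follows immediately from the constraint imposed on $f_\equiv$ in Definition \ref{200}. First I would apply the definition of the satisfaction relation from Definition \ref{220}, which gives $(\mathcal{M},\gamma)\vDash\varphi\equiv\psi$ iff $\gamma(\varphi\equiv\psi)\in\mathit{TRUE}$. Next I would use the canonical extension of the assignment $\gamma$ to formulas, specifically the clause $\gamma(\varphi * \psi)=f_*(\gamma(\varphi),\gamma(\psi))$ instantiated at $*:=\equiv$, to rewrite this as $f_\equiv(\gamma(\varphi),\gamma(\psi))\in\mathit{TRUE}$.

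The key step is then to invoke the defining property of an $\mathit{SCI}$-model from Definition \ref{200}, namely that $f_\equiv(m,m')\in\mathit{TRUE}$ iff $m=m'$ for all $m,m'\in M$. Instantiating this with $m:=\gamma(\varphi)$ and $m':=\gamma(\psi)$ yields $f_\equiv(\gamma(\varphi),\gamma(\psi))\in\mathit{TRUE}$ iff $\gamma(\varphi)=\gamma(\psi)$. Chaining these equivalences gives $(\mathcal{M},\gamma)\vDash\varphi\equiv\psi$ iff $\gamma(\varphi)=\gamma(\psi)$, which is the first claimed equivalence.

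For the second equivalence I would simply recall the terminology fixed in Definition \ref{200}: the elements of the universe $M$ are precisely the propositions, and $\gamma(\varphi)$ is the proposition denoted by $\varphi$ in the interpretation $(\mathcal{M},\gamma)$. Hence the equality $\gamma(\varphi)=\gamma(\psi)$ is, by definition, the assertion that $\varphi$ and $\psi$ denote the same proposition, which closes the argument. I do not expect any genuine obstacle here; the corollary is a direct reading-off of the semantic clauses, and the only point requiring attention is to ensure that the biconditional constraint on $f_\equiv$ is used in both directions rather than just one.
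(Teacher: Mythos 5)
Your proposal is correct and matches the paper's own proof exactly: both unwind Definition \ref{220} (satisfaction), the homomorphic clause $\gamma(\varphi\equiv\psi)=f_\equiv(\gamma(\varphi),\gamma(\psi))$, and the defining biconditional on $f_\equiv$ from Definition \ref{200}, with the final equivalence being purely terminological. Nothing is missing.
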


\begin{proof}
$(\mathcal{M},\gamma)\vDash\varphi\equiv\psi$ iff $\gamma(\varphi\equiv\psi)\in \mathit{TRUE}$ iff $f_\equiv(\gamma(\varphi),\gamma(\psi))\in\mathit{TRUE}$ iff $\gamma(\varphi)=\gamma(\psi)$.
\end{proof}

In \cite{blosus}, the authors consider only the logical connectives $\neg$ and $\rightarrow$, and consequently define an $\mathit{SCI}$-model as a structure $\mathcal{A}=(A,f_\neg,f_\rightarrow,f_\equiv)$ that satisfies certain conditions according to [Definition 1.6 \cite{blosus}] (we use here our specific notation for the semantic operations $f_\neg, f_\rightarrow, f_\equiv$ in order to keep the presentation consistent). By the following result, that original definition is essentially equivalent to our Definition \ref{200} of $\mathit{SCI}$-model presented above. This is not obvious since both definitions are formulated in very different ways. In particular, the original definition given in \cite{blosus} hides the prelattice structure which is an explicit part of our concept of $\mathit{SCI}$-model. 

\begin{theorem}[Equivalence of the two semantics]\label{240}
Our semantics based on Boolean prealgebras is equivalent to original semantics of $\mathit{SCI}$ in the following sense. Let $\mathcal{M}=(M, \mathit{TRUE}, f_\vee, f_\wedge, f_\neg, f_\bot, f_\top, f_\rightarrow, f_\equiv, \preceq)$ be an $\mathit{SCI}$-model according to Definition \ref{200}. Then the pair $<\mathcal{A},\mathit{TRUE}>$, where $\mathcal{A}=(M,f_\neg,f_\rightarrow,f_\equiv)$, is a model of $\mathit{SCI}$ according to [Definition 1.6 \cite{blosus}]. On the other hand, if $<\mathcal{A},B>$, with $\mathcal{A}=(A,f_\neg,f_\rightarrow,f_\equiv)$, is a model according to [Definition 1.6 \cite{blosus}], then $\mathcal{M}=(M, B, f_\vee, f_\wedge, f_\neg, f_\bot, f_\top, f_\rightarrow, f_\equiv, \preceq)$ is an $\mathit{SCI}$-model in our sense, where $a\preceq b\Leftrightarrow f_\rightarrow(a,b)\in B$, and the additional operations $f_\vee, f_\wedge, f_\bot, f_\top$ can be defined by the usual Boolean equations (e.g. $f_\top:=f_\rightarrow(a,a)$ for some fixed $a\in A$, etc.).
\end{theorem}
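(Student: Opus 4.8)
The plan is to read off from [Definition 1.6 \cite{blosus}] the three conditions that characterise a Bloom--Suszko model $<\mathcal{A},B>$ with $\mathcal{A}=(A,f_\neg,f_\rightarrow,f_\equiv)$: that the designated set $B$ is a nonempty proper subset of $A$ obeying the classical truth clauses $f_\neg(a)\in B\Leftrightarrow a\notin B$ and $f_\rightarrow(a,b)\in B\Leftrightarrow(a\notin B$ or $b\in B)$, together with the identity clause $f_\equiv(a,b)\in B\Leftrightarrow a=b$. The proof then splits into the two asserted directions, each amounting to checking these three clauses, respectively the clauses of Definitions \ref{100} and \ref{200}.

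For the first direction I would start from an $\mathit{SCI}$-model $\mathcal{M}$ in the sense of Definition \ref{200} and take $B:=\mathit{TRUE}$. The identity clause is immediate from Definition \ref{200}. For the two classical clauses I would use that $\mathit{TRUE}$ is a designated ultrafilter of the prealgebra, so by Definition \ref{126} the image $U:=\mathit{TRUE}^{\mathcal{B}/\approx}=\{\overline a\mid a\in\mathit{TRUE}\}$ is an ultrafilter of the Boolean algebra $\mathcal{B}/\approx$, and because $\mathit{TRUE}$ is closed under $\approx$ we have $a\in\mathit{TRUE}\Leftrightarrow\overline a\in U$. Since $\approx$ is a congruence, the canonical map $a\mapsto\overline a$ is a Boolean homomorphism, so $\overline{f_\neg(a)}=f_\neg^{\mathcal{B}/\approx}(\overline a)$ and $\overline{f_\rightarrow(a,b)}=f_\rightarrow^{\mathcal{B}/\approx}(\overline a,\overline b)$. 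The classical clauses for $\mathit{TRUE}$ then follow by transporting the standard ultrafilter identities of the quotient Boolean algebra, namely $f_\neg^{\mathcal{B}/\approx}(\overline a)\in U\Leftrightarrow\overline a\notin U$ and $f_\rightarrow^{\mathcal{B}/\approx}(\overline a,\overline b)\in U\Leftrightarrow(\overline a\notin U$ or $\overline b\in U)$, back along this equivalence. This settles that $<\mathcal{A},\mathit{TRUE}>$ is a Bloom--Suszko model.

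For the converse direction I would start from a Bloom--Suszko model $<\mathcal{A},B>$, put $M:=A$, $\mathit{TRUE}:=B$, define $\preceq$ by $a\preceq b\Leftrightarrow f_\rightarrow(a,b)\in B$, and introduce $f_\top:=f_\rightarrow(a_0,a_0)$ for a fixed $a_0$, $f_\bot:=f_\neg(f_\top)$, $f_\vee(a,b):=f_\rightarrow(f_\neg(a),b)$ and $f_\wedge(a,b):=f_\neg(f_\rightarrow(a,f_\neg(b)))$. The key computation is that the classical clauses collapse $\preceq$ to $a\preceq b\Leftrightarrow(a\in B\Rightarrow b\in B)$, from which reflexivity and transitivity are immediate; hence $\preceq$ is a preorder and $\approx$ has exactly the two classes $B$ and $A\setminus B$. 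A second round of the same clauses shows that each defined operation is truth-functional with respect to $B$-membership (for instance $f_\wedge(a,b)\in B\Leftrightarrow a\in B$ and $b\in B$), so $\approx$ is a congruence for the Boolean operations and the quotient $\mathcal{B}/\approx$ is the two-element Boolean algebra, with $\overline{f_\bot}$ and $\overline{f_\top}$ its bottom and top. The non-obvious second clause of Definition \ref{100}, $a\preceq b\Leftrightarrow f_\wedge(a,b)\approx a$, would follow from $f_\wedge(a,b)\in B\Leftrightarrow(a\in B$ and $b\in B)$: given $a\in B$ this membership agrees with that of $a$ exactly when $a\in B\Rightarrow b\in B$, which is $a\preceq b$. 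Finally $B$ maps onto the principal ultrafilter $\{\overline{f_\top}\}$ of the two-element algebra, so it is a designated ultrafilter in the sense of Definition \ref{126}, and the identity clause $f_\equiv(a,b)\in B\Leftrightarrow a=b$ is precisely the condition required of $f_\equiv$ in Definition \ref{200}.

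I expect the main obstacle to be organisational rather than conceptual: the two definitions are phrased so differently that the bulk of the work is confirming that every requirement of the prealgebra Definition \ref{100} reduces to the classical truth clauses for $B$, with the only genuinely non-routine step being the verification of that second clause of Definition \ref{100}. It is worth flagging that the converse construction always yields the two-element quotient, so the two translations are not mutually inverse; the asserted equivalence is only that each construction produces a legitimate model of the other kind sharing the same designated set and the same operations $f_\neg,f_\rightarrow,f_\equiv$, whence the two satisfaction relations agree, the operations $f_\vee,f_\wedge,f_\bot,f_\top$ being definable from $f_\neg,f_\rightarrow$ in both settings.
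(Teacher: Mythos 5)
Your proposal is correct in substance, but it runs on a different engine than the paper's proof, and it rests on one assumption worth flagging. You take [Definition 1.6 \cite{blosus}] to consist of the three membership clauses ($f_\neg(a)\in B\Leftrightarrow a\notin B$; $f_\rightarrow(a,b)\in B\Leftrightarrow a\notin B$ or $b\in B$; $f_\equiv(a,b)\in B\Leftrightarrow a=b$), whereas the paper's own proof indicates that the original definition is phrased in filter-theoretic terms: the designated set must be \emph{closed} (under modus ponens), \emph{proper}, \emph{prime}, \emph{normal} and \emph{admissible} (containing the values of all $\mathit{SCI}$-axioms under every valuation). The two characterizations are indeed equivalent --- closure, properness, primeness and admissibility yield your two classical clauses via instances of $\neg x\rightarrow(x\rightarrow y)$ and $y\rightarrow(x\rightarrow y)$, and conversely your clauses give closure, properness and primeness immediately and admissibility by induction on tautologies together with normality for the identity axioms --- but you assume this bridge rather than prove it, so strictly speaking your verifications target a reformulation of the definition; this is a routine repair, not a fatal gap. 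Granting it, both directions go through. In the first direction, your transport of the ultrafilter laws through the quotient $\mathcal{B}/{\approx}$ accomplishes what the paper does by checking the filter conditions directly, where admissibility comes from $f_\top\preceq h(\varphi)\in\mathit{TRUE}$ for every tautology $\varphi$. In the second direction the two proofs genuinely diverge: the paper argues that the prime normal filter $B$ is deductively closed, so $\approx$ identifies the values of $\mathit{CPC}$-equivalent formulas and hence all Boolean equations hold in the quotient; you instead collapse everything to two-valued truth functions on the classes $B$ and $A\setminus B$. Your route is more elementary, makes explicit that the quotient is the two-element Boolean algebra (which the paper only notes parenthetically), and gives a cleaner verification of the second clause $a\preceq b\Leftrightarrow f_\wedge(a,b)\approx a$ of Definition \ref{100}, which is the one genuinely non-routine requirement. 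Your closing observation that the two constructions are not mutually inverse --- a round trip always coarsens the preorder to the two-class one --- is correct and is a useful sharpening of the sense in which the theorem asserts an `equivalence' of the two semantics.
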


\begin{proof}
If $\mathcal{M}=(M, \mathit{TRUE}, f_\vee, f_\wedge, f_\neg, f_\bot, f_\top, f_\rightarrow, f_\equiv, \preceq)$ is an $\mathit{SCI}$-model in our sense, then, using the terminology of [Definition 1.6 \cite{blosus}], the set $\mathit{TRUE}$ is clearly closed, proper, prime and normal. Since $\mathcal{M}$ is based on a Boolean prealgebra, we have $f_\top\preceq h(\varphi)\in\mathit{TRUE}$ for any classical propositional tautology $\varphi$ and any valuation (assignment) $h$ of $\mathcal{M}$. Since $f_\equiv(a,b)\in\mathit{TRUE}\Leftrightarrow a=b$, also the identity axioms of $\mathit{SCI}$ are all interpreted by elements of $\mathit{TRUE}$ under any assignment $h$. Thus, $\mathit{TRUE}$ is also admissible and therefore a prime, normal filter according to [Definition 1.6 \cite{blosus}]. Thus, the reduct $\mathcal{A}=(M,f_\neg,f_\rightarrow,f_\equiv)$ along with prime, normal filter $\mathit{TRUE}\subseteq M$ yields an $\mathit{SCI}$-model in the original sense. Now let us suppose $<\mathcal{A},B>$, with $\mathcal{A}=(A,f_\neg,f_\rightarrow,f_\equiv)$, is a model in the sense of \cite{blosus}. We have to extract from that concept a preorder $\preceq$ that yields a prelattice and the desired $\mathit{SCI}$-model in the sense of Definition \ref{200} above. For elements $a,b\in A$, we define $a\preceq b :\Leftrightarrow$ $f_\rightarrow(a,b)\in B$. Since $B$ is a prime, normal filter (in the terminology of [Definition 1.6 \cite{blosus}], $B$ is, in a sense, deductively closed (i.e. if $h(\varPhi)\subseteq B$ and $\varPhi\vdash_{\mathit{SCI}}\varphi$, then $h(\varphi)\in B$, for any set $\varPhi\cup\{\varphi\}$ of formulas and any valuation $h$ of $\mathcal{A}$). It follows that $\preceq$ is a preorder on $A$, and $a\approx b :\Leftrightarrow$ ($a\preceq b$ and $b\preceq a$) defines a congruence relation of the structure $(A,f_\neg,f_\rightarrow)$. In particular, for any propositional formulas $\varphi,\psi$ (without identity connective), if $\varphi\leftrightarrow\psi$ is a theorem of $\mathit{CPC}$, then $h(\varphi)\approx h(\psi)$ under any valuation $h$. Thus, all Boolean equations are valid in the quotient structure of $(A,f_\neg, f_\rightarrow)$ modulo $\approx$, and that quotient structure must be a Boolean algebra (actually, it is the two-element Boolean algebra). We may define additional Boolean operations, such as $f_\wedge$ ... , in the obvious way. Furthermore, one easily verifies that the equivalence $a\preceq b\Leftrightarrow f_\wedge(a,b)\approx a$ is valid. Thus, $(A,f_\neg, f_\rightarrow,\preceq)$ is a Boolean prealgebra. Finally, the equivalence $f_\equiv(a,b)\in B\Leftrightarrow a=b$ is warranted by the fact that $B$ is normal (in the sense of [Definition 1.6 \cite{blosus}]). Thus, $\mathcal{M}=(A,B,f_\vee,f_\wedge,f_\neg,f_\bot,f_\top,f_\rightarrow,f_\equiv, \preceq)$ is an $\mathit{SCI}$-model according to Definition \ref{200} above. 
\end{proof}

One easily verifies that any $\mathit{SCI}$-interpretation (in our sense) satisfies the axioms of $\mathit{SCI}$. Completeness of $\mathit{SCI}$ w.r.t. our semantics follows from the original completeness theorem of $\mathit{SCI}$ (see, e.g. \cite{blosus}) together with Theorem \ref{240}. Nevertheless, we will sketch out in the following an independent proof. Suppose $\varPhi$ is a set of formulas which is consistent in $\mathit{SCI}$. By Zorn's Lemma, there is an extension $\varPsi\supseteq\varPhi$ which is maximal consistent in logic $\mathit{SCI}$. By the axioms of propositional identity, the relation $\cong$ defined by 
\begin{equation*}
\varphi\cong\psi :\Leftrightarrow (\varphi\equiv\psi)\in\varPsi
\end{equation*}
is a congruence relation on $Fm_\equiv$ (symmetry, transitivity and compatibility with operations follow from applications of \eqref{10}, i.e. the Substitution Property SP). Moreover, by (id2), $\varphi\cong\psi$ implies: $\varphi\in\varPsi\Leftrightarrow\psi\in\varPsi$. For $\varphi\in Fm_\equiv$, let $[\varphi]$ be the congruence class of $\varphi$ modulo $\cong$. Then we put $M:=\{[\varphi]\mid\varphi\in Fm_\equiv\}$, $\mathit{TRUE}:=\{[\varphi]\mid\varphi\in\varPsi\}$ and define operations $f_\neg([\varphi]):=[\neg\varphi]$, $f_*([\varphi],[\psi]):=[\varphi * \psi]$, for $*\in\{\vee, \wedge, \rightarrow, \equiv\}$, and $f_\bot:=[\bot]$, $f_\top:=[\top]$. The relation $\preceq$ on $M$ defined by 
\begin{equation*}
[\varphi]\preceq [\psi] :\Leftrightarrow\varphi\rightarrow\psi\in\varPsi
\end{equation*}
is a preorder on $M$. By SP, $\preceq$ is well-defined. Next we show that the structure
\begin{equation*}
\mathcal{M'}=(M, f_\vee, f_\wedge, f_\neg, f_\bot, f_\top, f_\rightarrow, \preceq)
\end{equation*}
is a Boolean prealgebra. The relation $\approx$ given by 
\begin{equation*}
[\varphi]\approx [\psi] :\Leftrightarrow \varphi\leftrightarrow\psi\in\varPsi \Leftrightarrow ([\varphi]\preceq [\psi]\text{ and }[\psi]\preceq [\varphi])
\end{equation*}
is obviously a congruence relation of $\mathcal{M'}$. Since $\varPsi$ is maximal consistent, it contains in particular all equivalences $\varphi\leftrightarrow\psi$ which are valid in $\mathit{CPC}$. These equivalences axiomatize as equations `$\varphi =\psi$' the class of Boolean algebras. It follows that the quotient of $\mathcal{M'}$ modulo $\approx$ is a Boolean algebra whose elements are the congruence classes of the elements $[\varphi]\in M$ modulo $\approx$. Moreover, for any elements $[\varphi],[\psi]$ we have: $[\varphi]\preceq [\psi]$ iff $\varphi\rightarrow\psi\in\varPsi$ iff $(\varphi\wedge\psi)\leftrightarrow\varphi\in\varPsi$ iff $f_\wedge([\varphi],[\psi])\approx [\varphi]$. Hence, $\mathcal{M}'$ is a Boolean prealgebra in accordance with Definition \ref{100}.\footnote{$\mathcal{M'}$ is not necessarily a Boolean algebra. For example, $f_\vee ([\varphi],[\psi])=[\varphi\vee\psi]\neq [\psi\vee\varphi]=f_\vee([\psi], [\varphi])$ is possible. Even if $\mathcal{M'}$ is a Boolean algebra, the preorder $\preceq$ may be strictly coarser than the underlying lattice order (cf. Lemma \ref{120}). In fact, $\preceq$ is the lattice order iff $\varPsi$ contains all instances of the Fregean Axiom $(\varphi\equiv\psi)\leftrightarrow (\varphi\leftrightarrow\psi)$.} By construction, we have for any elements $[\varphi]$, $[\psi]$: $[\varphi] = [\psi]$ iff $\varphi\cong\psi$ iff $\varphi\equiv\psi\in\varPsi$ iff $[\varphi\equiv\psi]=f_\equiv([\varphi],[\psi])\in\mathit{TRUE}$. Thus, 
\begin{equation*}
\mathcal{M}:=(M, \mathit{TRUE}, f_\vee, f_\wedge, f_\neg, f_\bot, f_\top, f_\rightarrow, f_\equiv, \preceq)
\end{equation*}
is an $\mathit{SCI}$-model. We consider the assignment $\gamma\in M^V$ defined by $x\mapsto [x]$. By induction on formulas, it follows that $\gamma(\varphi)=[\varphi]$. Then we have 
\begin{equation*}
(\mathcal{M},\gamma)\vDash\varphi\Leftrightarrow\gamma(\varphi)=[\varphi]\in\mathit{TRUE}\Leftrightarrow\varphi\in\varPsi.
\end{equation*}
In particular, $(\mathcal{M},\gamma)\vDash\varPhi$ and whence $\varPhi$ is satisfiable. We have proved soundness and completeness of $\mathit{SCI}$ w.r.t. the semantics given by the class of $\mathit{SCI}$-models.

\begin{theorem}[Soundness and Completeness]\label{250}
For any set $\varPhi\cup\{\varphi\}\subseteq Fm_\equiv$, the following holds: $\varPhi\Vdash_{\mathit{SCI}}\varphi \Leftrightarrow \varPhi\vdash_{\mathit{SCI}}\varphi$.
\end{theorem}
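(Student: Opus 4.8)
The plan is to prove the two implications separately, treating the right-to-left direction (soundness) as routine and concentrating the real work on the left-to-right direction (completeness), which I would obtain by a Lindenbaum--Tarski construction producing an explicit $\mathit{SCI}$-model out of a maximal consistent set. Since both directions are about the same consequence relations, I would phrase everything in terms of $Mod(\cdot)$ as in Definition \ref{220}.

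For soundness, $\varPhi\vdash_{\mathit{SCI}}\varphi\Rightarrow\varPhi\Vdash_{\mathit{SCI}}\varphi$, I would first check that every axiom of $\mathit{SCI}$ is valid in every $\mathit{SCI}$-model and that MP preserves satisfaction. Validity of the classical tautologies is immediate from the fact that $\mathcal{M}$ is built on a Boolean prealgebra whose associated quotient $\mathcal{M}/{\approx}$ is a Boolean algebra and whose designated set $\mathit{TRUE}$ is an ultrafilter: any tautology is interpreted by the top element modulo $\approx$, hence lands in $\mathit{TRUE}$. Validity of (id1)--(id7) follows from Corollary \ref{230}, since the defining condition $f_\equiv(m,m')\in\mathit{TRUE}\Leftrightarrow m=m'$ makes $f_\equiv$ behave exactly like equality of propositions, and equality is reflexive and compatible with all the operations $f_\neg,f_\vee,f_\wedge,f_\rightarrow,f_\equiv$. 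Preservation under MP is the closure of the ultrafilter $\mathit{TRUE}$ under the order $\preceq$ (Corollary \ref{128}). A straightforward induction on the length of a derivation then gives that every interpretation satisfying $\varPhi$ satisfies $\varphi$.

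For completeness I would argue contrapositively: assume $\varPhi\not\vdash_{\mathit{SCI}}\varphi$, so $\varPhi\cup\{\neg\varphi\}$ is consistent, and by Zorn's Lemma extend it to a maximal consistent set $\varPsi$. I would then carry out exactly the canonical construction: define $\varphi\cong\psi:\Leftrightarrow(\varphi\equiv\psi)\in\varPsi$, set $M:=Fm_\equiv/{\cong}$ with operations $f_\neg([\varphi]):=[\neg\varphi]$, $f_*([\varphi],[\psi]):=[\varphi*\psi]$, constants $f_\bot:=[\bot]$, $f_\top:=[\top]$, designated set $\mathit{TRUE}:=\{[\varphi]\mid\varphi\in\varPsi\}$, and preorder $[\varphi]\preceq[\psi]:\Leftrightarrow(\varphi\rightarrow\psi)\in\varPsi$. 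Using the canonical assignment $\gamma(x):=[x]$, an induction on formulas yields $\gamma(\varphi)=[\varphi]$ and hence $(\mathcal{M},\gamma)\vDash\varphi\Leftrightarrow\varphi\in\varPsi$. Since $\varPhi\subseteq\varPsi$ and $\varphi\notin\varPsi$, this interpretation witnesses $\varPhi\not\Vdash_{\mathit{SCI}}\varphi$.

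The main obstacle is verifying that the quotient structure $\mathcal{M}'=(M,f_\vee,f_\wedge,f_\neg,f_\bot,f_\top,f_\rightarrow,\preceq)$ really is a Boolean prealgebra in the sense of Definition \ref{100}, and that $\mathcal{M}$ is a genuine $\mathit{SCI}$-model. The delicate points are all consequences of the Substitution Principle \eqref{10}: SP is what makes $\cong$ a congruence compatible with every connective, hence makes the operations $f_*$ and the preorder $\preceq$ well-defined on $\cong$-classes. I would then introduce the second congruence $[\varphi]\approx[\psi]:\Leftrightarrow(\varphi\leftrightarrow\psi)\in\varPsi$ and use the fact that $\varPsi$, being maximal consistent in $\mathit{SCI}\supseteq\mathit{CPC}$, contains all classical equivalences; since these axiomatize the equational class of Boolean algebras, the quotient $\mathcal{M}'/{\approx}$ is a Boolean algebra. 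The remaining condition $[\varphi]\preceq[\psi]\Leftrightarrow f_\wedge([\varphi],[\psi])\approx[\varphi]$ reduces, via Lemma \ref{122}, to the $\mathit{CPC}$-provable equivalence of $\varphi\rightarrow\psi$ and $(\varphi\wedge\psi)\leftrightarrow\varphi$ inside $\varPsi$. Finally, the defining property $f_\equiv([\varphi],[\psi])\in\mathit{TRUE}\Leftrightarrow[\varphi]=[\psi]$ holds by construction, since $[\varphi]=[\psi]$ means precisely $(\varphi\equiv\psi)\in\varPsi$. I expect no genuine difficulty beyond bookkeeping once SP is invoked; the one place demanding care is keeping the two congruences $\cong$ and $\approx$ distinct, as $\mathcal{M}'$ need not be a Boolean algebra and $\preceq$ may be strictly coarser than any lattice order.
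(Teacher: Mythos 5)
Your proposal is correct and follows essentially the same route as the paper: a Lindenbaum--Tarski construction over a maximal consistent extension $\varPsi$, with the two relations $\cong$ and $\approx$, SP guaranteeing well-definedness, and the classical equivalences in $\varPsi$ forcing the quotient to be a Boolean algebra. The only cosmetic differences are that you phrase completeness contrapositively via the consistency of $\varPhi\cup\{\neg\varphi\}$ (the paper proves ``consistent implies satisfiable,'' which is the same argument) and that you spell out the soundness verification the paper dismisses as routine.
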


Classical propositional logic $\mathit{CPC}$ is extensional in the sense that the denotation (reference, \textit{Bedeutung}) of any formula is given by its truth-value relative to the underlying assignment: either true or false. Consequently, the Fregean Axiom holds: $(\varphi\leftrightarrow\psi)\leftrightarrow (\varphi\equiv\psi)$. It is known that this situation can be modeled in $\mathit{SCI}$ by presenting a two-element model where all true formulas denote one element (the true proposition) and all false formulas denote the other one (the false proposition).

\begin{example}\label{280}
There exists an extensional $\mathit{SCI}$-model, i.e. a two-element model $\mathcal{M}$ where the denotation of a formula is nothing but a classical truth value: for every assignment $\gamma\colon V\rightarrow\{0,1\}$ and all $\varphi,\psi\in Fm$, $(\mathcal{M},\gamma)\vDash\varphi\equiv\psi$ iff $(\mathcal{M},\gamma)\vDash\varphi\leftrightarrow\psi$ iff $\varphi$ and $\psi$ have the same classical truth-value.\\
Of course, the desired extensional model will be based (up to isomorphism) on the two-element Boolean algebra $\mathcal{B}$ with universe $\{0,1\}$. Let $\preceq$ be the natural total order on $\{0,1\}$. The resulting relation $\approx$ is the identity and the associated quotient algebra is $\mathcal{B}$ itself. We define an additional Boolean operation $f_\equiv\colon\{0,1\}\times\{0,1\}\rightarrow\{0,1\}$ by $f_\equiv(x,y)=1$ $:\Leftrightarrow$ $x=y$. Then $\mathcal{B}$ together with $f_\equiv$ and the unique ultrafilter $\mathit{TRUE}:=\{1\}$ yields an $\mathit{SCI}$-model $\mathcal{M}$. Obviously, for any assignment $\gamma\colon V\rightarrow\{0,1\}$ and for any formulas $\varphi,\psi\in F_\equiv$, we have $(\mathcal{M},\gamma)\vDash\varphi\equiv\psi$ iff $\gamma(\varphi)=\gamma(\psi)$ iff $\varphi$ and $\psi$ have the same classical truth-value.
\end{example}

It is clear that the above two-valued $\mathit{SCI}$-model along with all possible assignments yields essentially the standard two-valued semantics of classical propositional logic $\mathit{CPC}$. In fact, $\mathit{CPC}$ is represented by the $\mathit{SCI}$-theory $\mathit{SCI}^{ext}$ that results from $\mathit{SCI}$ by adding Fregean Axiom $(\varphi\leftrightarrow\psi)\rightarrow (\varphi\equiv\psi)$. Theory $\mathit{SCI}^{ext}$ contains $(\varphi\leftrightarrow\psi)\leftrightarrow (\varphi\equiv\psi)$ and thus $(\varphi\leftrightarrow\psi)\equiv (\varphi\equiv\psi)$ as theorems. By SP, $(\varphi\leftrightarrow\psi)$ and $(\varphi\equiv\psi)$ then can be replaced by each other in every context. One easily shows that $\mathit{SCI}^{ext}$ is sound and complete w.r.t. the class of all extensional (i.e., two-element) $\mathit{SCI}$-models. We have for any $\varphi\in Fm_\equiv$:
\begin{equation*}
\vdash_{\mathit{SCI}^{ext}}\varphi\text{ }\Leftrightarrow\text{ } \vdash_{\mathit{CPC}}\varphi^*,
\end{equation*}
where $\varphi^*$ is the result of replacing every subformula of the form $\psi\equiv\chi$ in $\varphi$ by $\psi\leftrightarrow\chi$. \\

Another important example of $\mathit{SCI}$-model, as opposed to an extensional model, is an intensional model where the denotation of a formula is determined by its intension, i.e. its syntactical form. In such a model, any two (syntactically) different formulas have different denotations. The denotation of a formula can be identified with its intension. In the following, we present a construction of such a model. Intensional models have also been constructed for a logic that extends $\mathit{SCI}$ by propositional quantifiers and a truth predicate (see, e.g. the discussion and a construction presented in \cite{lewigpl}).\footnote{The construction of an intensional model for such a first-order logic is not trivial because of the impredicativity of propositional quantifiers. Note that bound variable $x$ in formula $\forall x\varphi$ ranges over the universe of all propositions which contains in particular the proposition denoted by $\forall x\varphi$ itself.} 

\begin{example}\label{300}
There exists an intensional $\mathit{SCI}$-model, i.a. a model $\mathcal{M}$ along with an assignment $\gamma$ such that for all $\varphi,\psi\in Fm_\equiv$,
\begin{equation*}
(\mathcal{M},\gamma)\vDash\varphi\equiv\psi\Leftrightarrow\varphi=\psi.
\end{equation*}

Let us construct model $\mathcal{M}$. We define a rank $R\colon Fm_\equiv\rightarrow\mathbb{N}$ on formulas as follows:
\begin{itemize}
\item $R(x)=R(\bot)=R(\top)=R(\varphi\equiv\psi)=0$, for any $x\in V$ and $\varphi,\psi\in Fm_\equiv$.
\item If $\varphi,\psi\in Fm$ such that $R(\varphi)$ and $R(\psi)$ are already defined, then $R(\neg\varphi)=R(\varphi)+1$ and $R(\varphi*\psi)=max\{R(\varphi), R(\psi)\}+1$, where $*\in\{\vee, \wedge,\rightarrow\}$. 
\end{itemize}
We consider the given enumeration of the set of variables $V=\{x_0, x_1, x_2, ... \}$ and define the set $\mathit{TRUE}$ by induction on rank $R$ as the smallest set such that the following conditions are satisfied:
\begin{itemize}
\item For formulas of rank $0$, we have: $\bot\not\in\mathit{TRUE}$, $\top\in\mathit{TRUE}$, $x_i\in\mathit{TRUE}$ iff $i$ is an even index, $\varphi\equiv\psi\in\mathit{TRUE}$ iff $\varphi=\psi$.
\item Suppose membership of all formulas of rank $\le n\in\mathbb{N}$ w.r.t. $\mathit{TRUE}$ is already determined. Let $\varphi$, $\psi$ be formulas such that $max\{R(\varphi), R(\psi)\} = n$. Then:
\begin{itemize}
\item $\varphi\wedge\psi\in\mathit{TRUE}$ if $\varphi\in\mathit{TRUE}$ and $\psi\in\mathit{TRUE}$
\item $\varphi\vee\psi\in\mathit{TRUE}$ if $\varphi\in\mathit{TRUE}$ or $\psi\in\mathit{TRUE}$
\item $\neg\varphi\in\mathit{TRUE}$ if $\varphi\notin\mathit{TRUE}$
\item $\varphi\rightarrow\psi\in\mathit{TRUE}$ if $\varphi\notin\mathit{TRUE}$ or $\psi\in\mathit{TRUE}$
\end{itemize}
\end{itemize}
Membership w.r.t. $\mathit{TRUE}$ determines a classical truth-value for every formula. The relation $\preceq$ on $M:=Fm_\equiv$ defined by $\varphi\preceq\psi :\Leftrightarrow$ $\varphi\rightarrow\psi\in\mathit{TRUE}$ is a preorder. Moreover, the relation $\approx$ defined by $\varphi\approx\psi$ $:\Leftrightarrow$ ($\varphi\preceq\psi$ and $\psi\preceq\varphi$) $\Leftrightarrow$ `both $\varphi$ and $\psi$ belong to $\mathit{TRUE}$ or both $\varphi$ and $\psi$ belong to $M\smallsetminus\mathit{TRUE}$' is a congruence relation on the structure $(M,\vee,\wedge,\neg,\bot,\top,\rightarrow)$. The associated quotient algebra is the two-element Boolean algebra $\{0,1\}$ where $1$ is the image of $\mathit{TRUE}$ under the canonical homomorphism. Moreover, $\varphi\preceq\psi$ $\Leftrightarrow$ $\varphi\rightarrow\psi\in\mathit{TRUE}$ $\Leftrightarrow$ [$(\varphi\wedge\psi)\rightarrow\varphi\in \mathit{TRUE}$ and $\varphi\rightarrow (\varphi\wedge\psi)\in\mathit{TRUE}$] $\Leftrightarrow$ $(\varphi\wedge\psi)\approx\varphi$. Hence,
\begin{equation*}
\mathcal{M'}=(M,\vee,\wedge,\neg,\bot,\top,\rightarrow,\preceq)
\end{equation*}
is a Boolean prealgebra. Together with ultrafilter $\mathit{TRUE}$ and the operation $f_\equiv$ on $M=Fm_\equiv$ defined by $f_\equiv(\varphi,\psi):=(\varphi\equiv\psi)$ we then obtain the $\mathit{SCI}$-model 
\begin{equation*}
\mathcal{M}=(M,\mathit{TRUE},\vee, \wedge,\neg,\bot,\top,\rightarrow,f_\equiv,\preceq).
\end{equation*}
Consider the assignment $\gamma\colon V\rightarrow Fm_\equiv$, $x\mapsto x$. Then, by induction on formulas, $\gamma(\varphi)=\varphi$ for any $\varphi\in Fm_\equiv$. Furthermore, for all $\varphi,\psi\in Fm_\equiv$: 
\begin{equation*}
\begin{split}
(\mathcal{M},\gamma)\vDash\varphi\equiv\psi & \Leftrightarrow \gamma(\varphi\equiv\psi)=f_\equiv(\gamma(\varphi),\gamma(\psi))\in\mathit{TRUE}\\
& \Leftrightarrow\gamma(\varphi)=\gamma(\psi)\Leftrightarrow\varphi =\psi.
\end{split}
\end{equation*}
\end{example}

As a consequence, already observed by Suszko, only trivial identities are theorems of $\mathit{SCI}$.

\begin{corollary}\label{320}
For all $\varphi,\psi\in Fm_\equiv$, $\vdash_{\mathit{SCI}}\varphi\equiv\psi\Leftrightarrow\varphi=\psi$.
\end{corollary}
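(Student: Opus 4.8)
The plan is to prove the two directions separately, with essentially all the real content coming from the intensional model of Example \ref{300}. The backward implication is immediate: if $\varphi=\psi$, then the formula $\varphi\equiv\psi$ is literally $\varphi\equiv\varphi$, which is an instance of axiom (id1), and hence $\vdash_{\mathit{SCI}}\varphi\equiv\psi$.

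For the forward implication I would argue semantically via soundness. Suppose $\vdash_{\mathit{SCI}}\varphi\equiv\psi$. By the soundness half of Theorem \ref{250}, the formula $\varphi\equiv\psi$ is then valid in every $\mathit{SCI}$-interpretation. In particular it is satisfied by the distinguished interpretation $(\mathcal{M},\gamma)$ constructed in Example \ref{300}, where $\mathcal{M}$ is the intensional model built on the universe $M=Fm_\equiv$ and $\gamma$ is the identity assignment $x\mapsto x$. Thus $(\mathcal{M},\gamma)\vDash\varphi\equiv\psi$.

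It then remains only to invoke the characteristic property of that model. By the defining equivalence established at the end of Example \ref{300}, we have $(\mathcal{M},\gamma)\vDash\varphi\equiv\psi$ if and only if $\varphi=\psi$. Combining this with the previous paragraph yields $\varphi=\psi$, which completes the forward direction and hence the corollary.

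I expect no genuine obstacle here: the whole weight of the statement has already been discharged in Example \ref{300}, which produces a single interpretation in which \emph{distinct} formulas are forced to denote \emph{distinct} propositions. The corollary is merely the logical shadow of that construction seen through soundness. The only point demanding a little care is that soundness must be applied to validity over the full class of $\mathit{SCI}$-models (so that a theorem is valid in every interpretation, the intensional one included), and not to some restricted class such as the two-element extensional models of Example \ref{280}, for which the analogous conclusion would of course fail.
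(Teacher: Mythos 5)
Your proposal is correct and follows essentially the same argument as the paper: the backward direction via axiom (id1), and the forward direction by applying soundness to the intensional model of Example \ref{300} (the paper merely phrases this direction contrapositively, assuming $\varphi\neq\psi$ and concluding non-derivability). No gaps; the remark about needing validity over the full model class rather than the extensional models is a sensible, if unnecessary, precaution.
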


\begin{proof}
If $\varphi=\psi$, then by identity axiom (id1): $\vdash_{\mathit{SCI}}\varphi\equiv\psi$. On the other hand, if $\varphi\neq\psi$, then we have $(\mathcal{M},\gamma)\nvDash\varphi\equiv\psi$ for the intensional model constructed above and thus $\varphi\equiv\psi$ is not logically valid. Soundness yields $\nvdash_{\mathit{SCI}}\varphi\equiv\psi$. 
\end{proof}

In the remainder of this section, we show that some relevant modal principles can be restored in the pure $\mathit{SCI}$, i.e. in $\mathit{SCI}$ with no additional axioms. The representation of certain Lewis-style modal systems by means of appropriate $\mathit{SCI}$-extensions will be the topic of the next section.\\

For $\varphi\in Fm_\equiv$, we define 
\begin{equation}\label{330}
\square\varphi := (\varphi\equiv\top).
\end{equation}

\begin{theorem}\label{340}
Let $\mathcal{M}$ be an $\mathit{SCI}$-model. Then the following are equivalent:
\begin{enumerate}
\item $\mathcal{M}$ is based on a Boolean algebra, i.e. its $\{f_\vee,f_\wedge,f_\neg,f_\bot,f_\top\}$-reduct is a Boolean algebra.
\item For all formulas $\chi$ having the form of a classical tautology, and for all formulas $\varphi$ and $\psi$, model $\mathcal{M}$ validates $\square\chi$ and $(\varphi\equiv\psi)\leftrightarrow \square(\varphi\leftrightarrow\psi)$.
\end{enumerate}
\end{theorem}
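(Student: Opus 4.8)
The plan is to strip away the modal wrapping and reduce everything to the question of when a formula's value equals the top element $f_\top$ \emph{exactly}, since that exactness is the only feature separating a Boolean prealgebra from a genuine Boolean algebra. Two reformulations will carry the whole argument. First, for any formula $\xi$ I claim $\mathcal{M}\vDash\square\xi$ iff $\gamma(\xi)=f_\top$ for every assignment $\gamma$: this is immediate from $\gamma(\square\xi)=f_\equiv(\gamma(\xi),f_\top)$ together with the defining property $f_\equiv(m,m')\in\mathit{TRUE}\Leftrightarrow m=m'$ of Definition~\ref{200}. Second, because $\mathit{TRUE}$ is an ultrafilter, satisfaction behaves classically (two-valued), so a biconditional is satisfied exactly when its two sides agree on membership in $\mathit{TRUE}$; combining this with Corollary~\ref{230} and the first reformulation, $\mathcal{M}\vDash(\varphi\equiv\psi)\leftrightarrow\square(\varphi\leftrightarrow\psi)$ holds iff for every $\gamma$ one has $\gamma(\varphi)=\gamma(\psi)$ precisely when $\gamma(\varphi\leftrightarrow\psi)=f_\top$.

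For (i)$\Rightarrow$(ii) I would reason inside the Boolean algebra. If the operations of $\mathcal{M}$ form a Boolean algebra, then any $\chi$ of the form of a classical tautology is interpreted by the top element under every assignment, so $\gamma(\chi)=f_\top$ and the first reformulation delivers $\mathcal{M}\vDash\square\chi$. For the second conjunct, put $a=\gamma(\varphi)$, $b=\gamma(\psi)$; then $\gamma(\varphi\leftrightarrow\psi)=f_\wedge\bigl(f_\rightarrow(a,b),f_\rightarrow(b,a)\bigr)$, and in a Boolean algebra this equals $f_\top$ iff $a=b$. By the second reformulation this is precisely the validity of $(\varphi\equiv\psi)\leftrightarrow\square(\varphi\leftrightarrow\psi)$.

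For (ii)$\Rightarrow$(i) I would run the reformulations backwards. Fix any equation $s=t$ valid in all Boolean algebras, with $s,t$ terms in the operations $f_\vee,f_\wedge,f_\neg,f_\bot,f_\top$. Then $s\leftrightarrow t$ has the form of a classical tautology, so the first conjunct of (ii) and the first reformulation give $\gamma(s\leftrightarrow t)=f_\top$ for every $\gamma$; feeding this into the second reformulation with $\varphi:=s$ and $\psi:=t$ yields $\gamma(s)=\gamma(t)$ for every $\gamma$. Since each element of $M$ equals $\gamma(x)$ for a suitable assignment, $s$ and $t$ induce the same operation, i.e. the equation holds as a genuine equality in $\mathcal{M}$. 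Ranging over all defining identities of Boolean algebras shows the operation reduct to be a Boolean algebra; applying the same step to the instance $(x\rightarrow y)\leftrightarrow(\neg x\vee y)$ additionally forces $f_\rightarrow$ to be the Boolean implication.

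I expect the main obstacle to be the direction (i)$\Rightarrow$(ii), and specifically the exactness issue: in a general Boolean prealgebra a tautology $\chi$ is only guaranteed to satisfy $\gamma(\chi)\approx f_\top$, not $\gamma(\chi)=f_\top$, so $\square\chi$ can genuinely fail. Condition (i) is exactly what removes this slack, but the argument works only if $f_\rightarrow$ is the genuine Boolean implication $f_\vee(f_\neg(\,\cdot\,),\,\cdot\,)$: otherwise even $x\rightarrow x$ could take a value merely $\approx f_\top$ and refute $\square(x\rightarrow x)$. Hence I would read "based on a Boolean algebra" as asserting that \emph{all} the Boolean operations, implication included, behave genuinely; the preceding (ii)$\Rightarrow$(i) argument shows this reading is the correct one, since it recovers the Boolean behaviour of $f_\rightarrow$ as well. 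Once the two reformulations are in place, everything else is routine.
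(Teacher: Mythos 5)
Your proof is correct and is essentially the paper's own argument, written out in more detail: direction (i)$\Rightarrow$(ii) rests on the two Boolean-algebra facts that tautologies denote $f_\top$ and that $\gamma(\varphi\leftrightarrow\psi)=f_\top$ iff $\gamma(\varphi)=\gamma(\psi)$, and direction (ii)$\Rightarrow$(i) converts the validated formulas $\square(s\leftrightarrow t)$ and $(s\equiv t)\leftrightarrow\square(s\leftrightarrow t)$, for tautologies $s\leftrightarrow t$ coming from the Boolean axioms, into genuine equations of the model. The one place you go beyond the paper is the closing remark on $f_\rightarrow$, and you are right that it matters: as literally stated, (i) constrains only the $\{f_\vee,f_\wedge,f_\neg,f_\bot,f_\top\}$-reduct, and under that reading (i)$\Rightarrow$(ii) is actually false. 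Concretely, take the four-element Boolean algebra with atoms $a,b$, let $\preceq$ be the preorder induced by the quotient modulo the ideal $\{f_\bot,b\}$ (so $a\approx f_\top$ and $b\approx f_\bot$), let $\mathit{TRUE}=\{a,f_\top\}$, let $f_\equiv(m,m')$ be $f_\top$ if $m=m'$ and $f_\bot$ otherwise, and set $f_\rightarrow(m,m'):=f_\wedge(f_\vee(f_\neg(m),m'),a)$; all clauses of Definitions \ref{100} and \ref{200} are satisfied and the reduct is a Boolean algebra, yet $\gamma(x\rightarrow x)=a\neq f_\top$ under every assignment, so $\square(x\rightarrow x)$ is not valid. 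Hence the theorem holds only under your stronger reading of ``based on a Boolean algebra'', i.e.\ with $f_\rightarrow(m,m')=f_\vee(f_\neg(m),m')$ as part of the hypothesis; this is what the paper's proof tacitly uses when it invokes $f_\rightarrow(m,m')=f_\top\Leftrightarrow m\le m'$, and, as you observe, it is exactly the property that direction (ii)$\Rightarrow$(i) gives back, so the equivalence is genuine once (i) is read that way.
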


\begin{proof}
If $\mathcal{M}$ is a Boolean algebra, then all theorems of $\mathit{CPC}$, as well as their substitution instances, are evaluated by the top element $f_\top$ under any assignment. It is also known that the equivalence $f_\rightarrow(m,m')=f_\top$ $\Leftrightarrow$ $m\le m'$ holds in every Boolean algebra (actually, in every Heyting algebra). Then it is clear that (i) implies (ii). Now, suppose (ii) holds true. Then $\mathcal{M}$ validates in particular $\square(\varphi\leftrightarrow\psi)$ whenever $\varphi\leftrightarrow\psi$ is a classical tautology. Since $(\varphi\equiv\psi)\leftrightarrow \square(\varphi\leftrightarrow\psi)$ is valid in $\mathcal{M}$, we have $\mathcal{M}\vDash \varphi\equiv\psi$ for all Boolean equations $\varphi\equiv\psi$ that axiomatize the class of Boolean algebras. Hence, $\mathcal{M}$ itself is based on a Boolean algebra. 
\end{proof}

\begin{definition}\label{350}
$\mathit{SCI}^+$ is the logic that results from $\mathit{SCI}$ by adding the following axioms:
\begin{itemize}
\item $\square\chi$ whenever $\chi$ has the form of a classical tautology,
\item $(\varphi\equiv\psi)\leftrightarrow \square(\varphi\leftrightarrow\psi)$.
\end{itemize}
\end{definition}

The next result then follows from Theorem \ref{340}.

\begin{corollary}\label{360}
The $\mathit{SCI}$-extension $\mathit{SCI}^+$ is sound and complete w.r.t. the class of those $\mathit{SCI}$-models which are based on Boolean algebras. As a consequence, $\mathit{SCI}^+$ coincides with the known $\mathit{SCI}$-theory $\mathit{WB}$.\footnote{Theory $\mathit{WB}$ is discussed in some works on non-Fregean logic (see, e.g. \cite{waw} for a detailed presentation).}
\end{corollary}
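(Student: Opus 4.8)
The plan is to read $\mathit{SCI}^+$ as $\mathit{SCI}$ augmented by the two schemes of Definition \ref{350}, and to prove soundness and completeness separately, the whole argument resting on Theorem \ref{340} and on the canonical model already used for Theorem \ref{250}. Soundness is the easy half. Fix any $\mathit{SCI}$-model $\mathcal{M}$ whose reduct is a Boolean algebra. Every $\mathit{SCI}$-axiom is valid in $\mathcal{M}$ by the soundness direction of Theorem \ref{250}, and the implication (i)$\Rightarrow$(ii) of Theorem \ref{340} tells us that $\mathcal{M}$ validates $\square\chi$ for each classical tautology $\chi$ as well as $(\varphi\equiv\psi)\leftrightarrow\square(\varphi\leftrightarrow\psi)$ for all $\varphi,\psi$; thus all axioms of $\mathit{SCI}^+$ are valid in $\mathcal{M}$. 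Since Modus Ponens preserves satisfaction at a fixed interpretation, any $\varphi$ with $\varPhi\vdash_{\mathit{SCI}^+}\varphi$ is satisfied by every Boolean-algebra-based interpretation satisfying $\varPhi$, which is exactly soundness with respect to that class.

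For completeness I would rerun the Lindenbaum-style construction preceding Theorem \ref{250}. Assuming $\varPhi\nvdash_{\mathit{SCI}^+}\varphi$, the set $\varPhi\cup\{\neg\varphi\}$ is $\mathit{SCI}^+$-consistent, and I extend it by Zorn's Lemma to a maximal $\mathit{SCI}^+$-consistent set $\varPsi$ (in particular $\varPsi$ is negation-complete and deductively closed, so the construction applies verbatim). This produces an $\mathit{SCI}$-model $\mathcal{M}$ with universe $M=Fm_\equiv/{\cong}$, where $\varphi\cong\psi\Leftrightarrow(\varphi\equiv\psi)\in\varPsi$, and a canonical assignment $\gamma\colon x\mapsto[x]$ with $(\mathcal{M},\gamma)\vDash\chi\Leftrightarrow\chi\in\varPsi$; hence $(\mathcal{M},\gamma)\vDash\varPhi$ while $(\mathcal{M},\gamma)\nvDash\varphi$. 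The decisive step is to check that $\mathcal{M}$ is based on a Boolean algebra. For classically equivalent $\varphi,\psi$ the biconditional $\varphi\leftrightarrow\psi$ is a tautology, so by the $\mathit{SCI}^+$-axioms both $\square(\varphi\leftrightarrow\psi)\in\varPsi$ and $(\varphi\equiv\psi)\leftrightarrow\square(\varphi\leftrightarrow\psi)\in\varPsi$, whence $(\varphi\equiv\psi)\in\varPsi$, i.e. $[\varphi]=[\psi]$. Therefore $\cong$ already collapses classical equivalence, every Boolean equation holds in $M$ as a genuine identity (not merely modulo $\approx$), and the $\{f_\vee,f_\wedge,f_\neg,f_\bot,f_\top\}$-reduct of $\mathcal{M}$ is a Boolean algebra, i.e. condition (i) of Theorem \ref{340} holds. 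Thus $(\mathcal{M},\gamma)$ is the desired Boolean-algebra-based countermodel, and the contrapositive gives completeness.

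I expect the one genuinely delicate point to be exactly this last verification. The trap to avoid is that Theorem \ref{340} demands \emph{validity} of the new schemes in $\mathcal{M}$, whereas a countermodel extracted naively from Theorem \ref{250} would only \emph{satisfy} them under a single assignment; working with the canonical model closes this gap for free, since its canonical assignment names every element of $M$ and the collapse of $\cong$ makes the reduct Boolean outright. Finally, for the identification $\mathit{SCI}^+=\mathit{WB}$ I would simply invoke the known characterization of $\mathit{WB}$ (see \cite{waw}) as the $\mathit{SCI}$-theory determined by the class of Boolean-algebra-based models: two axiomatic extensions of $\mathit{SCI}$ that are sound and complete for one and the same class of models have the same theorems, hence coincide as logics.
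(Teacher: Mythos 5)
Your proposal is correct and takes essentially the same route as the paper, which derives the corollary from Theorem \ref{340} (soundness via the direction (i)$\Rightarrow$(ii)) combined with the canonical-model construction underlying Theorem \ref{250} (completeness via a maximal $\mathit{SCI}^+$-consistent extension). Your explicit treatment of the validity-versus-satisfaction point --- noting that every element of the canonical model is named by a formula, so the instances of the $\mathit{SCI}^+$ schemes contained in $\varPsi$ force all Boolean equations to hold as genuine identities on the quotient --- is exactly the detail the paper leaves implicit.
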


The question arises whether theory $\mathit{SCI}^+$ contains further interesting modal laws. Using Corollary \ref{360}, we may argue semantically showing that the following formulas are theorems of $\mathit{SCI^+}$:
\begin{itemize}
\item $\square\varphi\rightarrow\varphi$
\item $\square(\varphi\rightarrow\psi)\rightarrow(\square\varphi\rightarrow\square\psi)$.
\end{itemize}
In fact, given a Boolean algebra, the top element $f_\top$ is contained in every ultrafilter; and for any elements $m,m'$: if $m\le m'$, then $m=f_\top$ implies $m'=f_\top$. Thus, the validity of the above formulas is justified. However, some principles of normal Lewis systems are not valid. For instance, the full necessitation rule does not hold. As a contra-example, we consider the Boolean algebra $2^2$ with elements $\varnothing$, $\{0\}$, $\{1\}$, $\{0,1\}$ and set-theoretic inclusion as lattice order, along with the ultrafilter $\mathit{TRUE}=\{\{0\},\{0,1\}\}$ and operation $f_\equiv$ defined by $f_\equiv(m,m'):=\{0\}\in\mathit{TRUE}$ if $m=m'$, and $f_\equiv(m,m')=\{1\}\notin\mathit{TRUE}$ otherwise. Then $f_\square(\{0\})=f_\equiv(\{0\},f_\top)=\{1\}\not\le\{0\}$. Thus, $\square (\square\varphi\rightarrow\varphi)$ is not valid.

\begin{definition}\label{370}
In some analogy to Lewis modal system $\mathit{S3}$, we define the following extension of $\mathit{SCI}^+$: $\mathit{SCI}_3$ is the logic that results from $\mathit{SCI}^+$ by adding all formulas of the form $\square(\varphi\rightarrow\psi)\rightarrow\square(\square\varphi\rightarrow\square\psi)$ as theorems. 
\end{definition}

Note, however, that the alleged analogy to Lewis system $\mathit{S3}$ is rather weak. For instance, $\square (\square(\varphi\rightarrow\psi)\rightarrow\square(\square\varphi\rightarrow\square\psi))$ is a theorem of $\mathit{S3}$ but not of $\mathit{SCI}_3$.

\begin{definition}\label{380}
An $\mathit{SCI}$-model $\mathcal{M}$ is an $\mathit{SCI}_3$-model if $\mathcal{M}$ is based on a Boolean algebra and satisfies the following condition for all $m,m'\in M$:
\begin{equation*}
m\le m' \Rightarrow f_\square(m)\le f_\square(m'),
\end{equation*}
where $f_\square(m):=f_\equiv(m,f_\top)$. That is, $f_\square$ is monotonic on $M$.
\end{definition}

\begin{corollary}\label{390}
Logic $\mathit{SCI}_3$ is sound and complete w.r.t. the class of $\mathit{SCI}_3$-models.
\end{corollary}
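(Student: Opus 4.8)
The plan is to prove the two directions separately, reducing everything to the completeness of $\mathit{SCI}^+$ (Corollary \ref{360}) and then isolating the single extra ingredient that distinguishes $\mathit{SCI}_3$, namely the monotonicity of $f_\square$ required by Definition \ref{380}. Throughout, $\varPhi\Vdash_{\mathit{SCI}_3}\varphi$ is read over the class of $\mathit{SCI}_3$-models in the sense of Definition \ref{220}.

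For soundness I would first observe that every $\mathit{SCI}_3$-model is, by Definition \ref{380}, an $\mathit{SCI}$-model based on a Boolean algebra, hence validates all theorems of $\mathit{SCI}^+$ by Corollary \ref{360}; moreover Modus Ponens preserves membership in the ultrafilter $\mathit{TRUE}$. It remains only to check the new scheme $\square(\varphi\rightarrow\psi)\rightarrow\square(\square\varphi\rightarrow\square\psi)$. Fix an assignment $\gamma$. Since the model is Boolean and $\mathit{TRUE}$ is prime, an implication is true exactly when truth of its antecedent forces truth of its consequent; and by the defining property of $f_\equiv$ together with \eqref{330} one has $(\mathcal{M},\gamma)\vDash\square\chi$ iff $\gamma(\chi)=f_\top$. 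Hence $\square(\varphi\rightarrow\psi)$ is true iff $\gamma(\varphi)\le\gamma(\psi)$, and $\square(\square\varphi\rightarrow\square\psi)$ is true iff $f_\square(\gamma(\varphi))\le f_\square(\gamma(\psi))$. The axiom therefore reduces to the implication $\gamma(\varphi)\le\gamma(\psi)\Rightarrow f_\square(\gamma(\varphi))\le f_\square(\gamma(\psi))$, which is precisely the monotonicity clause of Definition \ref{380}. This gives soundness.

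For completeness I would reuse the Lindenbaum/canonical-model construction carried out before Theorem \ref{250}, now applied to $\mathit{SCI}_3$. Assuming $\varPhi\nvdash_{\mathit{SCI}_3}\varphi$, the set $\varPhi\cup\{\neg\varphi\}$ is $\mathit{SCI}_3$-consistent (by the classical deduction reasoning available in any MP-extension of $\mathit{CPC}$) and extends to a maximal $\mathit{SCI}_3$-consistent set $\varPsi$. The quotient $M=Fm_\equiv/{\cong}$ with the canonical assignment $\gamma(x)=[x]$ yields an $\mathit{SCI}$-model with $(\mathcal{M},\gamma)\vDash\chi$ iff $\chi\in\varPsi$, so it satisfies $\varPhi$ and refutes $\varphi$. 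The work is to verify that this model is an $\mathit{SCI}_3$-model. Since $\mathit{SCI}_3\supseteq\mathit{SCI}^+$, the set $\varPsi$ contains every $\mathit{SCI}^+$-axiom, and, as in the proof of Corollary \ref{360}, for each Boolean identity $s=t$ the tautology $s\leftrightarrow t$ gives $\square(s\leftrightarrow t)\in\varPsi$ and hence $s\equiv t\in\varPsi$, so $[s]=[t]$ and the reduct of $M$ is genuinely a Boolean algebra. Monotonicity of $f_\square$ then follows from the new axiom: using $f_\square([\alpha])=[\square\alpha]$, one checks that $[\alpha]\le[\beta]$ iff $\square(\alpha\rightarrow\beta)\in\varPsi$ and $f_\square([\alpha])\le f_\square([\beta])$ iff $\square(\square\alpha\rightarrow\square\beta)\in\varPsi$; since $\varPsi$ is deductively closed and contains $\square(\alpha\rightarrow\beta)\rightarrow\square(\square\alpha\rightarrow\square\beta)$, Modus Ponens delivers the required implication. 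Hence $\mathcal{M}$ is an $\mathit{SCI}_3$-model, which establishes completeness.

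I expect the only non-routine point to be the bookkeeping that the canonical structure is \emph{based on a Boolean algebra} rather than merely on a Boolean prealgebra; this is exactly where the $\mathit{SCI}^+$-axioms (already absorbed into Corollary \ref{360}) do their work, after which everything specific to $\mathit{SCI}_3$ collapses to the clean equivalence between the axiom scheme of Definition \ref{370} and the monotonicity of $f_\square$.
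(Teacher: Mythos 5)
Your proposal is correct and takes essentially the same route as the paper: soundness by reducing the scheme $\square(\varphi\rightarrow\psi)\rightarrow\square(\square\varphi\rightarrow\square\psi)$ to the monotonicity clause of Definition \ref{380}, and completeness by reusing the canonical (Lindenbaum) model of Theorem \ref{250}, using $\mathit{SCI}^+\subseteq\mathit{SCI}_3$ to conclude that this model is based on a Boolean algebra and then using deductive closure of $\varPsi$ under the new axiom scheme to obtain monotonicity of $f_\square$ via $[\varphi]\le[\psi]\Leftrightarrow\square(\varphi\rightarrow\psi)\in\varPsi$. The only difference is that you spell out the soundness check and the Boolean-algebra verification, which the paper compresses into ``one easily checks'' and a reference to $\mathit{SCI}^+$.
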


\begin{proof}
One easily checks that every $\mathit{SCI}_3$-model validates formulas of the form $\square(\varphi\rightarrow\psi)\rightarrow\square(\square\varphi\rightarrow\square\psi)$. In order to prove completeness, it is enough to show that the constructed model in the proof of Theorem \ref{250} above satisfies the condition of monotonicity of $f_\square$. Since $\mathit{SCI}_3$ contains $\mathit{SCI}^+$, we already know that that model is a Boolean algebra. So for two elements $[\varphi]$ and $[\psi]$, suppose $[\varphi]\le [\psi]$ (where $\le$ is the lattice order). Then $(\varphi\rightarrow\psi) \equiv\top\in\varPsi$. That is, $\square(\varphi\rightarrow\psi)\in \varPsi$ and thus $\square(\square\varphi\rightarrow\square\psi)\in\varPsi$. But then $(\square\varphi\rightarrow\square\psi)\equiv\top\in\varPsi$ and thus $[\square\varphi\rightarrow\square\psi]=[\top]$, i.e. $f_\square([\varphi])=[\square\varphi]\le [\square\psi]=f_\square([\psi])$.
\end{proof}

We are interested in conditions that ensure, in some precise sense, complete restorations of some Lewis-style modal systems, in particular of $\mathit{S3}$--$\mathit{S5}$. It turns out that principle $(\varphi\equiv\psi)\leftrightarrow \square(\varphi\leftrightarrow\psi)$, valid in $\mathit{SCI}^+$, is too weak for this purpose. In fact, we must postulate the equation $(\varphi\equiv\psi)\equiv\square(\varphi\leftrightarrow\psi)$, i.e. we must identify propositional identity with strict equivalence. These topics will be studied in section 5.

\section{Some Lewis-style modal systems and their algebraic semantics}

The goal of this section is to revise some Lewis-style modal systems in the vicinity of $\mathit{S3}$ (more precisely, systems based on a logic called $\mathit{S1SP}$) which in the subsequent section then will be shown to be dual, in some precise sense, to certain $\mathit{SCI}$-theories. Our object language is now the language of propositional modal logic $Fm_\square$, i.e. the set of formulas inductively defined over the set of variables $V=\{x_0,x_1,...\}$, logical connectives $\bot, \top, \vee, \wedge, \neg, \rightarrow$ and the modal operator $\square$. Thus, the languages $Fm_\equiv$ and $Fm_\square$ share the `pure' propositional part based on the logical connectives. We introduce an `identity connective' defined by strict equivalence:
\begin{equation}\label{405}
(\varphi\equiv\psi) := (\square(\varphi\rightarrow\psi)\wedge\square(\psi\rightarrow\varphi)).
\end{equation}
It is evident that under this interpretation, all Lewis modal systems $\mathit{S1}$--$\mathit{S5}$ satisfy Suszko's identity axioms (id1) $\varphi\equiv\varphi$ and (id2) $(\varphi\equiv\psi)\rightarrow (\varphi\rightarrow\psi)$. Moreover, $\mathit{S3}$ also satisfies the remaining identity axioms, i.e. SP ( where, of course, identity is given as strict equivalence according to \eqref{405} above). $\mathit{S3}$ is the weakest Lewis modal system containing SP (cf. \cite{lewjlc1, lewsl}). In the following, we recall definitions of some relevant Lewis-style modal systems and consider an algebraic semantics which can be immediately translated into $\mathit{SCI}$-semantics, and vice-versa. We adopt that particular approach to algebraic semantics from \cite{lewjlc1}. \\

Lewis system $\mathit{S1}$ can be defined in the following way (cf., e.g., \cite{hugcre}). All formulas of the following form are axioms:
\begin{itemize}
\item tautologies (and their substitution-instances) of $\mathit{CPC}$
\item $\square\varphi\rightarrow\varphi$
\item $(\square(\varphi\rightarrow\psi)\wedge \square(\psi\rightarrow\chi))\rightarrow\square(\varphi\rightarrow\chi)$ (transitivity of strict implication)
\end{itemize}
The inference rules are Modus Ponens MP, Axiom Necessitation AN ``If $\varphi$ is an axiom, then $\square\varphi$ is a theorem", and Substitution of Proved Strict Equivalents SPSE ``If $\varphi\equiv\psi$ is a theorem, then so is $\chi[x:=\varphi]\equiv\chi[x:=\psi]$". \\

Lewis system $\mathit{S3}$ results from $\mathit{S1}$ by adding\\

(S3) $\square(\varphi\rightarrow\psi)\rightarrow\square(\square\varphi\rightarrow\square\psi)$\\

as an axiom scheme to $\mathit{S1}$. Of course, rule (AN) now applies also to (S3). Rule SPSE can be ignored since it is derivable from the rest. \\
Lewis system $\mathit{S4}$ results from $\mathit{S3}$ by adding \\

(S4) $\square\varphi\rightarrow\square\square\varphi$\\

as an axiom scheme (rule (AN) now applies also to (S4)). Finally, $\mathit{S5}$ results from $\mathit{S4}$ by adding\\

(S5) $\neg\square\varphi\rightarrow\square\neg\square\varphi$\\

as an axiom scheme. 

We do not consider Lewis system $\mathit{S2}$ since it is apparently not susceptible to our algebraic semantics. Recall, however, that $\mathit{S2}$ can be captured by a non-normal Kripke-style semantics. There is no known natural semantics for Lewis system $\mathit{S1}$ (cf. \cite{hugcre}). If we strengthen the $\mathit{S1}$-rule SPSE to our stronger Substitution Principle SP, $(\varphi\equiv\psi)\rightarrow (\chi[x:=\varphi]\equiv\chi[x:=\psi])$, and add it as a theorem scheme to $\mathit{S1}$ (i.e., SP is regarded a scheme of theorems; recall that AN is not applicable to theorems), then we obtain modal system $\mathit{S1+SP}$ which was introduced and studied in \cite{lewjlc1}. Simplifying notation, we will refer to that system as $\mathit{S1SP}$ instead of $\mathit{S1+SP}$. In contrast to $\mathit{S1}$, the stronger system $\mathit{S1SP}$ has a natural model-theoretic semantics which we will recall below.

In system $\mathit{S1}$, derivations from the empty set, i.e. derivations of theorems, are defined as usual. For $\mathcal{L}\in\{\mathit{S1SP}, \mathit{S3}, \mathit{S4}, \mathit{S5}\}$ and $\varPhi\cup\{\varphi\}\subseteq Fm_\square$, we write $\varPhi\vdash_\mathcal{L}\varphi$ if there is a derivation of $\varphi$ from $\varPhi$, i.e. a finite sequence $\varphi_1,...,\varphi_n=\varphi$ such that for each $\varphi_i$, $i\le i\le n$, the following holds: $\varphi_i\in\varPhi$ or $\varphi_i$ is an axiom of $\mathcal{L}$ or $\varphi_i$ is obtained by AN (i.e. $\varphi_i = \square\psi$ for some axiom $\psi$ of $\mathcal{L}$) or $\varphi_i$ is obtained by MP applied to preceding formulas of the sequence. Note that we can do without the full Necessitation Rule ``If $\varphi$ is a theorem, then so is $\square\varphi$". In fact, by induction on derivations one shows that the full Necessitation Rule is derivable in $\mathit{S4}$.

The following result is proven in [\cite{lewjlc1}, Lemma 2.3] where it is originally formulated for logic $\mathit{S1SP}$. The proof given there makes use of SP. However, one recognizes that SP can be replaced by the $\mathit{S1}$-rule SPSE in the proof. Hence, the result also holds in the weaker system $\mathit{S1}$.

\begin{lemma}[\cite{lewjlc1}]\label{410}
Every instance of the following principle N is a theorem of $\mathit{S1}$:
\begin{equation*}
\square\varphi\leftrightarrow (\varphi\equiv\top).
\end{equation*}
\end{lemma}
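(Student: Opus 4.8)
The plan is to unfold the defined identity connective via \eqref{405}, so that $(\varphi\equiv\top)$ becomes $\square(\varphi\rightarrow\top)\wedge\square(\top\rightarrow\varphi)$, and then to show that in $\mathit{S1}$ this conjunction is provably equivalent to $\square\varphi$. I would establish the biconditional $\square\varphi\leftrightarrow(\varphi\equiv\top)$ by reducing everything to two ingredients: that boxes of tautologies are theorems (via AN), and that $\square$ respects the strict equivalence $\varphi\equiv(\top\rightarrow\varphi)$ (via SPSE).

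First I would observe that $\varphi\rightarrow\top$ is a substitution instance of a classical tautology, hence an axiom of $\mathit{S1}$; by Axiom Necessitation AN, $\square(\varphi\rightarrow\top)$ is therefore a theorem. Consequently the first conjunct of $(\varphi\equiv\top)$ is a theorem, so by purely classical propositional reasoning $(\varphi\equiv\top)$ is provably equivalent to its second conjunct $\square(\top\rightarrow\varphi)$: the implication $(\varphi\equiv\top)\rightarrow\square(\top\rightarrow\varphi)$ is a conjunction elimination, and the converse follows because conjoining a theorem is harmless. This reduces the lemma to establishing $\vdash_{\mathit{S1}}\square\varphi\leftrightarrow\square(\top\rightarrow\varphi)$.

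The key step is to derive this latter biconditional. I would note that both $\varphi\rightarrow(\top\rightarrow\varphi)$ and $(\top\rightarrow\varphi)\rightarrow\varphi$ are substitution instances of tautologies, hence axioms, so AN makes $\square(\varphi\rightarrow(\top\rightarrow\varphi))$ and $\square((\top\rightarrow\varphi)\rightarrow\varphi)$ theorems; by the definition \eqref{405} of strict equivalence this means $\varphi\equiv(\top\rightarrow\varphi)$ is a theorem of $\mathit{S1}$. Applying SPSE with the context $\chi:=\square x$, where $x$ does not occur in $\varphi$, so that $\chi[x:=\varphi]=\square\varphi$ and $\chi[x:=(\top\rightarrow\varphi)]=\square(\top\rightarrow\varphi)$, then yields the theorem $\square\varphi\equiv\square(\top\rightarrow\varphi)$. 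From this strict equivalence I would extract the material biconditional $\square\varphi\leftrightarrow\square(\top\rightarrow\varphi)$ by applying the axiom $\square\theta\rightarrow\theta$ to each of its two strict-implication conjuncts. Chaining this with the equivalence from the previous paragraph gives $\square\varphi\leftrightarrow(\varphi\equiv\top)$.

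I expect the only genuinely delicate point to be the use of SPSE in the third paragraph: this is exactly where the original argument in \cite{lewjlc1} invokes the stronger SP, and the replacement of SP by the weaker $\mathit{S1}$-rule SPSE (as announced in the remark preceding the lemma) is what keeps the derivation inside $\mathit{S1}$. Everything else is AN applied to tautologies together with routine classical propositional manipulation, so no further obstacles are anticipated.
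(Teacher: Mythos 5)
Your proof is correct: every step stays within $\mathit{S1}$'s resources (substitution instances of tautologies as axioms, AN, MP, and the rule SPSE applied with context $\square x$), and you rightly keep strict equivalence in its conjunction-of-boxes form throughout, never needing the $\square(\alpha\leftrightarrow\beta)$ abbreviation that is only justified in $\mathit{S1SP}$. The paper itself gives no derivation here — it only cites [\cite{lewjlc1}, Lemma 2.3] and remarks that the SP step there can be replaced by SPSE — and your argument is exactly a correct realization of that sketched strategy, with the SPSE application as the crux, so it is essentially the paper's intended approach.
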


N expresses the fact that there exists exactly one necessary proposition, namely the proposition denoted by $\top$.\\

N would easily follow from distribution principle K, $\square(\varphi\rightarrow\psi)\rightarrow(\square\varphi\rightarrow\square\psi)$.\footnote{Consider classical tautology $\varphi\leftrightarrow (\varphi\leftrightarrow\top)$, rule AN, principle K and MP.} However, K is not available in $\mathit{S1}$. Nevertheless, using N and SP we are able to show the following (cf. \cite{lewjlc1}, Lemma 2.4):

\begin{lemma}[\cite{lewjlc1}]\label{420}
Distribution principle K holds in $\mathit{S1SP}$, i.e. formulas of the form 
\begin{equation*}
\square(\varphi\rightarrow\psi)\rightarrow(\square\varphi\rightarrow\square\psi)
\end{equation*}
are theorems of $\mathit{S1SP}$.
\end{lemma}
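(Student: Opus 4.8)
The plan is to eliminate the modal boxes in favour of the identity connective by means of principle N (Lemma \ref{410}) and then to exploit the Substitution Principle SP to carry out the remaining reasoning. The point is that we may use neither K itself (that is exactly the target) nor the full necessitation rule (which is unavailable in $\mathit{S1SP}$), so the whole argument must be routed through N and SP together with the $\mathit{S1}$-axioms and classical propositional logic.

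First I would record the three instances of N that convert each relevant box into an identity with $\top$: $\square\varphi\leftrightarrow(\varphi\equiv\top)$, $\square\psi\leftrightarrow(\psi\equiv\top)$, and $\square(\varphi\rightarrow\psi)\leftrightarrow((\varphi\rightarrow\psi)\equiv\top)$. Since these are theorems, proving the target formula $\square(\varphi\rightarrow\psi)\rightarrow(\square\varphi\rightarrow\square\psi)$ reduces, by propositional reasoning, to proving $((\varphi\rightarrow\psi)\equiv\top)\rightarrow((\varphi\equiv\top)\rightarrow(\psi\equiv\top))$. I would also note that in $\mathit{S1}$ the relation $\equiv$ is reflexive, symmetric and transitive: reflexivity and symmetry follow by AN applied to the relevant tautologies, while transitivity is precisely the transitivity of strict implication, which is an $\mathit{S1}$-axiom.

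The crucial step uses SP with the context $\chi:=(x\rightarrow\psi)$, where $x$ is chosen not to occur in $\psi$; then $\chi[x:=\varphi]=\varphi\rightarrow\psi$ and $\chi[x:=\top]=\top\rightarrow\psi$, so from the antecedent $\varphi\equiv\top$ we obtain $(\varphi\rightarrow\psi)\equiv(\top\rightarrow\psi)$. Combining this with the other antecedent $(\varphi\rightarrow\psi)\equiv\top$, via symmetry and transitivity of $\equiv$, yields $(\top\rightarrow\psi)\equiv\top$. Separately, $(\top\rightarrow\psi)\equiv\psi$ is an (unconditional) theorem, because both $(\top\rightarrow\psi)\rightarrow\psi$ and $\psi\rightarrow(\top\rightarrow\psi)$ are tautologies, so their necessitations are theorems by AN and their conjunction is exactly this strict equivalence.

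Finally I would chain $(\top\rightarrow\psi)\equiv\top$ with $(\top\rightarrow\psi)\equiv\psi$, once more using symmetry and transitivity of $\equiv$, to get $\psi\equiv\top$, hence $\square\psi$ by N. Assembling all these implications by propositional logic (every CPC-tautology being available as an axiom) delivers the theorem. I expect the main obstacle to be conceptual rather than computational: one must resist invoking full necessitation or K, and instead recognise that the only way to push the hypothesis $\varphi\equiv\top$ \emph{inside} the antecedent of $\varphi\rightarrow\psi$ is to replace $\varphi$ by $\top$ there, which is exactly the service rendered by SP and which pure propositional reasoning (without substitution of equals under a box) cannot provide.
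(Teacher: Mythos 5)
Your proof is correct and takes essentially the approach the paper intends: the paper gives no in-text proof of this lemma, deferring to [\cite{lewjlc1}, Lemma 2.4] with the remark that it follows ``using N and SP,'' which is exactly your strategy of converting each box into an identity with $\top$ via N, using SP with the context $x\rightarrow\psi$ to replace $\varphi$ by $\top$ in the antecedent position, and closing the chain with the $\mathit{S1}$ transitivity axiom and AN-derived strict equivalence $(\top\rightarrow\psi)\equiv\psi$. The reasoning is sound throughout (the only cosmetic slip is that symmetry of $\equiv$ needs only a propositional tautology, not AN), and it matches in spirit the paper's own proof of the companion Lemma \ref{425}.
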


\begin{lemma}\label{425}
Equivalences $\square(\varphi\wedge\psi)\leftrightarrow (\square\varphi\wedge\square\psi)$ are theorems of $\mathit{S1SP}$.
\end{lemma}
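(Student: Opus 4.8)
The plan is to prove the two implications separately. The left-to-right direction $\square(\varphi\wedge\psi)\rightarrow(\square\varphi\wedge\square\psi)$ is the routine one and rests on the distribution principle K already secured in Lemma \ref{420}, while the converse $(\square\varphi\wedge\square\psi)\rightarrow\square(\varphi\wedge\psi)$ is where the real work lies; here I would exploit principle N from Lemma \ref{410} to trade the modal operator for propositional identity and then let the Substitution Principle SP do the job.

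For the easy direction, first note that $(\varphi\wedge\psi)\rightarrow\varphi$ and $(\varphi\wedge\psi)\rightarrow\psi$ are classical tautologies, so by rule AN their necessitations $\square((\varphi\wedge\psi)\rightarrow\varphi)$ and $\square((\varphi\wedge\psi)\rightarrow\psi)$ are theorems of $\mathit{S1}$, hence of $\mathit{S1SP}$. Applying the instances of K provided by Lemma \ref{420} together with MP yields $\square(\varphi\wedge\psi)\rightarrow\square\varphi$ and $\square(\varphi\wedge\psi)\rightarrow\square\psi$, and a purely propositional combination gives $\square(\varphi\wedge\psi)\rightarrow(\square\varphi\wedge\square\psi)$.

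For the converse, I would first record that $\mathit{S1SP}$ contains all the identity axioms (id1)--(id7): (id1) and (id2) hold in every Lewis system and SP supplies the congruence axioms, so in particular symmetry and transitivity of $\equiv$ are derivable exactly as in $\mathit{SCI}$. By N (Lemma \ref{410}) we may replace each occurrence of $\square$ by $(\cdot\equiv\top)$, so it suffices to derive $((\varphi\equiv\top)\wedge(\psi\equiv\top))\rightarrow((\varphi\wedge\psi)\equiv\top)$. Now two instances of SP, taken with the contexts $\chi=(x\wedge\psi)$ and $\chi=(\top\wedge x)$, give the theorems $(\varphi\equiv\top)\rightarrow((\varphi\wedge\psi)\equiv(\top\wedge\psi))$ and $(\psi\equiv\top)\rightarrow((\top\wedge\psi)\equiv(\top\wedge\top))$. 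Since $(\top\wedge\top)\rightarrow\top$ and $\top\rightarrow(\top\wedge\top)$ are tautologies, AN and the definition \eqref{405} of $\equiv$ give the theorem $(\top\wedge\top)\equiv\top$. Chaining these three identities by transitivity of $\equiv$ yields $(\varphi\wedge\psi)\equiv\top$ from the two hypotheses, i.e.\ the desired implication; translating back through N then gives $(\square\varphi\wedge\square\psi)\rightarrow\square(\varphi\wedge\psi)$.

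I expect the main obstacle to be organizing the converse so that it stays within the theorem-level calculus of $\mathit{S1SP}$ without invoking a deduction theorem (which is delicate in Lewis systems because of AN). The chaining argument above is designed precisely to avoid this: every step is an implication between theorems combined only by MP and propositional reasoning, so the resulting implication is itself a theorem. A secondary point to verify carefully is that the substitution instances of SP are formed with a genuinely fresh context variable $x$, and that the interplay between the primitive $\wedge$, the defined connective $\equiv$ of \eqref{405}, and principle N is handled uniformly for $\varphi$, $\psi$, and $\varphi\wedge\psi$.
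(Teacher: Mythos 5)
Your proof is correct, and both directions go through in $\mathit{S1SP}$; however, the organization differs from the paper's in both halves. For the easy direction the paper does not invoke K at all: it uses N (Lemma \ref{410}) to pass from $\square(\varphi\wedge\psi)$ to $\square(\top\rightarrow(\varphi\wedge\psi))$, then the $\mathit{S1}$ transitivity axiom for strict implication together with the AN-theorem $\square((\varphi\wedge\psi)\rightarrow\varphi)$, so that this half already lives in plain $\mathit{S1}$; your route through Lemma \ref{420} is equally valid but leans on SP (via K) where it is not needed. For the converse the paper makes a single SP application with a \emph{boxed} context, $(\psi\equiv\top)\rightarrow(\square(\varphi\wedge y)[y:=\psi]\equiv\square(\varphi\wedge y)[y:=\top])$, which directly yields $\square\psi\rightarrow(\square(\varphi\wedge\top)\rightarrow\square(\varphi\wedge\psi))$, and then eliminates the $\top$ by the AN-theorem $\varphi\equiv(\varphi\wedge\top)$ and one SPSE replacement. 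You instead keep the SP contexts purely propositional and chain identities $(\varphi\wedge\psi)\equiv(\top\wedge\psi)\equiv(\top\wedge\top)\equiv\top$, converting to and from $\square$ only via N. That is a clean, SCI-style argument, but it rests on one extra ingredient you assert rather than prove: transitivity of $\equiv$ as a theorem scheme of $\mathit{S1SP}$. This is indeed derivable (apply SP with context $(A\equiv x)$ to get $(B\equiv C)\rightarrow((A\equiv B)\equiv(A\equiv C))$, then use the $\mathit{S1}$ axiom $\square\chi\rightarrow\chi$ to pass from the outer $\equiv$ to $\rightarrow$, and finish propositionally), so there is no gap, only a step that should be spelled out since the paper never records transitivity for $\mathit{S1SP}$ explicitly. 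In sum: the paper's proof is shorter and shows the left-to-right half needs only $\mathit{S1}$-resources; yours is more modular, reducing the modal claim to a pure identity calculation at the cost of first developing the identity axioms inside $\mathit{S1SP}$. Your attention to choosing the context variable $x$ fresh is well placed and matches the same implicit convention in the paper's own SP instances.
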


\begin{proof}
We show that $\square(\varphi\wedge\psi)\rightarrow (\square\varphi\wedge\square\psi)$ is a theorem. By Lemma \ref{410}, $\square(\varphi\wedge\psi)\leftrightarrow ((\varphi\wedge\psi)\equiv\top)$. In particular, we have the following valid implication: $\square(\varphi\wedge\psi)\rightarrow \square(\top\rightarrow (\varphi\wedge\psi))$. By the transitivity axiom of strict implication of $\mathit{S1}$, $(\square(\top\rightarrow (\varphi\wedge\psi))\wedge\square((\varphi\wedge\psi)\rightarrow\varphi))\rightarrow\square(\top\rightarrow\varphi)$. Note that $\square((\varphi\wedge\psi)\rightarrow\varphi))$ results from an application of rule AN. Then transitivity of implication yields $\square(\varphi\wedge\psi)\rightarrow\square(\top\rightarrow\varphi)$, i.e. $\square(\varphi\wedge\psi)\rightarrow (\varphi\equiv\top)$. By principle N, we get $\square(\varphi\wedge\psi)\rightarrow\square\varphi$. Similarly, we get $\square(\varphi\wedge\psi)\rightarrow\square\psi$ and thus $\square(\varphi\wedge\psi)\rightarrow (\square\varphi\wedge\square\psi)$.\\
Now, we show the converse $(\square\varphi\wedge\square\psi)\rightarrow \square(\varphi\wedge\psi)$ making use of SP. Note that $\square\psi\leftrightarrow (\psi\equiv\top)$ and $(\psi\equiv\top)\rightarrow\square (\varphi\wedge y)[y:=\psi]\equiv \square(\varphi\wedge y)[y:=\top]$ are instances of N and SP, respectively. By transitivity of implication, $\square\psi\rightarrow (\square (\varphi\wedge\psi)\equiv \square(\varphi\wedge \top))$ is a theorem. Thus, $\square\psi\rightarrow (\square(\varphi\wedge \top)\rightarrow \square (\varphi\wedge\psi))$ is a theorem. By rule AN, $\varphi\equiv (\varphi\wedge\top)$ is a theorem. Then we may apply $\mathit{S1}$-rule SPSE (or the stronger SP) and derive $\square\psi\rightarrow (\square\varphi\rightarrow \square (\varphi\wedge\psi))$ which modulo $\mathit{CPC}$ is equivalent to $(\square\varphi\wedge\square\psi)\rightarrow \square(\varphi\wedge\psi)$.
\end{proof}

By Lemma \ref{425}, we may write strict equivalence $\square(\varphi\rightarrow\psi)\wedge \square(\psi\rightarrow\varphi)$ equivalently and shorter as $\square(\varphi\leftrightarrow\psi)$ in systems containing $\mathit{S1SP}$. In $\mathit{S1SP}$, we may also strengthen the result of Lemma \ref{410} as follows.

\begin{lemma}\label{430}
The following scheme $\square N$ is derivable in $\mathit{S1SP}$:
\begin{equation*}
\square\varphi\equiv (\varphi\equiv\top).
\end{equation*}
\end{lemma}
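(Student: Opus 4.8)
The target $\square\varphi\equiv(\varphi\equiv\top)$ strengthens principle N of Lemma \ref{410} from the material biconditional $\square\varphi\leftrightarrow(\varphi\equiv\top)$ to a genuine strict equivalence, i.e. a propositional identity. The plan is therefore to manufacture this identity without ever necessitating a theorem: since rule AN applies only to axioms and the full Necessitation Rule is unavailable here, I cannot simply box the biconditional delivered by Lemma \ref{410}. Instead I will produce a strict equivalence \emph{directly} from a classical tautology via AN, and then transport it underneath the modal operator by means of SP --- which is precisely the ingredient distinguishing $\mathit{S1SP}$ from $\mathit{S1}$.

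Concretely, first I would observe that $\varphi\leftrightarrow(\varphi\leftrightarrow\top)$ is a classical tautology, hence an axiom of $\mathit{S1SP}$. Applying rule AN yields $\square(\varphi\leftrightarrow(\varphi\leftrightarrow\top))$, which, in the abbreviated notation for strict equivalence fixed just before this lemma (justified by Lemma \ref{425}), is exactly the identity
\begin{equation*}
\varphi\equiv(\varphi\leftrightarrow\top).
\end{equation*}
Next I would instantiate SP with the context $\chi:=\square x$, which gives
\begin{equation*}
(\varphi\equiv(\varphi\leftrightarrow\top))\rightarrow(\square\varphi\equiv\square(\varphi\leftrightarrow\top)),
\end{equation*}
and then discharge the antecedent by Modus Ponens to obtain $\square\varphi\equiv\square(\varphi\leftrightarrow\top)$.

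Finally, invoking once more the abbreviation supplied by Lemma \ref{425}, the formula $\square(\varphi\leftrightarrow\top)$ is just $(\varphi\equiv\top)$, so the derived identity is $\square\varphi\equiv(\varphi\equiv\top)$, namely scheme $\square N$. The one point requiring care --- and the place where $\mathit{S1SP}$ rather than $\mathit{S1}$ is genuinely needed --- is the identification of $(\varphi\equiv\top)=\square(\varphi\rightarrow\top)\wedge\square(\top\rightarrow\varphi)$ with $\square(\varphi\leftrightarrow\top)$, which rests on the converse half of Lemma \ref{425} whose proof already used SP. I expect this to be the only real obstacle; the remaining moves are a routine application of AN, SP and MP. Indeed, since $\varphi\equiv(\varphi\leftrightarrow\top)$ is itself a theorem, the lifting step could alternatively be performed by the $\mathit{S1}$-rule SPSE, so the essential dependence on SP is confined to the abbreviation furnished by Lemma \ref{425}.
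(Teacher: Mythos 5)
Your proof is correct and takes essentially the same route as the paper's: the tautology $\varphi\leftrightarrow(\varphi\leftrightarrow\top)$, rule AN to obtain the identity $\varphi\equiv(\varphi\leftrightarrow\top)$, then the SP instance with context $\square x$ and MP to conclude $\square\varphi\equiv\square(\varphi\leftrightarrow\top)$, which is $\square N$ under the abbreviation licensed by Lemma~\ref{425}. Your additional remarks on exactly where Lemma~\ref{425} (and hence SP, or alternatively SPSE) is needed are accurate refinements, not a different argument.
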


\begin{proof}
Note that $\varphi\leftrightarrow (\varphi\leftrightarrow\top)$ is a propositional tautology. Rule AN yields $\Box(\varphi\leftrightarrow (\varphi\leftrightarrow\top))$, i.e. $\varphi\equiv (\varphi\leftrightarrow\top)$. Consider the instance $(\varphi\equiv (\varphi\leftrightarrow\top))\rightarrow (\square x[x:=\varphi]\equiv \square x[x:=(\varphi\leftrightarrow\top)]$ of SP and apply MP. This yields theorem $\square\varphi\equiv \square(\varphi\leftrightarrow\top)$
\end{proof}

In the following definitions, by a Boolean algebra expansion we always mean a structure $\mathcal{M}=(M, \mathit{TRUE}, f_\vee, f_\wedge, f_\neg, f_\bot, f_\top, f_\rightarrow, f_\square)$ which is based on a Boolean algebra with the usual operations along with a designated ultrafilter $\mathit{TRUE}$ and an additional unary function $f_\square$. The induced lattice order is always denoted by $\le$.

\begin{definition}\label{570}
Let $\mathcal{M}$ be a Boolean algebra expansion satisfying the following conditions for all $a,b,c\in M$:\\
(1) $f_\square(a)\in\mathit{TRUE}\Leftrightarrow a=f_\top$\\
(2) $f_\square(a)\le a$\\
(3) $f_\wedge(f_\square(f_\rightarrow(a,b)),f_\square(f_\rightarrow(b,c)))\le f_\square(f_\rightarrow(a,c))$\\
Then we call $\mathcal{M}$ an $\mathit{S1SP}$-algebra.
\end{definition}

Note that conditions (2) and (3) reflect corresponding axioms of $\mathit{S1}$.

\begin{lemma}\label{580}
In every $\mathit{S1SP}$-algebra it holds that 
\begin{equation*}
f_\square(f_\wedge(a,b))\in\mathit{TRUE} \Leftrightarrow f_\wedge(f_\square(a), f_\square(b))\in\mathit{TRUE},
\end{equation*}
for all elements $a,b$, i.e. formulas of the form 
\begin{equation*}
\square(\varphi\wedge\psi)\leftrightarrow (\square\varphi\wedge\square\psi)
\end{equation*}
are valid in the class of $\mathit{S1SP}$-algebras. Moreover, modal principle $K$, 
\begin{equation*}
\square(\varphi\rightarrow\psi)\rightarrow (\square\varphi\rightarrow\square\psi),
\end{equation*} 
is valid in the class of $\mathit{S1SP}$-algebras.
\end{lemma}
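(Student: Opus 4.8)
The plan is to derive both assertions from condition (1) of Definition \ref{570} alone, together with the standard membership laws for the designated ultrafilter $\mathit{TRUE}$ and two elementary facts about Boolean algebras; I expect conditions (2) and (3) to play no role here, even though they encode genuine $\mathit{S1}$-axioms. The proof will run exactly parallel to the syntactic Lemmas \ref{425} and \ref{420}, with condition (1) standing in for principle N. First I would record the ultrafilter laws I need: since $\mathit{TRUE}$ is an ultrafilter on the underlying Boolean algebra, for all $a,b\in M$ we have $f_\wedge(a,b)\in\mathit{TRUE}$ iff $a\in\mathit{TRUE}$ and $b\in\mathit{TRUE}$, and (by primeness) $f_\rightarrow(a,b)\in\mathit{TRUE}$ iff $a\notin\mathit{TRUE}$ or $b\in\mathit{TRUE}$. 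I would also invoke the two Boolean identities $f_\wedge(a,b)=f_\top \Leftrightarrow (a=f_\top \text{ and } b=f_\top)$ and $a\le b \Leftrightarrow f_\rightarrow(a,b)=f_\top$.

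For the first equivalence, apply condition (1) to the element $f_\wedge(a,b)$: this gives $f_\square(f_\wedge(a,b))\in\mathit{TRUE}$ iff $f_\wedge(a,b)=f_\top$, which by the first Boolean identity holds iff $a=f_\top$ and $b=f_\top$. On the other side, the ultrafilter meet-law yields $f_\wedge(f_\square(a),f_\square(b))\in\mathit{TRUE}$ iff both $f_\square(a)\in\mathit{TRUE}$ and $f_\square(b)\in\mathit{TRUE}$, and condition (1) turns this into $a=f_\top$ and $b=f_\top$ as well. Since both sides are equivalent to the same conjunction, the stated equivalence follows, and with it the validity of $\square(\varphi\wedge\psi)\leftrightarrow(\square\varphi\wedge\square\psi)$.

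For principle K, I would first unfold what its validity means. Under an assignment sending $\varphi\mapsto a$ and $\psi\mapsto b$, the formula evaluates to $f_\rightarrow(f_\square(f_\rightarrow(a,b)),\, f_\rightarrow(f_\square(a),f_\square(b)))$, and by the primeness law this element lies in $\mathit{TRUE}$ for all $a,b$ exactly when the implication ``if $f_\square(f_\rightarrow(a,b))\in\mathit{TRUE}$ and $f_\square(a)\in\mathit{TRUE}$, then $f_\square(b)\in\mathit{TRUE}$'' holds. To verify the latter, suppose both hypotheses. Condition (1) turns the first into $f_\rightarrow(a,b)=f_\top$, i.e. $a\le b$, and the second into $a=f_\top$; together these force $b=f_\top$, whence condition (1) once more yields $f_\square(b)\in\mathit{TRUE}$, as required.

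The computations are routine; the only point that deserves care is the reduction of ``validity of K'' to the purely set-theoretic implication between $\mathit{TRUE}$-memberships, which is legitimate precisely because $\mathit{TRUE}$ is prime (an ultrafilter) and not merely a filter closed under $f_\wedge$. I would therefore make the primeness law explicit at the outset, so that both the nested implication occurring in K and the meet-law used for the first equivalence are covered uniformly, and then simply read off both claims from condition (1).
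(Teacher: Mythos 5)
Your proof is correct and follows essentially the same route as the paper: both arguments rest solely on condition (1) of Definition \ref{570} together with the ultrafilter properties of $\mathit{TRUE}$, reducing each claim to the observation that $f_\square(c)\in\mathit{TRUE}$ iff $c=f_\top$. The only difference is presentational: you make explicit (via primeness of the ultrafilter) the reduction of the validity of $K$ to the membership implication, a step the paper performs tacitly.
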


\begin{proof}
By (1), $f_\square(f_\wedge(a,b))\in\mathit{TRUE}$ $\Leftrightarrow$ $a=f_\top$ and $b=f_\top$ $\Leftrightarrow$ $f_\square(a)\in\mathit{TRUE}$ and $f_\square(b)\in\mathit{TRUE}$ $\Leftrightarrow$ $f_\wedge(f_\square(a), f_\square(b))\in\mathit{TRUE}$. The second assertion can be shown as follows: For a given $\mathit{S1SP}$-algebra, suppose $f_\square(f_\rightarrow(a,b))\in\mathit{TRUE}$ and $f_\square(a)\in\mathit{TRUE}$. The former implies $f_\rightarrow(a,b)=f_\top$, i.e. $a\le b$. The latter implies $a=f_\top$. It follows $b=f_\top$ and thus $f_\square(b)\in\mathit{TRUE}$.
\end{proof}

Notice that validity of modal principle $K$ in the class of $\mathit{S1SP}$-algebras does not mean that all instances of $K$ are interpreted by the top element of the given Boolean algebra (as it is the case in normal modal logics). It only means that such instances are interpreted by some element of the ultrafilter $\mathit{TRUE}$, a designated ultrafilter that contains in particular the element $f_\square(f_\top)$. In fact, we cannot choose an arbitrary ultrafilter $\mathit{TRUE}$ of the Boolean algebra: condition (1) of Definition \ref{570} must be fulfilled. In this aspect, our semantic approach differs from the usual one where the involved class of modal algebras usually forms an equational class, i.e. a variety of algebras. Recall that a modal algebra in the usual sense is a Boolean algebra with an operator $f_\square$ satisfying the following sronger conditions for all elements $a,b$:
\begin{equation*}
\begin{split}
&f_\square(f_\wedge(a,b)) = f_\wedge(f_\square(a), f_\square(b))\text{ and }\\
&f_\square(f_\top)=f_\top.
\end{split}
\end{equation*}
It is known that the class of all modal algebras in this sense constitutes algebraic semantics for normal modal system $\mathit{K}$.

Given the modal language $Fm_\square$ and an $\mathit{S1SP}$-algebra $\mathcal{M}$, the notion of an assignment (valuation) $\gamma\colon V\rightarrow M$ is defined as before as a `homomorphism' from $Fm_\square$ to $\mathcal{M}$, in particular: $\gamma(\square\varphi)=f_\square(\gamma(\varphi))$. Also the notion of satisfaction is given in the same way: $(\mathcal{M},\gamma)\vDash\varphi\Leftrightarrow\gamma(\varphi)\in\mathit{TRUE}$. $\mathit{S1SP}$-algebras were introduced in \cite{lewjlc1} (not under this name) to provide a kind of algebraic semantics for Lewis-style modal logic $\mathit{S1SP}$:

\begin{theorem}[\cite{lewjlc1}]\label{600}
$\mathit{S1SP}$ is (strongly) sound and complete with respect to the class of all $\mathit{S1SP}$-algebras.
\end{theorem}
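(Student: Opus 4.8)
The plan is to prove both directions by the standard algebraic method, mirroring the $\mathit{SCI}$ argument given in the proof of Theorem \ref{250}. For soundness I would verify that every axiom is valid and that the rules preserve validity; for (strong) completeness I would run a Lindenbaum--Tarski construction, extending a consistent set to a maximal consistent one and reading off an $\mathit{S1SP}$-algebra from the quotient of $Fm_\square$ modulo provable strict equivalence.

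For soundness, fix an $\mathit{S1SP}$-algebra $\mathcal{M}$ and an assignment $\gamma$. First I would check that each genuine axiom is interpreted by $f_\top$: $\mathit{CPC}$-tautologies and their substitution instances hold in any Boolean algebra, $\square\varphi\rightarrow\varphi$ is top-valued by condition (2) of Definition \ref{570} (since $f_\square(a)\le a$ gives $f_\rightarrow(f_\square(a),a)=f_\top$), and the transitivity axiom of strict implication is top-valued by condition (3). Since each axiom is top-valued and condition (1) gives $f_\square(f_\top)\in\mathit{TRUE}$, rule AN is sound; MP is sound because $\mathit{TRUE}$ is a filter of a Boolean algebra. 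The one point needing the modal reading of $\equiv$ from \eqref{405} is the validity of SP: using condition (1) together with the Boolean structure one shows $\gamma(\varphi\equiv\psi)\in\mathit{TRUE}\Leftrightarrow\gamma(\varphi)=\gamma(\psi)$, whence each SP-instance $(\varphi\equiv\psi)\rightarrow(\chi[x:=\varphi]\equiv\chi[x:=\psi])$ is valid by a routine induction on $\chi$, every operation (including $f_\square$) being a genuine function.

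For completeness I would take a set $\varPhi$ with $\varPhi\nvdash_{\mathit{S1SP}}\varphi$, so that $\varPhi\cup\{\neg\varphi\}$ is consistent, and (the consequence relation being finitary) use Zorn's Lemma to obtain a maximal consistent $\varPsi\supseteq\varPhi\cup\{\neg\varphi\}$. Define $\alpha\cong\beta:\Leftrightarrow(\alpha\equiv\beta)\in\varPsi$ with $\equiv$ as in \eqref{405}; since $\mathit{S1SP}$ contains all instances of SP as theorems, $\cong$ is a congruence of the full structure $(Fm_\square,f_\vee,f_\wedge,f_\neg,f_\bot,f_\top,f_\rightarrow,f_\square)$, compatibility with $\square$ coming from the instance $(\varphi\equiv\psi)\rightarrow(\square\varphi\equiv\square\psi)$ of SP (take $\chi=\square x$). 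Put $M:=Fm_\square/{\cong}$ and $\mathit{TRUE}:=\{[\varphi]\mid\varphi\in\varPsi\}$, which is well defined by (id2). Because $\varPsi$ contains every $\mathit{CPC}$-tautology, hence (via AN and Lemma \ref{425}) every $\square(\alpha\leftrightarrow\beta)$ with $\alpha\leftrightarrow\beta$ a Boolean equation, the Boolean reduct of $M$ is a Boolean algebra, and maximal consistency makes $\mathit{TRUE}$ an ultrafilter. It remains to verify conditions (1)--(3): condition (1), $f_\square([\varphi])\in\mathit{TRUE}\Leftrightarrow[\varphi]=f_\top$, is exactly principle N of Lemma \ref{410}, i.e. $\square\varphi\leftrightarrow(\varphi\equiv\top)\in\varPsi$; conditions (2) and (3) follow by applying AN to the axioms $\square\varphi\rightarrow\varphi$ and the transitivity axiom and translating membership in $\varPsi$ into the lattice order via $[\alpha]\le[\beta]\Leftrightarrow\square(\alpha\rightarrow\beta)\in\varPsi$. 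Thus $\mathcal{M}$ is an $\mathit{S1SP}$-algebra, and the assignment $x\mapsto[x]$ satisfies $\gamma(\varphi)=[\varphi]$ by induction, so $(\mathcal{M},\gamma)\vDash\psi\Leftrightarrow\psi\in\varPsi$; this yields $(\mathcal{M},\gamma)\vDash\varPhi$ but $(\mathcal{M},\gamma)\nvDash\varphi$, completing the argument.

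The main obstacle is the well-definedness of the modal operator $f_\square$ on the quotient: unlike the Boolean operations, its compatibility with $\cong$ is not automatic and relies essentially on SP, which is precisely why the weaker rule SPSE of $\mathit{S1}$ does not suffice and why the construction is carried out in $\mathit{S1SP}$. The second delicate point is condition (1), which pins down that the canonical $f_\square$ has the intended ``unique necessary proposition'' behaviour; this is supplied by principle N (Lemma \ref{410}). A minor care point throughout is the non-standard consequence relation: since AN applies only to axioms, one must confirm that all the auxiliary theorems (N, Lemma \ref{425}, and the SP-instances) are genuinely available in $\varPsi$, which holds because $\varPsi$ contains every theorem of $\mathit{S1SP}$.
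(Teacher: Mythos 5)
Your proposal is correct and follows essentially the same route as the source: the paper itself only cites \cite{lewjlc1} for Theorem \ref{600}, and the proof there (mirrored by the paper's own Lindenbaum--Tarski argument for Theorem \ref{250}) is exactly your construction -- axiom-by-axiom soundness with AN justified by top-valued axioms, and a quotient of $Fm_\square$ modulo provable strict equivalence in which SP secures the congruence property (in particular well-definedness of $f_\square$, where SPSE would fail) and principle N of Lemma \ref{410} yields condition (1) of Definition \ref{570}. You also correctly identify the two genuinely delicate points (compatibility of $\cong$ with $\square$, and the ``unique necessary proposition'' condition), so no gaps remain.
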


\begin{definition}\label{620}
A Boolean algebra expansion $\mathcal{M}$ is an $\mathit{S3}$-algebra if the following hold for all $a,b\in M$:\\
(1) $f_\square(a)\in\mathit{TRUE}\Leftrightarrow a=f_\top$\\
(2) $f_\square(a)\le a$\\
(S3) $f_\square(f_\rightarrow(a,b))\le f_\square(f_\rightarrow(f_\square(a), f_\square(b)))$
\end{definition}

\begin{lemma}\label{650}
Every $\mathit{S3}$-algebra is an $\mathit{S1SP}$-algebra, i.e. particularly condition (3) of Definition \ref{570} is satisfied. Moreover, in every $\mathit{S3}$-algebra, the modal operator $f_\square$ is a monotone function and it holds that
\begin{equation*}
f_\square(f_\wedge(a,b)) = f_\wedge(f_\square(a), f_\square(b)),
\end{equation*}
for all elements $a,b$.
\end{lemma}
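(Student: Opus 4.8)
The plan is to prove the three assertions in the order: monotonicity of $f_\square$, then the meet-equation $f_\square(f_\wedge(a,b)) = f_\wedge(f_\square(a),f_\square(b))$, and finally condition~(3) of Definition~\ref{570}; monotonicity is the tool that drives everything else. To establish it, suppose $a \le b$. In the underlying Boolean algebra this is equivalent to $f_\rightarrow(a,b) = f_\top$, so that $f_\square(f_\rightarrow(a,b)) = f_\square(f_\top)$, which lies in $\mathit{TRUE}$ by condition~(1). Applying (S3) gives $f_\square(f_\top) \le f_\square(f_\rightarrow(f_\square(a),f_\square(b)))$, and since $\mathit{TRUE}$ is upward closed, $f_\square(f_\rightarrow(f_\square(a),f_\square(b))) \in \mathit{TRUE}$. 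Condition~(1) then forces $f_\rightarrow(f_\square(a),f_\square(b)) = f_\top$, i.e. $f_\square(a) \le f_\square(b)$.

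The crucial extra ingredient is an \emph{element-level} distribution law, obtained by chaining (S3) with (2): for all $a,b$ we have $f_\square(f_\rightarrow(a,b)) \le f_\square(f_\rightarrow(f_\square(a),f_\square(b))) \le f_\rightarrow(f_\square(a),f_\square(b))$, the second step being (2). By Boolean residuation this reads $f_\wedge(f_\square(f_\rightarrow(a,b)), f_\square(a)) \le f_\square(b)$, a genuine inequality (not merely a $\mathit{TRUE}$-membership statement as in Lemma~\ref{580}). For the meet-equation the direction $f_\square(f_\wedge(a,b)) \le f_\wedge(f_\square(a),f_\square(b))$ is immediate from monotonicity, since $f_\wedge(a,b)$ lies below both $a$ and $b$. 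For the reverse direction I plan to instantiate the element-level K with $b$ replaced by $f_\wedge(a,b)$, using the Boolean identity $f_\rightarrow(a,f_\wedge(a,b)) = f_\rightarrow(a,b)$ to obtain $f_\wedge(f_\square(f_\rightarrow(a,b)), f_\square(a)) \le f_\square(f_\wedge(a,b))$. Combining this with $f_\square(b) \le f_\square(f_\rightarrow(a,b))$ — which follows from monotonicity and the Boolean inequality $b \le f_\rightarrow(a,b)$ — yields $f_\wedge(f_\square(a),f_\square(b)) \le f_\wedge(f_\square(a),f_\square(f_\rightarrow(a,b))) \le f_\square(f_\wedge(a,b))$, as required.

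Finally, condition~(3) of Definition~\ref{570} follows by assembling the pieces: using the meet-equation, $f_\wedge(f_\square(f_\rightarrow(a,b)), f_\square(f_\rightarrow(b,c))) = f_\square(f_\wedge(f_\rightarrow(a,b), f_\rightarrow(b,c)))$, and since the Boolean transitivity law $f_\wedge(f_\rightarrow(a,b), f_\rightarrow(b,c)) \le f_\rightarrow(a,c)$ holds, monotonicity delivers $\le f_\square(f_\rightarrow(a,c))$. I expect the main obstacle to be precisely the reverse inequality of the meet-equation (equivalently, condition~(3) itself): unlike the $\mathit{TRUE}$-membership version of Lemma~\ref{580}, it is an inequality between arbitrary elements, so the naive case analysis on condition~(1) is of no help. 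The decisive move is to notice that (2) upgrades (S3) into the element-level K inequality above, after which the remaining argument is routine Boolean manipulation together with monotonicity.
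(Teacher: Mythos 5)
Your proof is correct, and while it shares its skeleton with the paper's (monotonicity first, proved in exactly the same way; the easy half of the meet-equation from monotonicity), the two harder steps are handled by a genuinely different decomposition. For the inequality $f_\wedge(f_\square(a),f_\square(b))\le f_\square(f_\wedge(a,b))$, the paper starts from the tautology $\varphi\rightarrow(\psi\rightarrow(\varphi\wedge\psi))$, whose interpretation is $f_\top$, pushes the box inward by two applications of (S3) with `Modus Ponens' and condition (2), and finishes by currying; you instead isolate the element-level inequality $f_\square(f_\rightarrow(a,b))\le f_\rightarrow(f_\square(a),f_\square(b))$ (chaining (S3) with (2)) and instantiate it at $b:=f_\wedge(a,b)$ via the Boolean identity $f_\rightarrow(a,f_\wedge(a,b))=f_\rightarrow(a,b)$. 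That element-level inequality is precisely condition (K) of Definition \ref{660}, so your argument shows in passing that every $\mathit{S3}$-algebra satisfies the distribution law of strong $\mathit{S4}$-algebras as an inequality between arbitrary elements -- a fact the paper never makes explicit (its Lemma \ref{580} gives only the $\mathit{TRUE}$-membership version). Similarly, for condition (3) of Definition \ref{570} the paper re-runs the tautology/(S3)/(1) routine on the transitivity tautology, whereas you obtain it in one line from the meet-equation together with monotonicity applied to the Boolean inequality $f_\wedge(f_\rightarrow(a,b),f_\rightarrow(b,c))\le f_\rightarrow(a,c)$. Your route buys a cleaner order-theoretic argument that replaces repeated $\mathit{TRUE}$-membership bookkeeping with genuine inequalities; the paper's route stays closer to mimicking derivations in the modal calculus (tautology, AN, (S3), MP), which is the style it reuses in the neighboring lemmas such as Lemma \ref{670}.
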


\begin{proof}
Condition (S3) ensures that $f_\square$ is a monotone function: $a\le b$ iff $f_\rightarrow(a,b)=f_\top$ iff $f_\square(f_\rightarrow(a,b))\in\mathit{TRUE}$ $\overset{(S3)}{\Rightarrow}$ $f_\square(f_\rightarrow(f_\square(a), f_\square(b)))\in\mathit{TRUE}$ iff \\
$f_\rightarrow(f_\square(a), f_\square(b))=f_\top$ iff $f_\square(a)\le f_\square(b)$. Note that $f_\wedge(a,b)\le a$ and $f_\wedge(a,b)\le b$. Monotonicity implies 
\begin{equation*}
f_\square(f_\wedge(a,b))\le f_\wedge(f_\square(a),f_\square(b)).
\end{equation*}
On the other hand, $\varphi\rightarrow(\psi\rightarrow(\varphi\wedge\psi))$ is a propositional tautology and therefore denotes the top element, under any assignment. Thus, $f_\square(f_\rightarrow(a,f_\rightarrow(b,f_\wedge(a,b))))\in\mathit{TRUE}$, for any elements $a,b$. Condition (S3) along with `Modus Ponens' yields $f_\square(f_\rightarrow(f_\square(a),f_\square(f_\rightarrow(b,f_\wedge(a,b)))))\in\mathit{TRUE}$, i.e. $f_\square(a)\le f_\square(f_\rightarrow(b,f_\wedge(a,b)))$. Again by (S3), we get $f_\square(f_\rightarrow(b,f_\wedge(a,b)))\le  f_\square(f_\rightarrow(f_\square(b), f_\square(f_\wedge(a,b))))$. Thus, 
\begin{equation*}
\begin{split}
&f_\square(a)\le f_\square(f_\rightarrow(f_\square(b),f_\square(f_\wedge(a,b)))\le f_\rightarrow(f_\square(b),f_\square(f_\wedge(a,b))\text{ and hence}\\ 
&f_\rightarrow(f_\square(a),  f_\rightarrow(f_\square(b),f_\square(f_\wedge(a,b)))=f_\top.
\end{split}
\end{equation*}
The term on the left hand side of the last equation is an interpretation of the formula $x\rightarrow(y\rightarrow z)$ which is logically equivalent to $(x\wedge y)\rightarrow z$. Hence, $f_\rightarrow(f_\wedge(f_\square(a),f_\square(b)),f_\square(f_\wedge(a,b)))=f_\top$, i.e. 
\begin{equation*}
f_\wedge(f_\square(a),f_\square(b))\le f_\square(f_\wedge(a,b)).
\end{equation*}
Finally, $f_\wedge(f_\square(a),f_\square(b)) = f_\square(f_\wedge(a,b))$.\\ 
In order to see that every $\mathit{S3}$-algebra is an $\mathit{S1SP}$-algebra, it is enough to show that condition (3) of Definition \ref{570} follows from the conditions of Definition \ref{620}:\\ 
$((\varphi\rightarrow\psi)\wedge(\psi\rightarrow\chi))\rightarrow(\varphi\rightarrow\chi))$ is a propositional tautology and is therefore interpreted by the top element $f_\top$ of any model. By (1), 
\begin{equation*}
f_\square(f_\rightarrow(f_\wedge(f_\rightarrow(a,b),f_\rightarrow(b,c)), f_\rightarrow(a,c)))\in\mathit{TRUE}.
\end{equation*}
Applying (S3) and `Modus Ponens', we get 
\begin{equation*}
f_\square(f_\rightarrow(f_\square(f_\wedge(f_\rightarrow(a,b),f_\rightarrow(b,c))), f_\square (f_\rightarrow(a,c))))\in\mathit{TRUE}.
\end{equation*}
Since $f_\wedge(f_\square(a),f_\square(b)) = f_\square(f_\wedge(a,b))$, as shown above, we obtain the following: $f_\square(f_\rightarrow(f_\wedge(f_\square(f_\rightarrow(a,b)),f_\square(f_\rightarrow(b,c))), f_\square (f_\rightarrow(a,c))))\in\mathit{TRUE}$. Applying condition (1) yields
\begin{equation*}
f_\rightarrow(f_\wedge(f_\square(f_\rightarrow(a,b)),f_\square(f_\rightarrow(b,c))), f_\square (f_\rightarrow(a,c)))=f_\top,
\end{equation*}
i.e., $f_\wedge(f_\square(f_\rightarrow(a,b)),f_\square(f_\rightarrow(b,c)))\le f_\square(f_\rightarrow(a,c))$, which is precisely condition (3) of Definition \ref{570}.
\end{proof}

\begin{definition}\label{660}
We call a Boolean algebra expansion $\mathcal{M}$ a strong $\mathit{S4}$-algebra if the following conditions hold for all elements $a,b$:\\
(1) $f_\square(a)\in\mathit{TRUE}\Leftrightarrow a=f_\top$\\
(2) $f_\square(a)\le a$\\
(K)  $f_\square(f_\rightarrow(a,b))\le f_\rightarrow(f_\square(a), f_\square(b))$\\
(S4) $f_\square(a)\le f_\square (f_\square(a))$
\end{definition}

\begin{lemma}\label{670}
Every strong $\mathit{S4}$-algebra is an $\mathit{S3}$-algebra.
\end{lemma}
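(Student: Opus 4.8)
The plan is to check that a strong $\mathit{S4}$-algebra $\mathcal{M}$ meets the three defining conditions of an $\mathit{S3}$-algebra in Definition \ref{620}. Conditions (1) and (2) appear verbatim in Definition \ref{660}, so there is nothing to prove for them; the whole task is to derive condition (S3), i.e. the lattice inequality $f_\square(f_\rightarrow(a,b))\le f_\square(f_\rightarrow(f_\square(a),f_\square(b)))$ for all $a,b\in M$. I would reduce this to monotonicity of $f_\square$, which in turn I would reduce to the single auxiliary identity $f_\square(f_\top)=f_\top$.

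First I would establish $f_\square(f_\top)=f_\top$. By (1) we have $f_\square(f_\top)\in\mathit{TRUE}$. Applying (S4) to the element $f_\top$ gives $f_\square(f_\top)\le f_\square(f_\square(f_\top))$, and since $\mathit{TRUE}$ is an ultrafilter and hence upward closed with respect to $\le$, this forces $f_\square(f_\square(f_\top))\in\mathit{TRUE}$. Reading this last membership through (1), now applied to the element $f_\square(f_\top)$, yields $f_\square(f_\top)=f_\top$. With this in hand, monotonicity of $f_\square$ is immediate: if $a\le b$ then $f_\rightarrow(a,b)=f_\top$, so $f_\square(f_\rightarrow(a,b))=f_\square(f_\top)=f_\top$; condition (K) then gives $f_\top\le f_\rightarrow(f_\square(a),f_\square(b))$, i.e. $f_\square(a)\le f_\square(b)$.

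The conclusion (S3) then follows by chaining three facts about the element $c:=f_\rightarrow(a,b)$. By (S4), $f_\square(c)\le f_\square(f_\square(c))$. By (K), $f_\square(c)\le f_\rightarrow(f_\square(a),f_\square(b))$, and applying the monotonicity of $f_\square$ just established to this inequality gives $f_\square(f_\square(c))\le f_\square(f_\rightarrow(f_\square(a),f_\square(b)))$. Transitivity of $\le$ then yields $f_\square(c)\le f_\square(f_\rightarrow(f_\square(a),f_\square(b)))$, which is precisely condition (S3). Thus $\mathcal{M}$ is an $\mathit{S3}$-algebra.

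The hard part will be the very first step, and for a conceptual reason. Conditions (K) and (1) together only deliver the \emph{membership} form of monotonicity, namely $f_\rightarrow(f_\square(a),f_\square(b))\in\mathit{TRUE}$ whenever $a\le b$, whereas condition (S3) is a genuine lattice inequality, i.e. a statement of the form ``$\,\cdots=f_\top$''. The bridge between ``lies in $\mathit{TRUE}$'' and ``equals $f_\top$'' is exactly the identity $f_\square(f_\top)=f_\top$, and the one nonroutine observation is that (S4) combined with the biconditional in (1) is strong enough to force it (whereas, as the counterexample before Definition \ref{370} shows, $f_\square(f_\top)=f_\top$ may fail without (S4)). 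Everything after that identity is a mechanical combination of (K), (S4), and monotonicity.
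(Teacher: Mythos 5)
Your proof is correct and follows essentially the same route as the paper's: both arguments first derive $f_\square(f_\top)=f_\top$ from conditions (1) and (S4), then use (K) to get the intermediate inequality $f_\square(f_\square(f_\rightarrow(a,b)))\le f_\square(f_\rightarrow(f_\square(a),f_\square(b)))$, and finish with (S4) and transitivity. The only difference is presentational: you package the "apply $f_\square$ to an inequality, then use (K)" step as an explicit monotonicity lemma, whereas the paper performs exactly that combination inline via two successive applications of (K).
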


\begin{proof}
It is enough to show that condition (S3) holds in every strong $\mathit{S4}$-algebra. First, we observe that conditions (1) and (S4) imply that $f_\square(f_\top)=f_\top$. Then by (K), $f_\square(f_\rightarrow(f_\square(f_\rightarrow(a,b)),f_\rightarrow(f_\square(a), f_\square(b))))=f_\top$. Again by (K),\\ $f_\rightarrow(f_\square(f_\square(f_\rightarrow(a,b))), f_\square(f_\rightarrow(f_\square(a), f_\square(b))))=f_\top$. That is,\\
$f_\square(f_\square(f_\rightarrow(a,b))) \le f_\square(f_\rightarrow(f_\square(a), f_\square(b)))$. Applying condition (S4), we obtain condition (S3):
$f_\square(f_\rightarrow(a,b))) \le f_\square(f_\rightarrow(f_\square(a), f_\square(b)))$.
\end{proof}

If there is a notion of \textit{strong} $\mathit{S4}$-algebra, one may expect that there is a notion of $\mathit{S4}$-algebra, too. Indeed, $\mathit{S4}$-algebras have been studied in the literature under different labels such as \textit{topological Boolean algebras} or \textit{interior algebras}. An $\mathit{S4}$-algebra (alias interior algebra alias topological Boolean algebra) is usually defined as a Boolean algebra with an operator $f_\square$ (which can be viewed as an interior operator) such that  the following conditions (IA1)--(IA4) are satisfied for all elements $a,b$:\\

\noindent (IA1) $f_\square(a)\le a$\\
(IA2) $f_\square(f_\square(a))=f_\square(a)$\\
(IA3) $f_\square(f_\wedge(a,b)) = f_\wedge(f_\square(a), f_\square(b))$\\
(IA4) $f_\square(f_\top)=f_\top$.

\begin{theorem}\label{680}
Every strong $\mathit{S4}$-algebra is an $\mathit{S4}$-algebra.
\end{theorem}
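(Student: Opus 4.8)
The plan is to verify, for an arbitrary strong $\mathit{S4}$-algebra $\mathcal{M}$, each of the four defining conditions (IA1)--(IA4) of an $\mathit{S4}$-algebra, drawing on conditions (1), (2), (K), (S4) of Definition \ref{660} together with the results already secured at the $\mathit{S3}$-level. Three of the four conditions turn out to be either definitional or already available, so the work concentrates almost entirely on (IA4).

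First, (IA1) is literally condition (2), so nothing is to be done. For (IA2) I would prove the two inequalities separately: the inequality $f_\square(a)\le f_\square(f_\square(a))$ is exactly (S4), while the reverse inequality is obtained by instantiating condition (2) at the element $f_\square(a)$ in place of $a$, which yields $f_\square(f_\square(a))\le f_\square(a)$ at once; antisymmetry of the lattice order $\le$ then gives $f_\square(f_\square(a))=f_\square(a)$. For (IA3) I would simply invoke the existing machinery: by Lemma \ref{670} every strong $\mathit{S4}$-algebra is an $\mathit{S3}$-algebra, and Lemma \ref{650} establishes $f_\square(f_\wedge(a,b)) = f_\wedge(f_\square(a), f_\square(b))$ in every $\mathit{S3}$-algebra, so (IA3) is inherited directly.

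The one genuinely non-trivial point is (IA4), $f_\square(f_\top)=f_\top$, and here I would reuse the observation already made inside the proof of Lemma \ref{670}, namely that conditions (1) and (S4) together force this equation. Condition (1) applied at $a=f_\top$ gives $f_\square(f_\top)\in\mathit{TRUE}$; condition (S4) gives $f_\square(f_\top)\le f_\square(f_\square(f_\top))$, and since $\mathit{TRUE}$ is an ultrafilter and hence upward closed under $\le$, it follows that $f_\square(f_\square(f_\top))\in\mathit{TRUE}$; reading condition (1) once more, now at $a=f_\square(f_\top)$, the membership $f_\square(f_\square(f_\top))\in\mathit{TRUE}$ forces $f_\square(f_\top)=f_\top$. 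This is the only step where one must combine (S4) with the upward closure of the designated ultrafilter $\mathit{TRUE}$ and then reapply the characterization (1) of necessity; the remaining three conditions being immediate, I expect this to be the sole obstacle, and a mild one.
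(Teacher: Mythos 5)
Your proposal is correct and follows essentially the same route as the paper: (IA1) is immediate from condition (2), (IA2) combines (S4) with condition (2) instantiated at $f_\square(a)$, (IA3) is inherited via Lemma \ref{670} and Lemma \ref{650}, and (IA4) follows from condition (1), (S4), and upward closure of $\mathit{TRUE}$, exactly as in the paper's argument. Your write-up is in fact slightly more explicit than the paper's, notably in spelling out the role of the ultrafilter's upward closure and in citing Lemma \ref{650} for the meet-distribution identity rather than conflating it with condition (S3).
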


\begin{proof}
Suppose $\mathcal{M}$ is a strong $\mathit{S4}$-algebra in the sense of Definition \ref{660}. Then (IA1) above holds trivially. (IA2) follows from (IA1) along with condition (S4). (IA3) is condition (S3) which holds by Lemma \ref{670}. By condition (1), $f_\square(f_\top)\in\mathit{TRUE}$. Then, by condition (S4), $f_\square(f_\square(f_\top))\in\mathit{TRUE}$. Again by (1), $f_\square(f_\top)=f_\top$, i.e. (IA4) is satisfied.
\end{proof}

The converse of Theorem \ref{680} is not true. As a contra-example we consider any interior algebra with more than two elements where the interior operator $f_\square$ is the identity: $a\mapsto f_\square(a)=a$. For every ultrafilter $U$, there exists an element $a\in U$ such that $a < f_\top$. Then condition (1) of Definition \ref{660} of a strong $\mathit{S4}$-algebra cannot be satisfied by all elements. An interior algebra gives rise to a strong $\mathit{S4}$-algebra if there is an ultrafilter $\mathit{TRUE}$ such that for any element $a$, $a< f_\top$ implies $f_\square(a)\notin\mathit{TRUE}$.\\
Thus, the class of strong $\mathit{S4}$-algebras is properly contained in the class of all $\mathit{S4}$-algebras. Nevertheless, for a completeness result concerning Lewis modal system $\mathit{S4}$, it is enough to consider only \textit{strong} $\mathit{S4}$-algebras.

\begin{definition}\label{700}
A Boolean algebra expansion $\mathcal{M}$ is called an $\mathit{S5}$-algebra if all elements $a$ satisfy the following:
\begin{equation*}
\begin{split}
f_\square(a)=
\begin{cases}
&f_\top,\text{ if } a=f_\top\\
&f_\bot,\text{ else}
\end{cases}
\end{split}
\end{equation*}
\end{definition}

Note that Definition $\ref{700}$ does not impose any condition on the designated ultrafilter $\mathit{TRUE}$ of the given Boolean algebra expansion. Actually, if we only consider the algebraic properties of an $\mathit{S5}$-algebra, then the designated ultrafilter can be disregarded. The resulting notion of an $\mathit{S5}$-algebra then is equivalent to the usual definitions of $\mathit{S5}$-algebras found in the literature. For example, an $\mathit{S5}$-algebra can be characterized as an interior algebra in which every open element is closed, i.e. where $f_\Diamond (f_\square(a))=f_\square(a)$ holds for every element $a$, with closure operator $f_\Diamond(a) := f_\neg (f_\square (f_\neg(a)))$. In fact, one easily verifies:

\begin{corollary}\label{710}
Let $\mathcal{M}$ be a Boolean algebra expansion. The following are equivalent:
\begin{itemize}
\item $\mathcal{M}$ is an $\mathit{S5}$-algebra.
\item $\mathcal{M}$ is an interior algebra satisfying for all $a\in M$: $f_\Diamond (f_\square(a))=f_\square(a)$.
\end{itemize}
\end{corollary}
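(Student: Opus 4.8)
The plan is to prove the two implications separately, treating the direction from the first bullet to the second as a routine verification and reserving the real work for the converse. Throughout I keep in mind the condition $f_\square(a)\in\mathit{TRUE}\Leftrightarrow a=f_\top$ (condition (1) of Definitions \ref{570}, \ref{620}, \ref{660}), which is built into the notion of $\mathit{S5}$-algebra: it is immediate from Definition \ref{700}, since there $f_\square(a)\in\{f_\bot,f_\top\}$ while any ultrafilter contains $f_\top$ and omits $f_\bot$. As I explain at the end, this condition is exactly what makes the converse go through.

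For the direction from the first bullet to the second, assume $\mathcal{M}$ is an $\mathit{S5}$-algebra and verify (IA1)--(IA4) by a case split on whether $a=f_\top$. When $a=f_\top$ every relevant quantity equals $f_\top$; when $a\ne f_\top$ one has $f_\square(a)=f_\bot$, and using the Boolean facts $f_\square(a)\le a$, the equivalence $f_\wedge(a,b)=f_\top\Leftrightarrow a=f_\top$ and $b=f_\top$, and (for a nontrivial algebra) $f_\bot\ne f_\top$, each of (IA1)--(IA4) reduces to an identity among $f_\bot$ and $f_\top$. The remaining equation $f_\Diamond(f_\square(a))=f_\square(a)$ follows by the same split: evaluating $f_\Diamond(x)=f_\neg(f_\square(f_\neg(x)))$ at $x=f_\top$ gives $f_\neg(f_\square(f_\bot))=f_\neg(f_\bot)=f_\top$, and at $x=f_\bot$ gives $f_\neg(f_\square(f_\top))=f_\neg(f_\top)=f_\bot$, so in both cases $f_\Diamond(f_\square(a))=f_\square(a)$.

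For the converse, assume $\mathcal{M}$ is an interior algebra satisfying $f_\Diamond(f_\square(a))=f_\square(a)$ for all $a$. The key structural fact I would establish first is that the set $O=\{a\in M\mid f_\square(a)=a\}$ of open elements is a Boolean subalgebra: it is closed under $f_\wedge$ by (IA3), closed under $f_\vee$ because (IA3) makes $f_\square$ monotone and hence the join of two opens is again open, and closed under complement precisely because the hypothesis $f_\Diamond f_\square=f_\square$ says opens are closed (so the complement of an open, being the complement of a closed, is open). Next I would invoke condition (1): for an open $o=f_\square(o)$ it yields $o\in\mathit{TRUE}\Leftrightarrow o=f_\top$, whence $\mathit{TRUE}\cap O=\{f_\top\}$. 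Since $\mathit{TRUE}$ is an ultrafilter of the ambient Boolean algebra and $O$ is a subalgebra, the trace $\mathit{TRUE}\cap O$ is an ultrafilter of $O$; an ultrafilter that contains only the top element can only exist on the two-element algebra, so $O=\{f_\bot,f_\top\}$. Because $f_\square(a)\in O$ always (by (IA2)), this forces $f_\square(a)\in\{f_\bot,f_\top\}$, and $f_\square(a)=f_\top\Leftrightarrow a=f_\top$ by (IA1) and (IA4); hence $f_\square(a)=f_\bot$ whenever $a\ne f_\top$, which is exactly Definition \ref{700}.

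The main obstacle is the collapse $O=\{f_\bot,f_\top\}$ in the converse, and it is here that the designated ultrafilter is genuinely needed. Purely algebraically the converse fails: any discrete interior algebra, with $f_\square$ the identity, satisfies $f_\Diamond f_\square=f_\square$ yet is not of the form of Definition \ref{700}, so condition (1) cannot be dropped. The substantive content of the converse is therefore the interaction between the self-duality of opens supplied by $f_\Diamond f_\square=f_\square$ (which makes $O$ a subalgebra) and condition (1) (which pins the trace $\mathit{TRUE}\cap O$ down to $\{f_\top\}$); together they force the open subalgebra to be two-element, and everything else is bookkeeping.
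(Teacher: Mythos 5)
Your algebra is correct throughout, but be precise about what you have actually proved: it is not Corollary \ref{710} as printed, but a corrected version of it. In the converse direction you assume, in addition to the second bullet, condition (1) of Definitions \ref{570}--\ref{660}, namely $f_\square(a)\in\mathit{TRUE}\Leftrightarrow a=f_\top$. That condition is not part of the second bullet: ``interior algebra'' means (IA1)--(IA4), which say nothing about the designated ultrafilter, and the extra equation $f_\Diamond(f_\square(a))=f_\square(a)$ does not mention $\mathit{TRUE}$ either. Your opening justification --- that condition (1) is ``built into the notion of $\mathit{S5}$-algebra'' --- only shows it follows from the \emph{first} bullet, i.e.\ from the conclusion of the converse; so, read as a proof of the literal statement, that step is circular. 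What your argument establishes is the (true) equivalence: $\mathcal{M}$ satisfies Definition \ref{700} if and only if $\mathcal{M}$ is an interior algebra in which every open element is closed \emph{and} condition (1) holds.

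The substantive point, however, is that you are right to insert condition (1), because the corollary as printed is false, for exactly the reason you give at the end: a discrete interior algebra with more than two elements ($f_\square$ the identity, $\mathit{TRUE}$ any ultrafilter) satisfies the second bullet but not Definition \ref{700}. This is the very counterexample the paper itself uses after Theorem \ref{680} to show that not every interior algebra is a strong $\mathit{S4}$-algebra. Definition \ref{700} carves out precisely the \emph{simple} monadic algebras (those whose only open elements are $f_\bot$ and $f_\top$), a proper subclass of the usual $\mathit{S5}$-algebras of the literature; hence the paper's surrounding claim that ``the designated ultrafilter can be disregarded'' and that Definition \ref{700} matches the textbook notion is mistaken. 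The paper offers no proof to compare against (it says only ``one easily verifies''), and your repaired statement with its proof --- in particular the observation that condition (1) forces the open subalgebra $O$ down to $\{f_\bot,f_\top\}$ because the trace $\mathit{TRUE}\cap O$ is an ultrafilter of $O$ containing only $f_\top$ --- is what the corollary should have said. Equivalently, since an interior algebra satisfying condition (1) is exactly a strong $\mathit{S4}$-algebra, the corrected statement reads: $\mathcal{M}$ is an $\mathit{S5}$-algebra iff $\mathcal{M}$ is a strong $\mathit{S4}$-algebra in which every open element is closed.
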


In particular, every $\mathit{S5}$-algebra is an $\mathit{S4}$-algebra (i.e. an interior algebra). Given any $\mathit{S5}$-algebra, condition (1) of Definition \ref{660} is (trivially) satisfied, independently of the choice of the designated ultrafilter $\mathit{TRUE}$. Thus, every $\mathit{S5}$-algebra is also a strong $\mathit{S4}$-algebra.\\

Recall that the relation of satisfaction between $\mathit{S1SP}$-interpretations $(\mathcal{M},\gamma)$ and formulas $\varphi\in Fm_\square$ is given similarly as for $\mathit{SCI}$ models: $(\mathcal{M},\gamma)\vDash\varphi :\Leftrightarrow\gamma(\varphi)\in\mathit{TRUE}$. Also the concept of logical consequence is defined in the usual way. Extending the proof of Theorem \ref{600} in a straightforward way, we get

\begin{theorem}\label{720}
$\mathit{S3}$ ($\mathit{S4}$, $\mathit{S5}$) is strongly sound and complete w.r.t. the class of all $\mathit{S3}$-algebras ((strong) $\mathit{S4}$-algebras, $\mathit{S5}$-algebras), respectively.
\end{theorem}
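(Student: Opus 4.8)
The plan is to follow the Lindenbaum--Tarski argument behind Theorem \ref{600} and adapt it, essentially unchanged, to each of the three systems, exploiting that $\mathit{S3}\subseteq\mathit{S4}\subseteq\mathit{S5}$ all extend $\mathit{S1SP}$ and that the algebra classes nest accordingly: every $\mathit{S3}$-algebra is an $\mathit{S1SP}$-algebra (Lemma \ref{650}), every strong $\mathit{S4}$-algebra is an $\mathit{S3}$-algebra (Lemma \ref{670}), and every $\mathit{S5}$-algebra is a strong $\mathit{S4}$-algebra (Corollary \ref{710} and the subsequent remark). For each $\mathcal{L}\in\{\mathit{S3},\mathit{S4},\mathit{S5}\}$ the claim then splits into a soundness half and a completeness half, and the construction is structurally parallel to the proof of Theorem \ref{250}.

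For soundness I would argue by induction on derivations $\varPhi\vdash_\mathcal{L}\varphi$. The decisive observation is that \emph{every} axiom of $\mathcal{L}$ is interpreted by the top element $f_\top$ under any assignment into an $\mathcal{L}$-algebra: tautologies denote $f_\top$ because the Boolean reduct is a genuine Boolean algebra; the axiom $\square\varphi\rightarrow\varphi$ denotes $f_\top$ by condition (2), $f_\square(a)\le a$; the transitivity axiom of strict implication denotes $f_\top$ by condition (3), available for all three classes via Lemmas \ref{650} and \ref{670}; and the characteristic schemes (S3), (S4), (S5) denote $f_\top$ precisely because of condition (S3), condition (S4) and Definition \ref{700}, respectively (recall $f_\rightarrow(x,y)=f_\top$ iff $x\le y$). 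Consequently MP is sound because $\mathit{TRUE}$ is an ultrafilter (upward closed and closed under $f_\wedge$), and the rule AN is sound because, an axiom $\alpha$ being interpreted by $f_\top$, condition (1) gives $f_\square(f_\top)\in\mathit{TRUE}$, i.e. $\square\alpha$ is valid.

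For completeness I would dualize the construction in the proof of Theorem \ref{250}. Given $\varPhi\nvdash_\mathcal{L}\varphi$, extend $\varPhi\cup\{\neg\varphi\}$ by Zorn's Lemma to a set $\varPsi$ maximal consistent in $\mathcal{L}$ and form the quotient $M:=Fm_\square/{\cong}$, where $\varphi\cong\psi:\Leftrightarrow(\varphi\equiv\psi)\in\varPsi$ with $\equiv$ the strict equivalence of \eqref{405}. The \textbf{key point} is that $\cong$ must be taken with respect to \emph{strict} equivalence, not material equivalence: strict equivalence is a congruence that respects $\square$ (instantiate SP, which holds in $\mathit{S3}$ and above, at $\chi=\square x$), so $f_\square([\varphi]):=[\square\varphi]$ is well defined, while at the same time $M$ remains a genuine Boolean algebra, because every Boolean law is a tautology whose $\leftrightarrow$-form is an axiom and can therefore be necessitated by AN, forcing the corresponding identity into $\varPsi$. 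Setting $\mathit{TRUE}:=\{[\varphi]\mid\varphi\in\varPsi\}$ (well defined and an ultrafilter since $\varPsi$ is maximal consistent and $\square\chi\rightarrow\chi$ is an axiom), I would verify the defining conditions: condition (1) reduces to $\square\varphi\in\varPsi\Leftrightarrow[\varphi]=[\top]$, which is Lemma \ref{410} (principle N) together with Lemma \ref{430}; condition (2) and the characteristic inequalities hold because each relevant formula is a theorem of $\mathcal{L}$ and hence, via AN and N, is $\cong$-identical to $\top$. Thus $\mathcal{M}$ is an $\mathcal{L}$-algebra (a strong $\mathit{S4}$-algebra in the $\mathit{S4}$ case, an $\mathit{S5}$-algebra in the $\mathit{S5}$ case); the canonical assignment $\gamma(x)=[x]$ gives $\gamma(\chi)=[\chi]$, whence $(\mathcal{M},\gamma)\vDash\chi\Leftrightarrow\chi\in\varPsi$, so $(\mathcal{M},\gamma)\vDash\varPhi$ but $(\mathcal{M},\gamma)\nvDash\varphi$.

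The main obstacle is exactly the tension in the previous paragraph: one wants the carrier to be a Boolean algebra, so that condition (1) and the interior-algebra axioms make sense, which pushes toward quotienting by material equivalence, yet $\square$ fails to be well defined on material-equivalence classes because the full necessitation rule is unavailable below $\mathit{S4}$. Resolving this by quotienting modulo strict equivalence and invoking SP, while checking that tautological equivalences still collapse via AN, is the crux; once it is secured, the three cases differ only in verifying the one extra algebraic condition coming from the added axiom, so the extension of Theorem \ref{600} is indeed straightforward. A minor additional check is the $\mathit{S5}$ collapse $f_\square(a)=f_\bot$ for $a\neq f_\top$, which uses the axiom $\neg\square\varphi\rightarrow\square\neg\square\varphi$ to place $\square\neg\square\varphi$ in $\varPsi$.
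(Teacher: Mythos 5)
Your proposal is correct and takes essentially the same route as the paper, whose entire proof consists of the remark that the Lindenbaum--Tarski argument behind Theorem \ref{600} extends straightforwardly: quotienting modulo strict (not material) equivalence so that SP makes $f_\square$ well defined, and using AN together with principle N (Lemma \ref{410}) to force the defining conditions of each algebra class into the quotient, is exactly that extension. One caution on wording: below $\mathit{S4}$ a theorem need not be $\cong$-identical to $\top$ (that would amount to full necessitation), so for $\mathit{S3}$ you must observe that conditions (2) and (S3) are necessitations of \emph{axioms} (reachable by AN), reserving the theorem-based argument for condition (K) of strong $\mathit{S4}$-algebras and for the $\mathit{S4}$/$\mathit{S5}$ cases, where full necessitation is derivable.
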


\section{Dualities between $\mathit{SCI}$-theories and Lewis-style modal logics}

The goal of this section is to show that under certain assumptions, some Lewis-style modal logics are, in a precise sense, in duality with certain theories formalized in the language of $\mathit{SCI}$, more precisely, with certain axiomatic extensions of $\mathit{SCI}$. The crucial conditions for these dualities are the following:
\begin{enumerate}[(I)]
\item `The $\mathit{SCI}$ principles of propositional identity are valid. In particular, SP is valid.' 
\item `Propositional identity = strict equivalence', i.e., $(\varphi\equiv\psi)\equiv\square(\varphi\leftrightarrow\psi)$ holds.
\item `\textit{Necessity} = identity with proposition $\top$. In particular, there is exactly one necessary proposition: the proposition denoted by $\top$', i.e., $\square\varphi\equiv (\varphi\equiv\top)$ holds.
\item `All classical tautologies are necessary: If $\varphi$ is a classical tautology (i.e. an instance of a theorem of $\mathit{CPC}$), then $\square\varphi$ is valid.'
\end{enumerate}

From a semantic point of view, (III) and (IV) will ensure that the envolved $\mathit{SCI}$-models are Boolean algebras (cf. Theorem \ref{340} and the remark in the last paragraph of section 3.) \\

We remark here that a similar type of dualities between propositional logics with an identity connective and normal modal systems is established by T. Ishii \cite{ish}. His propositional calculus $\mathit{PCI}$ is also defined in the language of $\mathit{SCI}$ though the axioms (and rules) for the identity connective differ in some aspects from $\mathit{SCI}$. Ishii shows duality between $\mathit{PCI}$ and normal system $\mathit{K}$, as well as a series of further dualities between extensions of $\mathit{PCI}$ and corresponding normal modal systems.\footnote{Ishii does not use the term `duality'.}  \\

We now establish translations between the propositional languages of $\mathit{SCI}$ and of modal logic, i.e. between $Fm_\equiv$ and $Fm_\Box$.

\begin{definition}\label{1000}
The translation $\mathit{box}\colon Fm_\equiv\rightarrow Fm_\square$ is inductively defined as follows: $\mathit{box}(x):=x$, $\mathit{box}(\bot):=\bot$, $\mathit{box}(\top):=\top$, $\mathit{box}(\neg\varphi):=\neg \mathit{box}(\varphi)$, $\mathit{box}(\varphi *\psi):= (\mathit{box}(\varphi)*\mathit{box}(\psi))$, for $*\in\{\wedge,\vee,\rightarrow\}$, and 
\begin{equation*}
\mathit{box}(\varphi\equiv\psi):= (\square(\mathit{box}(\varphi)\rightarrow\mathit{box}(\psi))\wedge\square(\mathit{box}(\psi)\rightarrow\mathit{box}(\varphi))_.\footnote{We may abbreviate this by $\square(\mathit{box}(\varphi)\leftrightarrow\mathit{box}(\psi))$ since this formula is equivalent to the original one modulo $\mathit{SPS1}$, cf. Lemma \ref{580}.} 
\end{equation*}
On the other hand, the translation $\mathit{id}\colon Fm_\square\rightarrow Fm_\equiv$ is inductively defined as follows: $\mathit{id}(x):=x$, $\mathit{id}(\bot):=\bot$, $\mathit{id}(\top):=\top$, $\mathit{id}(\neg\varphi):=\neg \mathit{id}(\varphi)$, $\mathit{id}(\varphi *\psi):= (\mathit{id}(\varphi)*\mathit{id}(\psi))$, for $*\in\{\wedge,\vee,\rightarrow\}$, and
\begin{equation*}
\mathit{id}(\square\varphi):= (id(\varphi)\equiv\top).
\end{equation*}
For $\varPhi\subseteq Fm_\equiv$, we let $\mathit{box}(\varPhi):=\{\mathit{box}(\psi)\mid\psi\in\varPhi\}$; and for $\varPhi\subseteq Fm_\Box$, the set $\mathit{id}(\varPhi)$ is defined analogously.
\end{definition}

Induction on formulas ensures that $\mathit{box}(\varphi)\in Fm_\square$ for any $\varphi\in Fm_\equiv$; and $\mathit{id}(\varphi)\in Fm_\equiv$ for any $\varphi\in Fm_\square$. If the underlying logics are strong enough, then the translations $\mathit{box}$ and $\mathit{id}$ are inverse to each other in the sense of the next result.\\

Recall that in the language of modal logic $Fm_\square$, we use the following abbreviation: $(\varphi\equiv\psi) := (\square(\varphi\rightarrow\psi)\wedge \square (\psi\rightarrow\varphi))$, cf. \eqref{405} above. Since we are working with modal systems containing $\mathit{S1SP}$, we may define equivalently $(\varphi\equiv\psi) := \square(\varphi\leftrightarrow\psi)$, cf. Lemma \ref{580}. Also recall that in the language $\mathcal{L}_\equiv$ of $\mathit{SCI}$, we use the abbreviation $\square\varphi := (\varphi\equiv\top)$, cf. \eqref{330}.

\begin{theorem}\label{1020}
\begin{itemize}
\item Let $\mathcal{L}$ be a modal logic in the language $Fm_\square$ containing $\mathit{S1SP}$. 
Then for any $\varphi\in Fm_\square$: 
\begin{equation*}
\vdash_\mathcal{L}\varphi\equiv \mathit{box}(\mathit{id}(\varphi)).\footnote{In the following, we will write such an expression also as $\varphi\equiv_\mathcal{L} \mathit{box}(\mathit{id}(\varphi))$.}
\end{equation*}
\item Let $\mathcal{L}_\equiv$ be an axiomatic extension of $\mathit{SCI}$ in the language $Fm_\equiv$ containing theorems of the form $(\chi\equiv\psi)\equiv \square(\chi\leftrightarrow\psi)$.\footnote{That is, formulas of the form $(\chi\equiv\psi)\equiv ((\chi\leftrightarrow\psi)\equiv\top)$ are theorems.} Then for any $\varphi\in Fm_\equiv$:
\begin{equation*}
\vdash_{\mathcal{L}_\equiv}\varphi\equiv \mathit{id}(\mathit{box}(\varphi)).\footnote{We will write such an expression also as $\varphi\equiv_{\mathcal{L}_\equiv} \mathit{id}(\mathit{box}(\varphi))$.}
\end{equation*}
\end{itemize}
\end{theorem}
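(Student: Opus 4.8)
The plan is to prove both statements by induction on the construction of the formula $\varphi$, since the translations $\mathit{box}$ and $\mathit{id}$ are defined inductively. Both parts have the same skeleton: the variable and constant cases are immediate (both translations fix $x$, $\bot$, $\top$), the propositional connective cases ($\neg$, and $*\in\{\wedge,\vee,\rightarrow\}$) follow from the induction hypothesis together with the fact that propositional identity $\equiv$ is a congruence compatible with the connectives (axioms (id3)--(id7), or equivalently the Substitution Principle SP). The only genuinely interesting case is the modal/identity step, and this is where the two parts differ.

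For the first part, fix a modal logic $\mathcal{L}\supseteq \mathit{S1SP}$ and prove $\varphi\equiv_\mathcal{L}\mathit{box}(\mathit{id}(\varphi))$ by induction on $\varphi\in Fm_\square$. In the connective cases I would use that strict equivalence $\equiv$ (defined by \eqref{405}) is a congruence in $\mathit{S1SP}$: since $\mathit{S1SP}$ extends $\mathit{S1}$, the rule SPSE (indeed SP) lets me replace strict equivalents inside any context, so the induction hypotheses $\varphi_i\equiv_\mathcal{L}\mathit{box}(\mathit{id}(\varphi_i))$ propagate through $\neg$, $\wedge$, $\vee$, $\rightarrow$. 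The crucial case is $\varphi=\square\psi$. Here $\mathit{id}(\square\psi)=(\mathit{id}(\psi)\equiv\top)$ and $\mathit{box}(\mathit{id}(\psi)\equiv\top)=\square(\mathit{box}(\mathit{id}(\psi))\leftrightarrow\top)$, which modulo $\mathit{S1SP}$ reduces (using Lemma \ref{580} and a propositional tautology $\chi\leftrightarrow(\chi\leftrightarrow\top)$ together with AN) to $\square\,\mathit{box}(\mathit{id}(\psi))$. By the induction hypothesis $\psi\equiv_\mathcal{L}\mathit{box}(\mathit{id}(\psi))$, and applying SP (valid in $\mathit{S1SP}$) in the context $\square x$ yields $\square\psi\equiv_\mathcal{L}\square\,\mathit{box}(\mathit{id}(\psi))$, which is the desired identity. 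Principle N / $\square N$ (Lemmas \ref{410}, \ref{430}) is exactly the bookkeeping tool that converts between $\square\chi$ and $(\chi\equiv\top)$ inside these manipulations.

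For the second part, fix an extension $\mathcal{L}_\equiv\supseteq\mathit{SCI}$ containing all instances of $(\chi\equiv\psi)\equiv\square(\chi\leftrightarrow\psi)$ and prove $\varphi\equiv_{\mathcal{L}_\equiv}\mathit{id}(\mathit{box}(\varphi))$ by induction on $\varphi\in Fm_\equiv$. The connective cases again use that $\equiv$ is a congruence in $\mathit{SCI}$ (this is precisely SP, \eqref{10}, which is derivable from (id1)--(id7)). The decisive case is $\varphi=(\chi\equiv\psi)$. By definition, $\mathit{box}(\chi\equiv\psi)=\square(\mathit{box}(\chi)\leftrightarrow\mathit{box}(\psi))$, and then $\mathit{id}$ of this, unfolding $\mathit{id}(\square\theta)=(\mathit{id}(\theta)\equiv\top)$ and recalling $\square\theta:=(\theta\equiv\top)$ in $Fm_\equiv$, produces $\square(\mathit{id}(\mathit{box}(\chi))\leftrightarrow\mathit{id}(\mathit{box}(\psi)))$. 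The induction hypotheses give $\chi\equiv_{\mathcal{L}_\equiv}\mathit{id}(\mathit{box}(\chi))$ and $\psi\equiv_{\mathcal{L}_\equiv}\mathit{id}(\mathit{box}(\psi))$, so by SP (congruence) the target reduces to $\square(\chi\leftrightarrow\psi)$; finally the hypothesis $(\chi\equiv\psi)\equiv\square(\chi\leftrightarrow\psi)$ closes the case.

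I expect the main obstacle to be the modal/identity step in each part, specifically the careful reconciliation of the two competing abbreviations for $\equiv$ and $\square$: in $Fm_\square$ one has $\square$ primitive and $\equiv$ abbreviated as strict equivalence, whereas in $Fm_\equiv$ one has $\equiv$ primitive and $\square$ abbreviated as $(\,\cdot\,\equiv\top)$. Keeping the two directions of translation honest — and repeatedly invoking $\square N$ (Lemma \ref{430}) to toggle between $\square\chi$ and $(\chi\equiv\top)$, and Lemma \ref{580} to contract conjoined strict implications into $\square(\cdot\leftrightarrow\cdot)$ — is the delicate bookkeeping. Once these congruence and normalization facts are in hand, each inductive step is a routine substitution, so the proof is essentially two parallel inductions whose only non-mechanical rungs are the cases for $\square$ and for $\equiv$, respectively.
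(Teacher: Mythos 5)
Your proposal is correct and follows essentially the same route as the paper's own proof: induction on formulas, with the only substantive cases being $\varphi=\square\psi$ (handled via SP together with $\square N$, Lemma \ref{430}) and $\varphi=(\chi\equiv\psi)$ (handled via SP together with the assumed theorems $(\chi\equiv\psi)\equiv\square(\chi\leftrightarrow\psi)$). The only difference — applying the normalization step before rather than after the induction hypothesis — is immaterial.
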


\begin{proof}
Under the assumptions of the first item, we show the assertion by induction on $\varphi\in Fm_\square$. If $\varphi$ is an atomic formula, we get $\mathit{box}(\mathit{id}(\varphi))=\varphi$. Then the assertion holds because $\square(\varphi\leftrightarrow\varphi)$ is a theorem of $\mathcal{L}$ (apply the rule of Axiom Necessitation (AN) to $\varphi\leftrightarrow\varphi$). Now suppose $\varphi = \square\psi$ for some $\psi\in Fm_\square$. 
\begin{equation*}
\begin{split}
\mathit{box}(\mathit{id}(\varphi)) &= \mathit{box}(\mathit{id}(\square\psi))\\ 
&=\mathit{box}(\mathit{id}(\psi)\equiv\top),\text{ by definition of }\mathit{id}\\
&=\square(\mathit{box}(\mathit{id}(\psi)\leftrightarrow\top)),\text{ by definition of }\mathit{box}\\
&\equiv_\mathcal{L}\square(\psi\leftrightarrow\top),\text{ by induction hypothesis and SP}\\
&=(\psi\equiv\top)\\
&\equiv_\mathcal{L}\square\psi,\text{ recall that }\square\psi\equiv(\psi\equiv\top)\text{ is a theorem of }\mathit{S1SP}\\
&=\varphi
\end{split}
\end{equation*}
Hence, $\varphi\equiv_\mathcal{L}\mathit{box}(\mathit{id}(\varphi))$, i.e. $\vdash_\mathcal{L}\varphi\equiv \mathit{box}(\mathit{id}(\varphi))$. The remaining cases of the induction step follow straightforwardly. Now, we assume the hypotheses of the second item and show its assertion by induction on $\varphi\in Fm_\equiv$. The induction base is clear; and in the induction step, only the case $\varphi=(\psi\equiv\chi)$ requires some attention:
\begin{equation*}
\begin{split}
\mathit{id}(\mathit{box}(\varphi))&=\mathit{id}(\mathit{box}(\psi\equiv\chi))\\
&=\mathit{id}(\square(\mathit{box}(\psi)\leftrightarrow\mathit{box}(\chi))),\text{ by definition of }\mathit{box}\\
&=(\mathit{id}(\mathit{box}(\psi))\leftrightarrow\mathit{id}(\mathit{box}(\chi)))\equiv\top,\text{ by definition of }\mathit{id}\\
&\equiv_{\mathcal{L}_\equiv} (\mathit{id}(\mathit{box}(\psi))\equiv\mathit{id}(\mathit{box}(\chi))), \text{ by assumptions on }\mathcal{L}_\equiv\\
&\equiv_{\mathcal{L}_\equiv} (\psi\equiv\chi),\text{ by induction hypothesis and SP}\\
&=\varphi
\end{split}
\end{equation*}
\end{proof}

If $\mathcal{L}$ is a modal logic and $\mathcal{L}_\equiv$ is an $\mathit{SCI}$-extension satisfying the hypotheses required in Theorem \ref{1020}, then we are able to establish a condition (actually, two equivalent conditions) under which both logics have, in a precise sense, the same expressive power, i.e. are dual to each other:

\begin{definition}\label{1030}
Let $\mathcal{L}$ be a modal logic in the language $Fm_\square$ containing $\mathit{S1SP}$. Let $\mathcal{L}_\equiv$ be an extension of $\mathit{SCI}$ in the language $Fm_\equiv$ containing theorems of the form $(\chi\equiv\psi)\equiv \square(\chi\leftrightarrow\psi)$. Furthermore, suppose one of the following two conditions is true:
\begin{enumerate}
\item For any $\varPhi\cup\{\varphi\}\subseteq Fm_\equiv$, $\varPhi\vdash_{\mathcal{L_\equiv}}\varphi \Longleftrightarrow \mathit{box}(\varPhi)\vdash_{\mathcal{L}}\mathit{box}(\varphi)$.
\item For any $\varPhi\cup\{\varphi\}\subseteq Fm_\square$, $\varPhi\vdash_{\mathcal{L}}\varphi \Longleftrightarrow \mathit{id}(\varPhi)\vdash_{\mathcal{L_\equiv}}\mathit{id}(\varphi)$.
\end{enumerate}
Then we say that $\mathcal{L}_\equiv$ and $\mathcal{L}$ are dual to each other, and we call $\mathcal{L}_\equiv$ the (dual) $\mathit{SCI}$-theory of modal logic $\mathcal{L}$; and we call $\mathcal{L}$ the (dual) modal theory of $\mathcal{L}_\equiv$. 
\end{definition}

Actually, it would be enough to consider only one of the conditions (i), (ii) in Definition \ref{1030}, as the next result shows.

\begin{lemma}\label{1035}
Let $\mathcal{L}$ be a modal logic and let $\mathcal{L}_\equiv$ be its dual $\mathit{SCI}$-theory according to Definition \ref{1030}. Then both conditions (i) and (ii) of Definition \ref{1030} are satisfied.
\end{lemma}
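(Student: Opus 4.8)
The plan is to show that conditions (i) and (ii) of Definition \ref{1030} are equivalent; since the hypothesis that $\mathcal{L}_\equiv$ is the dual $\mathit{SCI}$-theory of $\mathcal{L}$ guarantees that at least one of them holds, their equivalence immediately yields that both hold. The whole argument rests on Theorem \ref{1020}, which says that $\mathit{box}$ and $\mathit{id}$ are mutually inverse up to provable identity: $\varphi\equiv_\mathcal{L}\mathit{box}(\mathit{id}(\varphi))$ for every $\varphi\in Fm_\square$, and $\varphi\equiv_{\mathcal{L}_\equiv}\mathit{id}(\mathit{box}(\varphi))$ for every $\varphi\in Fm_\equiv$.

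First I would record the auxiliary fact that provably identical formulas are interderivable and may therefore be exchanged both as conclusions and as premises in the respective consequence relations. In $\mathcal{L}_\equiv$, if $\vdash_{\mathcal{L}_\equiv}\alpha\equiv\beta$, then by the derivable principle $(\alpha\equiv\beta)\rightarrow(\alpha\leftrightarrow\beta)$ we obtain $\vdash_{\mathcal{L}_\equiv}\alpha\rightarrow\beta$ and $\vdash_{\mathcal{L}_\equiv}\beta\rightarrow\alpha$, so $\{\alpha\}\vdash_{\mathcal{L}_\equiv}\beta$ and $\{\beta\}\vdash_{\mathcal{L}_\equiv}\alpha$. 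In $\mathcal{L}$, the expression $\alpha\equiv_\mathcal{L}\beta$ abbreviates $\vdash_\mathcal{L}\square(\alpha\leftrightarrow\beta)$, and since $\mathcal{L}\supseteq\mathit{S1}$ contains the axiom $\square\varphi\rightarrow\varphi$, this again gives $\vdash_\mathcal{L}\alpha\leftrightarrow\beta$ and hence interderivability of $\alpha$ and $\beta$. By monotonicity and cut, it then follows that whenever two premise sets $\varPhi$ and $\varPhi'$ satisfy $\varPhi\vdash\gamma'$ for all $\gamma'\in\varPhi'$ and $\varPhi'\vdash\gamma$ for all $\gamma\in\varPhi$, they have exactly the same consequences.

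With these tools in place, I would prove the implication (ii) $\Rightarrow$ (i); the converse is entirely symmetric, obtained by interchanging the roles of $\mathit{box}$ and $\mathit{id}$ and of the two items of Theorem \ref{1020}. Assume (ii) holds and let $\varPhi\cup\{\varphi\}\subseteq Fm_\equiv$. Instantiating (ii) at the modal set $\mathit{box}(\varPhi)$ and the modal formula $\mathit{box}(\varphi)$ gives
\[
\mathit{box}(\varPhi)\vdash_\mathcal{L}\mathit{box}(\varphi)\ \Longleftrightarrow\ \mathit{id}(\mathit{box}(\varPhi))\vdash_{\mathcal{L}_\equiv}\mathit{id}(\mathit{box}(\varphi)).
\]
By the second item of Theorem \ref{1020}, each $\chi\in\varPhi$ is provably identical, hence interderivable, with $\mathit{id}(\mathit{box}(\chi))$ in $\mathcal{L}_\equiv$; thus the premise sets $\varPhi$ and $\mathit{id}(\mathit{box}(\varPhi))$ have the same consequences, and $\mathit{id}(\mathit{box}(\varphi))$ is interderivable with $\varphi$. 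Consequently the right-hand side above is equivalent to $\varPhi\vdash_{\mathcal{L}_\equiv}\varphi$, and we obtain $\varPhi\vdash_{\mathcal{L}_\equiv}\varphi\Leftrightarrow\mathit{box}(\varPhi)\vdash_\mathcal{L}\mathit{box}(\varphi)$, which is precisely (i).

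I do not expect any deep obstacle here: the argument is essentially a round trip through the translations, using that each round trip returns a provably identical formula. The only point requiring genuine care is the handling of premise \emph{sets} rather than single formulas — one must verify that interderivability of each member with its double translation suffices to conclude that the two premise sets are deductively equivalent, which is exactly what the monotonicity-and-cut observation of the second paragraph secures.
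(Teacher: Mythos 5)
Your proposal is correct. The hypotheses of Definition \ref{1030} guarantee exactly what Theorem \ref{1020} needs on each side ($\mathcal{L}\supseteq\mathit{S1SP}$, and $\mathcal{L}_\equiv\supseteq\mathit{SCI}$ with the theorems $(\chi\equiv\psi)\equiv\square(\chi\leftrightarrow\psi)$), your interderivability facts hold (via (id2) and symmetry of $\equiv$ on the $\mathit{SCI}$ side, via $\square\varphi\rightarrow\varphi$ on the modal side), and gluing Hilbert-style derivations legitimizes the cut-and-monotonicity step, since AN applies only to axioms and MP is the sole rule acting on premises.

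Your route does differ from the paper's in how the premise sets are handled, though both arguments are at bottom a round trip through the translations powered by Theorem \ref{1020}. The paper (proving (i) $\Rightarrow$ (ii)) first compresses $\varPhi\vdash_{\mathcal{L}}\varphi$ into a single theorem $\vdash_{\mathcal{L}}(\varphi_1\wedge\cdots\wedge\varphi_n)\rightarrow\varphi$ via the deduction theorem, pushes that one formula through $\mathit{box}\circ\mathit{id}$, applies the assumed duality condition only with the \emph{empty} premise set, and then recovers $\mathit{id}(\varPhi)\vdash_{\mathcal{L}_\equiv}\mathit{id}(\varphi)$ because $\mathit{id}$ commutes with the propositional connectives. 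You instead instantiate the assumed condition directly at the translated set $\mathit{box}(\varPhi)$ and exchange $\mathit{id}(\mathit{box}(\varPhi))$ for $\varPhi$ member by member using interderivability and cut. The trade-off: the paper's argument shows that the duality condition is really only needed at the level of theorems (plus compactness and commutation of the translation with connectives), which is a slightly sharper piece of information; yours avoids the deduction theorem entirely and relies only on structural properties of the consequence relations, so it would survive in settings where a deduction theorem is delicate. Both are sound, and your explicit isolation of the premise-set replacement step is exactly the point where care is needed.
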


\begin{proof}
Let $\mathcal{L}_\equiv$ be the $\mathit{SCI}$-theory of modal system $\mathcal{L}$ and suppose that fact is witnessed by condition (i) of Definition \ref{1030}. We show that condition (ii) follows. Let $\varPhi\cup\{\varphi\}\subseteq Fm_\square$ and suppose $\varPhi\vdash_\mathcal{L}\varphi$. There are $\varphi_1,...,\varphi_n\in\varPhi$ such that $\vdash_\mathcal{L} (\varphi_1\wedge ...\wedge\varphi_n)\rightarrow\varphi$. By Theorem \ref{1020}, $\vdash_\mathcal{L}\mathit{box}(\mathit{id}((\varphi_1\wedge ...\wedge\varphi_n)\rightarrow\varphi))$. Then condition (i) yields $\vdash_{\mathcal{L}_\equiv} \mathit{id}((\varphi_1\wedge ...\wedge\varphi_n)\rightarrow\varphi)$. Taking into account the definition of $\mathit{id}$, that implies $\mathit{id}(\varPhi)\vdash_{\mathcal{L}_\equiv}\mathit{id}(\varphi)$. The implication from right-to-left of (ii) follows similarly. Analogously, one establishes condition (i) under the assumption that condition (ii) holds true.
\end{proof}

\begin{lemma}\label{1037}
Let $\mathcal{L}$ be a modal logic and let $\mathcal{L}_\equiv$ be its dual $\mathit{SCI}$-theory. Then the following hold:\\
(a) For any $\varphi\in Fm_\equiv$, $\vdash_\mathcal{L}\mathit{box}(\square\varphi)\equiv\square\mathit{box}(\varphi)$, i.e. $\mathit{box}(\square\varphi)\equiv_{\mathcal{L}}\square\mathit{box}(\varphi)$.\\
(b) For any $\varphi,\psi\in Fm_\square$, $\vdash_{\mathcal{L}_\equiv} \mathit{id}(\varphi\equiv\psi)\equiv (\mathit{id}(\varphi)\equiv\mathit{id}(\psi))$, which we also write as $\mathit{id}(\varphi\equiv\psi)\equiv_{\mathcal{L}_\equiv} (\mathit{id}(\varphi)\equiv\mathit{id}(\psi))$.
\end{lemma}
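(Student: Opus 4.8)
The plan is to treat the two items separately: item (a) comes almost immediately from the $\square N$-scheme once the definitions are unwound, whereas for item (b) I would avoid a direct derivation in $\mathcal{L}_\equiv$ and instead pass to the modal side, where the round-trip identities of Theorem \ref{1020} are available.

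For (a), first unwind the definitions. Since $\square\varphi$ abbreviates $(\varphi\equiv\top)$ in $Fm_\equiv$, we have $\mathit{box}(\square\varphi)=\mathit{box}(\varphi\equiv\top)$, and by the identity clause of Definition \ref{1000} together with $\mathit{box}(\top)=\top$ this is exactly $\square(\mathit{box}(\varphi)\rightarrow\top)\wedge\square(\top\rightarrow\mathit{box}(\varphi))$, i.e.\ the strict equivalence $(\mathit{box}(\varphi)\equiv\top)$ read in $Fm_\square$. Now apply Lemma \ref{430} ($\square N$), which holds in $\mathit{S1SP}$ and hence in $\mathcal{L}$, to the modal formula $\mathit{box}(\varphi)$: this yields $\vdash_\mathcal{L}\square\mathit{box}(\varphi)\equiv(\mathit{box}(\varphi)\equiv\top)$, that is $\vdash_\mathcal{L}\square\mathit{box}(\varphi)\equiv\mathit{box}(\square\varphi)$. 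Since at the theorem level strict equivalence is symmetric (the conjuncts of $P\equiv Q$ and $Q\equiv P$ differ only by commutativity of $\wedge$, a pure $\mathit{CPC}$ matter), we may flip the two sides to obtain $\vdash_\mathcal{L}\mathit{box}(\square\varphi)\equiv\square\mathit{box}(\varphi)$. The only content here is the bookkeeping identifying $\mathit{box}(\square\varphi)$ with $(\mathit{box}(\varphi)\equiv\top)$.

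For (b), I would use the duality. By Lemma \ref{1035} both conditions (i) and (ii) of Definition \ref{1030} hold; with condition (i) and the empty premise set, $\vdash_{\mathcal{L}_\equiv}G$ is equivalent to $\vdash_\mathcal{L}\mathit{box}(G)$, where $G$ denotes the target $\mathit{id}(\varphi\equiv\psi)\equiv(\mathit{id}(\varphi)\equiv\mathit{id}(\psi))$. So it suffices to prove $\vdash_\mathcal{L}\mathit{box}(G)$. Unwinding $\mathit{box}$ on the outermost identity, $\mathit{box}(G)=\square\bigl(\mathit{box}(\mathit{id}(\varphi\equiv\psi))\leftrightarrow\mathit{box}(\mathit{id}(\varphi)\equiv\mathit{id}(\psi))\bigr)$; that is, $\mathit{box}(G)$ is precisely the assertion $\mathit{box}(\mathit{id}(\varphi\equiv\psi))\equiv_\mathcal{L}\mathit{box}(\mathit{id}(\varphi)\equiv\mathit{id}(\psi))$. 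The plan is to show both of these $\mathit{box}\circ\mathit{id}$ terms are $\equiv_\mathcal{L}$-equal to the single formula $(\varphi\equiv\psi)$. For the left term this is immediate from the first item of Theorem \ref{1020} applied to the modal formula $\varphi\equiv\psi$. For the right term, expand $\mathit{box}(\mathit{id}(\varphi)\equiv\mathit{id}(\psi))=\square(\mathit{box}(\mathit{id}(\varphi))\leftrightarrow\mathit{box}(\mathit{id}(\psi)))$, use the first item of Theorem \ref{1020} to get $\mathit{box}(\mathit{id}(\varphi))\equiv_\mathcal{L}\varphi$ and $\mathit{box}(\mathit{id}(\psi))\equiv_\mathcal{L}\psi$, and substitute these under the $\square(\cdot\leftrightarrow\cdot)$ using SP (available since $\mathcal{L}\supseteq\mathit{S1SP}$), arriving at $\square(\varphi\leftrightarrow\psi)=(\varphi\equiv\psi)$. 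Transitivity and symmetry of strict equivalence then give $\mathit{box}(\mathit{id}(\varphi\equiv\psi))\equiv_\mathcal{L}\mathit{box}(\mathit{id}(\varphi)\equiv\mathit{id}(\psi))$, i.e.\ $\vdash_\mathcal{L}\mathit{box}(G)$, and condition (i) delivers $\vdash_{\mathcal{L}_\equiv}G$.

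The main obstacle is conceptual rather than computational, and it sits in (b). A naive direct proof in $\mathcal{L}_\equiv$ reduces the claim to $\square(\chi\wedge\theta)\equiv(\square\chi\wedge\square\theta)$ being a propositional \emph{identity} (not merely a provable biconditional) for $\chi=\mathit{id}(\varphi)\rightarrow\mathit{id}(\psi)$ and $\theta=\mathit{id}(\psi)\rightarrow\mathit{id}(\varphi)$; this is not among the hypotheses on $\mathcal{L}_\equiv$ and would need separate work. Routing through $\mathit{box}$ and invoking the round-trip identities of Theorem \ref{1020} together with the interchangeability of the two duality conditions (Lemma \ref{1035}) sidesteps this entirely, because on the modal side the distribution of $\square$ over $\wedge$ as a strict equivalence is already available (Lemmas \ref{425} and \ref{580}) and is absorbed into the round-trip lemma. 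The only point to check carefully is that unwinding $\mathit{box}$ of an identity produces the two conjuncts in a harmless order, which is again a $\mathit{CPC}$ commutativity issue.
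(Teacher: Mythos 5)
Your proof of part (a) is essentially the paper's own: unwind $\mathit{box}(\varphi\equiv\top)$ into $(\mathit{box}(\varphi)\equiv\top)$ and apply the $\square N$ scheme of Lemma \ref{430}, which lives in $\mathit{S1SP}\subseteq\mathcal{L}$; the symmetry remark is harmless bookkeeping.

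For part (b) you take a genuinely different route, and it is worth spelling out the trade. The paper proves (b) \emph{directly} inside $\mathcal{L}_\equiv$: writing the modal identity as $\square(\varphi\leftrightarrow\psi)$ (the convention adopted in the footnote to Definition \ref{1000} and in the remarks preceding Theorem \ref{1020}), one gets $\mathit{id}(\varphi\equiv\psi)=((\mathit{id}(\varphi)\leftrightarrow\mathit{id}(\psi))\equiv\top)$, and the hypothesis that $(\chi\equiv\xi)\equiv((\chi\leftrightarrow\xi)\equiv\top)$ is a theorem scheme of $\mathcal{L}_\equiv$ closes the gap in one step. So the obstacle you flag --- needing $\square(\chi\wedge\theta)\equiv(\square\chi\wedge\square\theta)$ as a provable \emph{identity} --- arises only if one insists on the conjunctive form of strict equivalence; under the paper's convention the direct route is a one-liner. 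Your detour --- reduce $\vdash_{\mathcal{L}_\equiv}G$ to $\vdash_\mathcal{L}\mathit{box}(G)$ via condition (i) (available by Lemma \ref{1035}), then prove $\mathit{box}(G)$ by showing both $\mathit{box}(\mathit{id}(\varphi\equiv\psi))$ and $\mathit{box}(\mathit{id}(\varphi)\equiv\mathit{id}(\psi))$ are strictly equivalent to $(\varphi\equiv\psi)$ using Theorem \ref{1020} and SP --- is correct under the lemma's stated hypotheses, and it has the genuine merit of being insensitive to which form of the abbreviation one fixes. But note what it spends: the paper's proof of (b) uses only the containment hypotheses of Definition \ref{1030} (that $\mathcal{L}$ contains $\mathit{S1SP}$ and that $\mathcal{L}_\equiv$ contains the theorem scheme) and never touches conditions (i)/(ii), whereas your proof consumes the duality itself. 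This is not a formal error, since the lemma as stated assumes duality; but Lemma \ref{1037} is invoked inside the proof of Theorem \ref{1060}, i.e.\ at a point where the duality of the concrete systems is still being established. Only part (a) is used there, and your (a) is duality-free, so no circularity actually materializes --- yet a duality-dependent version of (b) could never be used in such an argument, while the paper's version could. If you keep your route, you should state explicitly that (b), unlike (a), genuinely requires conditions (i)/(ii).
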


\begin{proof}
Under the given assumptions, we have: $\mathit{box}(\square\varphi)=\mathit{box}(\varphi\equiv\top)=\square(\mathit{box}(\varphi)\leftrightarrow\top)=(\mathit{box}(\varphi)\equiv\top)\equiv_\mathcal{L}\square\mathit{box}(\varphi)$. The last equation holds because $\square\psi\equiv (\psi\equiv\top)$ is a theorem of $\mathit{S1SP}$ and thus of $\mathcal{L}$, for any $\psi\in Fm_\square$.\\
On the other hand: $\mathit{id}(\varphi\equiv\psi)=\mathit{id}(\square(\varphi\leftrightarrow\psi))=(\mathit{id}(\varphi)\leftrightarrow\mathit{id}(\psi))\equiv\top)\equiv_{\mathcal{L}_\equiv} (\mathit{id}(\varphi)\equiv\mathit{id}(\psi))$. The last equation holds because formulas of the form $(\chi\equiv\xi)\equiv ((\chi\leftrightarrow\xi)\equiv\top)$ are theorems of $\mathcal{L}_\equiv$. 
\end{proof}

As expected, particular examples of Definition \ref{1030} are the $\mathit{SCI}$-theories of modal systems $\mathit{S1SP}$, $\mathit{S3}$, $\mathit{S4}$ and $\mathit{S5}$ which we are going to define in the following as deductive systems in the language of $\mathit{SCI}$. Recall that we have $\square\varphi := (\varphi\equiv\top)$.

\begin{definition}\label{1040}
We consider the language $Fm_\equiv$ of $\mathit{SCI}$ and define deductive systems on the base of the following axiom schemes (CPC) + (1)--(5):\\
(CPC) any formula $\varphi$ having the form of a classical tautology, i.e. $\varphi$ is the substitution instance of a theorem of $\mathit{CPC}$\\
(1) $(\chi\equiv\psi)\leftrightarrow \square(\chi\leftrightarrow\psi)$\\
(2) $\square\varphi\rightarrow\varphi$\\
(3')$(\square(\varphi\rightarrow\psi)\wedge \square(\psi\rightarrow\chi))\rightarrow\square(\varphi\rightarrow\chi)$\\
(3) $\square(\varphi\rightarrow\psi)\rightarrow\square (\square\varphi\rightarrow\square\psi)$\\
(4) $\square\varphi\rightarrow\square\square\varphi$\\
(5) $\neg\square\varphi\rightarrow\square\neg\square\varphi$.\\
Then logic $\mathit{S1SP}_\equiv$ is axiomatized by the axiom schemes (CPC), (1), (2), (3') together with the scheme of theorems SP $(\varphi\equiv\psi)\rightarrow (\chi[x:=\varphi]\equiv\chi[x:=\psi])$. That is, $\mathit{S1SP}_\equiv$ is given by the following deductive system. For $\varPhi\cup\{\varphi\}\subseteq Fm_\equiv$, we write $\varPhi\vdash_{\mathit{S1SP}_\equiv}\varphi$ if there is a derivation, i.e. a sequence $\varphi_1,...,\varphi_n=\varphi$, such that for every $\varphi_i$, $1\le i\le n$: $\varphi_i\in\varPhi$ or $\varphi_i$ is an instance of (CPC), (1)--(3') or SP or $\varphi_i$ is obtained by rule MP or $\varphi_i$ is obtained by rule AN (i.e. there is some $1\le j<i$ such that $\varphi_j$ is an axiom, i.e. an instance of (CPC) + (1)--(3') and $\varphi_i=\square\varphi_j$).\\
The deductive system $\mathit{S3}_\equiv$ is defined analogously but with axiom schemes (CPC), (1), (2), (3) (and without theorem scheme SP). Similarly, logic $\mathit{S4}_\equiv$ is given by the axioms (CPC) and (1)--(4). If additionally we consider axiom scheme (5), then we obtain system $\mathit{S5}_\equiv$.\footnote{Of course, rule AN only applies to the given axioms of the respective underlying system.}
\end{definition}

\begin{lemma}\label{1042}
$(\chi\equiv\psi)\equiv \square(\chi\leftrightarrow\psi)$ is a theorem of $\mathit{S1SP}_\equiv$.
\end{lemma}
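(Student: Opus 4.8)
The plan is to upgrade the material biconditional supplied by axiom (1) to a genuine propositional identity by necessitating axiom (1) itself. Write $A := (\chi\equiv\psi)$ and $B := \square(\chi\leftrightarrow\psi)$, and recall that $\square\theta$ abbreviates $(\theta\equiv\top)$; the goal is to derive $A\equiv B$. The key observation is that axiom (1), being a scheme ranging over all pairs of formulas, may be instantiated at the pair $(A,B)$, which yields the theorem $(A\equiv B)\leftrightarrow\square(A\leftrightarrow B)$. Hence, by (CPC) and MP, it suffices to produce $\square(A\leftrightarrow B)$ as a theorem; this reduces the whole problem to showing that the biconditional $A\leftrightarrow B$ is necessary.

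Now $A\leftrightarrow B$ is precisely the instance $(\chi\equiv\psi)\leftrightarrow\square(\chi\leftrightarrow\psi)$ of axiom (1), so it is itself an axiom. Therefore rule AN, which by the definition of $\mathit{S1SP}_\equiv$ applies to every instance of (CPC) and (1)--(3'), yields $\square(A\leftrightarrow B)$ as a theorem. Feeding this into the instance $(A\equiv B)\leftrightarrow\square(A\leftrightarrow B)$ obtained in the previous paragraph and applying MP gives $A\equiv B$, that is, $(\chi\equiv\psi)\equiv\square(\chi\leftrightarrow\psi)$, as required.

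The argument is short, and the only points deserving care are the two moves that do the real work. First, axiom (1) is instantiated at the pair $(A,B)$ whose second coordinate $B=\square(\chi\leftrightarrow\psi)$ is itself a well-formed formula of $Fm_\equiv$; this is unproblematic, since $\square$ is merely a defined abbreviation and the instance typechecks. Second, AN must genuinely be available on instances of axiom (1), which is exactly what the deductive-system definition of $\mathit{S1SP}_\equiv$ grants. Conceptually, necessitating axiom (1) is the step that converts the extensional equivalence $A\leftrightarrow B$ into the intensional identity $A\equiv B$, mirroring the role that AN applied to a classical tautology plays in the proof of Lemma \ref{430}; I expect no obstacle beyond keeping these bookkeeping details straight.
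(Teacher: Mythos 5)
Your proof is correct and is essentially identical to the paper's own argument: both necessitate axiom (1) via rule AN to obtain $\square((\chi\equiv\psi)\leftrightarrow\square(\chi\leftrightarrow\psi))$, instantiate axiom (1) at the pair $((\chi\equiv\psi),\,\square(\chi\leftrightarrow\psi))$, and conclude by Modus Ponens. The extra bookkeeping remarks (that the instance typechecks and that AN applies to instances of (1)) are accurate but not points of divergence.
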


\begin{proof}
Applying rule AN to (1) results in $\square((\chi\equiv\psi)\leftrightarrow \square(\chi\leftrightarrow\psi))$. Formula $((\chi\equiv\psi)\equiv \square(\chi\leftrightarrow\psi))\leftrightarrow \square((\chi\equiv\psi)\leftrightarrow \square(\chi\leftrightarrow\psi))$ is an instance of (1). Modus Ponens yields $(\chi\equiv\psi)\equiv \square(\chi\leftrightarrow\psi)$.
\end{proof}

\begin{theorem}\label{1050}
$\mathit{SCI}\subseteq\mathit{SCI^+}\subseteq\mathit{S1SP}_\equiv\subseteq\mathit{S3}_\equiv\subseteq\mathit{S4}_\equiv\subseteq\mathit{S5}_\equiv$.
\end{theorem}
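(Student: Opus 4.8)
The plan is to verify the six inclusions in two groups. The inclusions $\mathit{SCI}\subseteq\mathit{SCI^+}$ and $\mathit{S3}_\equiv\subseteq\mathit{S4}_\equiv\subseteq\mathit{S5}_\equiv$ are immediate from the definitions: in each case the larger system arises from the smaller one by adjoining axiom schemes while keeping the same deductive apparatus (MP for the first pair; MP together with AN for the rest), and the rule AN of the larger system applies to a \emph{superset} of its axioms. Hence every derivation in the smaller system is, verbatim, a derivation in the larger one. The substance of the theorem lies in the two remaining inclusions $\mathit{SCI^+}\subseteq\mathit{S1SP}_\equiv$ and $\mathit{S1SP}_\equiv\subseteq\mathit{S3}_\equiv$, which I treat by showing that every axiom, and every instance produced by the rules, of the smaller system is a theorem of the larger one.

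For $\mathit{SCI^+}\subseteq\mathit{S1SP}_\equiv$ I check that each axiom of $\mathit{SCI^+}$ (Definition \ref{350}) is a theorem of $\mathit{S1SP}_\equiv$. The tautology axioms are instances of $(\mathrm{CPC})$; the axiom $(\varphi\equiv\psi)\leftrightarrow\square(\varphi\leftrightarrow\psi)$ is axiom (1); the necessitated tautologies $\square\chi$ come from applying AN to the $(\mathrm{CPC})$-instance $\chi$; (id1) follows by AN from $\varphi\leftrightarrow\varphi$ together with (1); (id2) follows from (1), (2) and $\mathrm{CPC}$; and (id3)--(id7) follow from the theorem scheme SP via the standard equivalence of SP with (id3)--(id7) recorded after \eqref{10}. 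Since $\mathit{SCI^+}$ has no rule beyond MP, no necessitations of these axioms are required, and every $\mathit{SCI^+}$-derivation lifts to $\mathit{S1SP}_\equiv$ by splicing in the derivations just described.

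The serious step is $\mathit{S1SP}_\equiv\subseteq\mathit{S3}_\equiv$, where the axiom sets are \emph{not} nested: $\mathit{S1SP}_\equiv$ carries (3') and the scheme SP, whereas $\mathit{S3}_\equiv$ carries (3). First I derive the distribution law $\square(\varphi\rightarrow\psi)\rightarrow(\square\varphi\rightarrow\square\psi)$ from (2) and (3); from it, plus AN on tautologies, I obtain the conjunction law $\square(\varphi\wedge\psi)\leftrightarrow(\square\varphi\wedge\square\psi)$ and the \emph{derivable rule} ``from $\vdash\square(\alpha\rightarrow\beta)$ and $\vdash\square(\beta\rightarrow\gamma)$ infer $\vdash\square(\alpha\rightarrow\gamma)$'' (boxed hypothetical syllogism), neither of which uses (3'). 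These tools yield (3') itself, and, combining boxed hypothetical syllogism, the box-monotonicity supplied by (3), and the necessitated tautology $\square(((\varphi\rightarrow\psi)\wedge(\psi\rightarrow\chi))\rightarrow(\varphi\rightarrow\chi))$, also its necessitation $\square[(3')]$. They further give (id1)--(id7): the propositional cases go through the distribution law as before, while (id7), which concerns $\equiv$ nested inside $\equiv$, is handled by using (3) to pass from $\square(\alpha\leftrightarrow\beta)$ to $\square(\square\alpha\leftrightarrow\square\beta)$ and then invoking $(\chi\equiv\xi)\equiv\square(\chi\leftrightarrow\xi)$, a theorem of $\mathit{S3}_\equiv$ by the argument of Lemma \ref{1042} (which uses only (1), AN and MP). From (id1)--(id7) the scheme SP follows by induction on $\chi$. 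An induction on $\mathit{S1SP}_\equiv$-derivations then concludes: axioms $(\mathrm{CPC})$, (1), (2) and the SP-instances are already $\mathit{S3}_\equiv$-theorems; (3')-instances are theorems just shown; MP is shared; and the only applications of AN in $\mathit{S1SP}_\equiv$ are to instances of $(\mathrm{CPC})$, (1), (2) and (3'), the first three being matched by AN in $\mathit{S3}_\equiv$ and the last by the theorem $\square[(3')]$.

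I expect the main obstacle to be precisely this last point. Because (3') is only a theorem and not an axiom of $\mathit{S3}_\equiv$, the rule AN of $\mathit{S1SP}_\equiv$ cannot be matched directly, and one must prove that the necessitation $\square[(3')]$ is nevertheless derivable. The enabling observation is that, thanks to (2) and (3), the set of boxed theorems of $\mathit{S3}_\equiv$ is closed under the classical-propositional reasoning encoded by the distribution and conjunction laws, so that anything provable ``under a box'' by such reasoning can again be necessitated; this is what allows (3') to be necessitated even though Lewis' transitivity of strict implication is not assumed outright in $\mathit{S3}_\equiv$. The derivation of (id7), and hence of SP, in $\mathit{S3}_\equiv$ — which genuinely requires axiom (3) — is the other delicate point.
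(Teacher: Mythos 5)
Your proposal is correct, and for most of the theorem it coincides with the paper's own proof: the outer inclusions are handled by nesting of axiom sets exactly as in the paper, $\mathit{SCI}^+\subseteq\mathit{S1SP}_\equiv$ is the same axiom-by-axiom check (with (id1) via AN on $\varphi\leftrightarrow\varphi$ plus (1), (id2) via (1)+(2), and (id3)--(id7) via SP), and the middle inclusion proceeds, as in the paper's Claim 3, by deriving K, the conjunction laws, (3'), and then (id3)--(id7) (hence SP) in $\mathit{S3}_\equiv$, with (id7) the delicate case. Where you genuinely depart from the paper is the point you flag yourself: matching rule AN. The paper's Claim 3 asserts that it is ``enough'' to derive (3') and SP in $\mathit{S3}_\equiv$ and stops there; but since AN in $\mathit{S1SP}_\equiv$ applies to the \emph{axiom} (3'), whereas (3') is only a theorem of $\mathit{S3}_\equiv$ (whose AN applies to (CPC), (1), (2), (3) only), the inclusion also requires that the necessitation $\square[(3')]$ be an $\mathit{S3}_\equiv$-theorem. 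This is a real lacuna in the paper's write-up, and your proposal both identifies it and supplies the missing idea. Your route is the right one, and it is genuinely needed: one cannot simply ``box'' the paper's derivation of (3'), because that derivation contains an AN-step producing $\square\tau$, and $\square\square\tau$ is \emph{not} an $\mathit{S3}_\equiv$-theorem (semantically, $\square\square\top$ fails in an $\mathit{S3}$-algebra in which $f_\square(f_\top)$ lies in $\mathit{TRUE}$ but differs from $f_\top$). The one thin spot in your sketch is step-level: besides box-monotonicity ($\vdash\square(\alpha\rightarrow\beta)$ yields $\vdash\square(\square\alpha\rightarrow\square\beta)$ by (3)+MP), boxed hypothetical syllogism, and necessitated tautologies, you need the \emph{necessitated} converse conjunction law $\square\bigl((\square\varphi\wedge\square\psi)\rightarrow\square(\varphi\wedge\psi)\bigr)$, and its derivation requires applying AN to the axioms (2) and (3) themselves (so that $\square[(2)]$ and $\square[(3)]$ can serve as boxed premises); your unboxed conjunction law alone will not pass under the box. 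With that supplement your induction on $\mathit{S1SP}_\equiv$-derivations closes, and your argument is in fact more complete than the one printed in the paper.
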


\begin{proof}
The first inclusion is trivial by the definitions (cf. Definition \ref{350}).\\
\textbf{Claim 1}: $\mathit{SCI}^+\subseteq\mathit{S1SP}_\equiv$.\\
It is enough to show $\mathit{SCI}\subseteq\mathit{S1SP}_\equiv$. Recall that SP is euivalent to the identity axioms (id3)--(id7) (modulo the rest of $\mathit{SCI}$). So we only need to show that (id1) $\varphi\equiv\varphi$ and (id2) $(\varphi\equiv\psi)\rightarrow(\varphi\rightarrow\psi)$ are theorems of $\mathit{S1SP}_\equiv$. (id1) derives considering axiom $\varphi\leftrightarrow\varphi$, rule AN and scheme (1). (id2) derives from (1)+(2). Thus Claim 1 is true.\\
\textbf{Claim 2}: $\square(\varphi\wedge\psi)\rightarrow (\square\varphi\wedge\square\psi)$ is a theorem of $\mathit{S3}_\equiv$.\\
Apply AN to the tautologies $(\varphi\wedge\psi)\rightarrow\varphi$ and $(\varphi\wedge\psi)\rightarrow\psi$ and consider axiom schemes (3) and (2). Using propositional calculus, Claim 3 follows.\\
\textbf{Claim 3}: $\mathit{S1SP}_\equiv\subseteq\mathit{S3}_\equiv$.\\ 
It is enough to show that scheme (3) is stronger than (3'), and that scheme SP is derivable in $\mathit{S3}_\equiv$. Of course, $(\varphi\rightarrow\psi)\rightarrow ((\psi\rightarrow\chi)\rightarrow (\varphi\rightarrow\chi))$ is a propositional tautology and thus an axiom. Applying rule AN, (3), (2) and modus ponens then yields $\square(\varphi\rightarrow\psi)\rightarrow (\square(\psi\rightarrow\chi)\rightarrow\square(\varphi\rightarrow\chi))$. Modulo $\mathit{CPC}$, this is equivalent to (3'). (Note that we argued as in original modal logic.) Thus, (3) is stronger than (3') (modulo the rest). Finally, in order to show that principle SP is derivable, we derive the identity axioms (id3)--(id7) of $\mathit{SCI}$ which are equivalent to SP modulo the rest. Consider the tautology $(\varphi\leftrightarrow\psi)\rightarrow (\neg\varphi\leftrightarrow\neg\psi)$ and apply AN, (3), (2) and MP. We derive $\square(\varphi\leftrightarrow\psi)\rightarrow \square(\neg\varphi\leftrightarrow\neg\psi)$. By scheme (1) and transitivity of implication, we get $(\varphi\equiv\psi)\rightarrow (\neg\varphi\equiv\neg\psi)$, i.e. (id3). Now we consider the tautology $(\varphi\leftrightarrow\psi)\rightarrow ((\varphi'\leftrightarrow\psi')\rightarrow ((\varphi \vee \varphi')\leftrightarrow(\psi \vee \psi')))$. By AN and axioms, $\square(\varphi\leftrightarrow\psi)\rightarrow (\square(\varphi'\leftrightarrow\psi')\rightarrow \square((\varphi \vee \varphi')\leftrightarrow(\psi \vee \psi')))$. In this formula, we may replace formulas of the form $\square(\chi_1\leftrightarrow\chi_2)$ by $\chi_1\equiv\chi_2$, according to (1). This results in $(\varphi\equiv\psi)\rightarrow ((\varphi'\equiv\psi')\rightarrow ((\varphi \vee \varphi')\equiv(\psi \vee \psi')))$ which is equivalent to $((\varphi\equiv\psi)\wedge (\varphi'\equiv\psi'))\rightarrow ((\varphi \vee \varphi')\equiv(\psi \vee \psi'))$, i.e. (id4). Similarly, we derive (id5) and (id6). Towards (id7), we consider the propositional tautology\\ 
$(\varphi\leftrightarrow\psi)\rightarrow ((\varphi'\leftrightarrow\psi')\rightarrow ((\varphi \leftrightarrow \varphi')\leftrightarrow(\psi \leftrightarrow \psi')))$ and derive\\
(*) $\square(\varphi\leftrightarrow\psi)\rightarrow (\square(\varphi'\leftrightarrow\psi')\rightarrow \square((\varphi \leftrightarrow \varphi')\leftrightarrow(\psi \leftrightarrow \psi')))$ in a similar way as before. Using Claim 2 and axiom scheme (3), we get\\
$\square((\varphi \leftrightarrow \varphi')\leftrightarrow(\psi \leftrightarrow \psi'))\rightarrow \square(\square(\varphi \leftrightarrow \varphi')\leftrightarrow \square (\psi \leftrightarrow \psi'))$. Considering (*) and transitivity of implication, we derive\\
$\square(\varphi\leftrightarrow\psi)\rightarrow (\square(\varphi'\leftrightarrow\psi')\rightarrow  \square(\square(\varphi \leftrightarrow \varphi')\leftrightarrow \square (\psi \leftrightarrow \psi')))$. Now, in the same way as before, we apply (1) and corresponding replacements to derive\\
(**) $(\varphi\equiv\psi)\rightarrow ((\varphi'\equiv\psi')\rightarrow  (\square(\varphi \leftrightarrow \varphi')\equiv \square(\psi \leftrightarrow \psi')))$. Note that the proof of Lemma \ref{1042} also works in $\mathit{S3}_\equiv$. By schemes (3') and (1), the connective $\equiv$ is transitive in $\mathit{S3}_\equiv$. Putting these observations together and considering the equations `$(\varphi \equiv \varphi')\equiv \square(\varphi \leftrightarrow \varphi')\equiv\square(\psi \leftrightarrow \psi')\equiv (\psi \equiv \psi')$', we are able to derive\\
$(\square(\varphi \leftrightarrow \varphi')\equiv \square(\psi \leftrightarrow \psi'))\rightarrow ((\varphi \equiv \varphi')\equiv (\psi \equiv \psi'))$. This together with (**) and transitivity of implication yields\\
$(\varphi\equiv\psi)\rightarrow ((\varphi'\equiv\psi')\rightarrow  ((\varphi \equiv \varphi')\equiv (\psi \equiv \psi')))$ which is equivalent to (id7). Thus, Claim 3 is true. Finally, the inclusions $\mathit{S3}_\equiv\subseteq\mathit{S4}_\equiv\subseteq\mathit{S5}_\equiv$ are clear by Definition \ref{1040}.
\end{proof}

We are now able to establish the intended dualities between some of our $\mathit{SCI}$-theories and corresponding modal systems.

\begin{theorem}\label{1060}
The logics $\mathit{S1SP}_\equiv$, $\mathit{S3}_\equiv$, $\mathit{S4}_\equiv$ and $\mathit{S5}_\equiv$ introduced in Definition \ref{1040} are the dual $\mathit{SCI}$-theories of the modal logics $\mathit{S1SP}$, $\mathit{S3}$, $\mathit{S4}$ and $\mathit{S5}$, respectively.
\end{theorem}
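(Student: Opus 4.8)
The plan is to verify, for each of the four pairs $(\mathcal{L},\mathcal{L}_\equiv)\in\{(\mathit{S1SP},\mathit{S1SP}_\equiv),(\mathit{S3},\mathit{S3}_\equiv),(\mathit{S4},\mathit{S4}_\equiv),(\mathit{S5},\mathit{S5}_\equiv)\}$, the hypotheses of Definition \ref{1030} and then to establish one of its equivalent conditions, say (i); Lemma \ref{1035} then upgrades this to full duality. The prerequisites are immediate: each $\mathcal{L}$ contains $\mathit{S1SP}$ by the definitions of Section 4, and by Lemma \ref{1042} together with the inclusions of Theorem \ref{1050}, each $\mathcal{L}_\equiv$ proves all formulas $(\chi\equiv\psi)\equiv\square(\chi\leftrightarrow\psi)$. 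With these in hand, Theorem \ref{1020} guarantees that $\mathit{box}$ and $\mathit{id}$ are mutually inverse up to provable identity, which means up to interderivability in the respective logic (using (id2) and symmetry of $\equiv$ on the $\mathit{SCI}$ side, and the axiom $\square\varphi\to\varphi$ on the modal side).

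The core of the argument is a pair of ``translation transports derivations'' claims, proved by induction on the length of a derivation. First I would show (B): if $\varPhi\vdash_\mathcal{L}\varphi$ then $\mathit{id}(\varPhi)\vdash_{\mathcal{L}_\equiv}\mathit{id}(\varphi)$. Here $\mathit{id}$ maps each modal axiom scheme to the matching $\mathit{SCI}$-axiom scheme of Definition \ref{1040}: CPC-instances to CPC-instances (the propositional skeleton is preserved because $\mathit{id}$ sends a $\square$-subformula to the atom-like formula $(\cdot\equiv\top)$), the axiom $\square\varphi\to\varphi$ to (2), transitivity of strict implication to (3'), and (S3),(S4),(S5) to (3),(4),(5). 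Since $\mathit{id}$ carries an axiom necessitatable by AN to an axiom necessitatable by AN, and commutes with $\rightarrow$, the rules MP and AN transport directly; for $\mathit{S1SP}$ the theorem scheme SP transports because $\mathit{id}$ commutes with substitution and $\mathit{id}$ of a strict-equivalence SP-instance is, modulo axiom (1), exactly an SP-instance in $\mathcal{L}_\equiv$.

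Next I would prove (A): if $\varPhi\vdash_{\mathcal{L}_\equiv}\varphi$ then $\mathit{box}(\varPhi)\vdash_{\mathcal{L}}\mathit{box}(\varphi)$. The treatment of (CPC), MP, and the SP-scheme is analogous and routine, and $\mathit{box}$ of each axiom (1)--(5) is readily checked to be an $\mathcal{L}$-theorem with the help of Lemmas \ref{410}, \ref{425} and \ref{430} (which reconcile $\square\gamma$ with $\square(\gamma\leftrightarrow\top)$). The genuinely delicate point --- and the step I expect to be the main obstacle --- is the rule AN in the box direction. In $\mathcal{L}_\equiv$, AN necessitates an axiom $\varphi_j$, and I must produce $\mathit{box}(\square\varphi_j)\equiv_\mathcal{L}\square\,\mathit{box}(\varphi_j)$ as an $\mathcal{L}$-theorem; but $\mathit{box}(\varphi_j)$ is in general only an $\mathcal{L}$-\emph{theorem}, not an $\mathcal{L}$-axiom, while $\mathit{S1SP}$ and $\mathit{S3}$ lack the full necessitation rule. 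The resolution is that SP is available in all four modal systems ($\mathit{S3}$ being the weakest Lewis system containing it): for each axiom scheme one exhibits a genuine $\mathcal{L}$-axiom $\alpha_j$ with $\vdash_\mathcal{L}\mathit{box}(\varphi_j)\equiv\alpha_j$, necessitates $\alpha_j$ by AN, and then transports the box across this strict equivalence by substituting into $\square x$ via SP, obtaining $\square\,\mathit{box}(\varphi_j)$.

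Finally I would assemble the pieces. Claim (A) is precisely the left-to-right direction of condition (i). For the converse, if $\mathit{box}(\varPhi)\vdash_\mathcal{L}\mathit{box}(\varphi)$, then (B) gives $\mathit{id}(\mathit{box}(\varPhi))\vdash_{\mathcal{L}_\equiv}\mathit{id}(\mathit{box}(\varphi))$, and since $\varphi\equiv_{\mathcal{L}_\equiv}\mathit{id}(\mathit{box}(\varphi))$ by Theorem \ref{1020} the formulas $\mathit{id}(\mathit{box}(\cdot))$ are interderivable with their arguments, whence $\varPhi\vdash_{\mathcal{L}_\equiv}\varphi$. Thus condition (i) of Definition \ref{1030} holds for each pair, and Lemma \ref{1035} yields that $\mathit{S1SP}_\equiv,\mathit{S3}_\equiv,\mathit{S4}_\equiv,\mathit{S5}_\equiv$ are the dual $\mathit{SCI}$-theories of $\mathit{S1SP},\mathit{S3},\mathit{S4},\mathit{S5}$ respectively.
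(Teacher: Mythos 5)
Your proposal is correct, and its core coincides with the paper's own proof: the paper likewise verifies the hypotheses of Definition \ref{1030} via Theorem \ref{1050} and Lemma \ref{1042}, then proves condition (i) by induction on the length of derivations, handling the axiom cases through Lemmas \ref{430} and \ref{1037}(a), and resolving the delicate AN step exactly as you do --- it isolates a ``Fact'' that $\mathit{box}$ of each $\mathit{S3}_\equiv$-axiom is strictly equivalent in $\mathit{S3}$ to a genuine $\mathit{S3}$-axiom $\chi'$, necessitates $\chi'$ by AN, and transports the box across the strict equivalence using SP. You correctly identified this as the main obstacle, and your resolution is the paper's.

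Where you genuinely diverge is in the treatment of the biconditional in condition (i). The paper's proof establishes only the left-to-right implication ($\varPhi\vdash_{\mathit{S3}_\equiv}\varphi \Rightarrow \mathit{box}(\varPhi)\vdash_{\mathit{S3}}\mathit{box}(\varphi)$) and then declares the proof finished; the converse implication, which is needed both for condition (i) as stated and for the argument of Lemma \ref{1035}, is never argued. Your proposal closes this gap: you prove the separate transport claim (B) for the $\mathit{id}$-translation (which is indeed routine, since $\mathit{id}$ maps each modal axiom scheme instance onto an instance of the corresponding scheme (CPC), (2), (3'), (3), (4), (5), and $\mathit{id}(\square\psi)=\square\,\mathit{id}(\psi)$ literally, so AN transports without any SP trick), and then combine (B) with Theorem \ref{1020} and the interderivability of $\varphi$ with $\mathit{id}(\mathit{box}(\varphi))$ to obtain the right-to-left half of (i). So your argument is slightly longer but is actually complete where the paper's is not; the price is that you must also check the (easy but nontrivial) claim that $\mathit{id}$ of a modal SP-instance is recoverable from SP in $\mathcal{L}_\equiv$ modulo axiom (1), which in turn uses the $\mathcal{L}_\equiv$-analogue of Lemma \ref{425} --- a detail worth writing out, but one that goes through with the transitivity axiom (3'), AN and SP.
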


\begin{proof}
We prove the duality between $\mathit{S3}$ and $\mathit{S3}_\equiv$. The remaining dualities follow in the same way. First, let us check that the logics $\mathcal{L}:=\mathit{S3}$ and $\mathcal{L}_\equiv :=\mathit{S3}_\equiv$ satisfy the conditions of Definition \ref{1030}. On the one hand, we know that $\mathit{S3}$ is the weakest Lewis modal system containing principle SP (c.f. \cite{lewjlc1, lewsl}) and thus contains $\mathit{S1SP}$. On the other hand, by Theorem \ref{1050} and Lemma \ref{1042}, we know that $\mathcal{L}_\equiv=\mathit{S3}_\equiv$ contains $\mathit{SCI}$ and theorems $(\chi\equiv\psi)\equiv \square(\chi\leftrightarrow\psi)$. It remains to check one of the equivalent conditions (i) or (ii) of Definition \ref{1030}. We show that (i) holds. So let $\varPhi\cup\{\varphi\}\subseteq Fm_\equiv$ and suppose $\varPhi\vdash_{\mathit{S3}_\equiv}\varphi$. We show $\mathit{box}(\varPhi)\vdash_{\mathit{S3}}\mathit{box}(\varphi)$ by induction on the length $n\ge 1$ of derivations of $\varphi$ from $\varPhi$ in $\mathit{S3}_\equiv$. If $n=1$, then we distinguish the following cases (a)--(d).\\
(a) $\varphi\in\varPhi$. Then trivially $\mathit{box}(\varphi)\in\mathit{box}(\varPhi)$ and thus $\mathit{box}(\varPhi)\vdash_{\mathit{S3}}\mathit{box}(\varphi)$.\\
(b) $\varphi$ has the form of a classical tautology. Since translation $\mathit{box}$ preserves logical connectives, it follows that $\mathit{box}(\varphi)$ is of the same form, i.e., has the form of a classical tautology, too, and as such is an axiom of $\mathit{S3}$.\\ 
(c) $\varphi$ is an instance of scheme (1), say $\varphi = (\chi\equiv\psi)\leftrightarrow ((\chi\leftrightarrow\psi)\equiv\top)$. By definition of $\mathit{box}$:\\ 
$\mathit{box}(\varphi)=\square(\mathit{box}(\chi)\leftrightarrow\mathit{box}(\psi))\leftrightarrow \square((\mathit{box}(\chi)\leftrightarrow\mathit{box}(\psi))\leftrightarrow\top)$.\\
Considering the definition of the identity connective $(\varphi_1\equiv\varphi_2) :=\square (\varphi_1\leftrightarrow\varphi_2)$ in $\mathit{S3}$, this yields $\mathit{box}(\varphi)=(\mathit{box}(\chi)\equiv\mathit{box}(\psi))\leftrightarrow ((\mathit{box}(\chi)\leftrightarrow\mathit{box}(\psi))\equiv\top)$.\\
By Lemma \ref{430}, $\square(\mathit{box}(\chi)\leftrightarrow\mathit{box}(\psi)) \equiv ((\mathit{box}(\chi)\leftrightarrow\mathit{box}(\psi))\equiv\top)$ is a theorem of $\mathit{S3}$. Applying SP, we get
\begin{equation*}
\begin{split}
\mathit{box}(\varphi)&\equiv_{\mathit{S3}}((\mathit{box}(\chi)\equiv\mathit{box}(\psi))\leftrightarrow\square(\mathit{box}(\chi)\leftrightarrow\mathit{box}(\psi)))\\
&=(\mathit{box}(\chi)\equiv\mathit{box}(\psi))\leftrightarrow(\mathit{box}(\chi)\equiv\mathit{box}(\psi)).
\end{split}
\end{equation*}
Of course, any such trivial biconditional is a theorem of $\mathit{S3}$ and so is $\mathit{box}(\varphi)$.\\
(d) $\varphi$ is an instance of scheme (2), say $\varphi = (\square\psi\rightarrow\psi)$. By Lemma \ref{1037}(a), $\mathit{box}(\varphi)\equiv_{\mathit{S3}}\square\mathit{box}(\psi)\rightarrow\mathit{box}(\psi)$. The latter is an axiom of $\mathit{S3}$. \\
(e) $\varphi$ is an instance of scheme (3), say $\varphi = \square(\psi\rightarrow\chi)\rightarrow\square (\square\psi\rightarrow\square\chi)$. As in (d), we apply Lemma \ref{1037}(a) and get\\ 
$\mathit{box}(\varphi)\equiv_{\mathit{S3}}\square(\mathit{box}(\psi)\rightarrow\mathit{box}(\chi))\rightarrow\square (\square\mathit{box}(\psi)\rightarrow\square\mathit{box}(\chi))$. The latter is an axiom of $\mathit{S3}$.\footnote{Note that the same argument is applicable if we consider the axioms (3'), (4), (5). If $\varphi$ is such an axiom, then $\mathit{box}(\varphi)$ is the corresponding axiom of modal system $\mathit{S1SP}$, $\mathit{S4}$, $\mathit{S5}$, respectively.}  \\
Examining the cases (b)--(e) above, we conclude in particular the following\\
\textbf{Fact}: For any axiom $\chi$ of $\mathit{S3}_\equiv$, we have $\mathit{box}(\chi)\equiv_\mathit{S3}\chi'$, where $\chi'$ is an axiom of modal system $\mathit{S3}$.\\
Now, suppose $\varphi$ is derived in $n+1$ steps and the assertion is true for all derivations of length $\le n$. We may assume that $\varphi$ is obtained by an application of the rules MP or AN. In the former case, there are $\psi$ and $\psi\rightarrow\varphi$ derived in $\le n$ steps, and the induction hypothesis yields $\mathit{box}(\varPhi)\vdash_{\mathit{S3}}\mathit{box}(\varphi)$. In the latter case, $\varphi=\square\chi$ for some axiom $\chi$ of $\mathit{S3}_\equiv$ that occurs in the given derivation. By Lemma \ref{1037}(a) and the \textbf{Fact} above, $\mathit{box}(\varphi)\equiv_\mathit{S3}\square\mathit{box}(\chi)$ and $\mathit{box}(\chi)\equiv_\mathit{S3}\chi'$, where $\chi'$ is an axiom of modal system $\mathit{S3}$. Since SP holds in $\mathit{S3}$, we may replace $\mathit{box}(\chi)$ by $\chi'$ in every context. Applying SP in $\mathit{S3}$, we get $\mathit{box}(\varphi)\equiv_\mathit{S3}\square\chi'$. Since $\chi'$ is an axiom of $\mathit{S3}$, formula $\square\chi'$ is a theorem of $\mathit{S3}$ by the rule of Axiom Necessitation. Hence, $\mathit{box}(\varphi)$ is a theorem of $\mathit{S3}$. We have finished the induction and thus the proof of the Theorem.
\end{proof}

We have established dualities between some particular $\mathit{SCI}$-theories and corresponding Lewis-style modal logics by means of the respective deductive systems (cf. Definition \ref{1030}). How can these dualities be described semantically? One easily recognizes that a given $\mathit{S1SP}$-algebra can be transformed into an $\mathit{SCI}$-model defining $f_\equiv(a,b):=f_\square(f_\leftrightarrow(a,b))$, where $f_\leftrightarrow(a,b)$ is defined in the obvious way. This corresponds to the theorem $(\varphi\equiv\psi)\equiv \square(\varphi\leftrightarrow\psi)$ of $\mathit{S1SP}$. The resulting $\mathit{SCI}$-model then will be a model of $\mathit{S1SP}_\equiv$. The other way round, any given $\mathit{SCI}$-model which is a model of $\mathit{S1SP}_\equiv$ can be transformed into an $\mathit{S1SP}$-algebra defining $f_\square(a):=f_\equiv(a,f_\top)$. This corresponds to the theorem $\square\varphi\equiv (\varphi\equiv\top)$ of modal system $\mathit{S1SP}$. We conclude that the $\mathit{SCI}$-theory $\mathit{S1SP}_\equiv$ is sound and complete w.r.t. the class of exactly those $\mathit{SCI}$-models which can be obtained from $\mathit{S1SP}$-algebras by the above presented transformation. So from a semantic point of view, the duality between $\mathit{SCI}$-theory $\mathit{S1SP}_\equiv$ and modal system $\mathit{S1SP}$ is given by those respective classes of models (and the transformations in both directions). Analogously, we can describe the remaining dualities semantically. Detailed proofs derive straightforwardly from the above results.

Our view on intensionality as a measure for the discernibility of propositions (`the more propositions can be distinguished in models of the underlying logic the higher degree of intensionality') is presented here in a rather informal and intuitive way. An interesting task for future work could be a precise formalization of that concept -- in classical as well as in non-classical settings. The dualities established in this paper generalize and extend earlier results (e.g. \cite{blosus, lewjlc1}) or are in analogy with similar results that hold in propositional logics distinct from $\mathit{SCI}$ (cf. \cite{ish}). The question arises which further (hyper-) intensional logics can be represented in a framework based on $\mathit{SCI}$ or based on a logic with different axioms for propositional identity. Can all (hyper-) intensional logics be captured by an appropriate axiomatization of propositional identity? These and similar questions remain to be further investigated.

\end{document}